\documentclass[11pt]{article}
\usepackage[T1]{fontenc}
\usepackage{lmodern}
\usepackage[margin=1in]{geometry}
\usepackage{amsmath,amssymb,amsthm,dsfont}
\usepackage{authblk}
\usepackage{aliascnt}
\usepackage{xcolor}
\usepackage{booktabs}
\usepackage{capt-of}
\usepackage{needspace}
\usepackage{placeins}
\usepackage{caption}
\usepackage{tikz}
\usetikzlibrary{arrows.meta,decorations.pathreplacing}
\usepackage[colorlinks=true,allcolors=blue!60!black]{hyperref}
\usepackage[capitalise,noabbrev]{cleveref}
\hypersetup{pdftitle={Black Hole Radiation Decoding in the Haar Random Oracle Model},pdfauthor={Ezekiel Cochran and Atul Mantri},bookmarksnumbered=true}

\newtheorem{theorem}{Theorem}[section]
\newaliascnt{lemma}{theorem}
\newtheorem{lemma}[lemma]{Lemma}
\aliascntresetthe{lemma}
\newaliascnt{corollary}{theorem}
\newtheorem{corollary}[corollary]{Corollary}
\aliascntresetthe{corollary}
\newtheorem*{informaltheorem}{Main result (informal)}

\newcommand{\reg}[1]{\mathsf{#1}}
\newcommand{\ket}[1]{\lvert #1\rangle}
\newcommand{\bra}[1]{\langle #1\rvert}
\newcommand{\proj}[1]{\lvert #1\rangle\!\langle #1\rvert}
\newcommand{\braket}[2]{\langle #1\vert #2\rangle}
\newcommand{\ketbra}[2]{\lvert #1\rangle\!\langle #2\rvert}
\newcommand{\Tr}{\operatorname{Tr}}
\newcommand{\id}{\mathds{1}}
\newcommand{\Img}{\operatorname{Im}}
\newcommand{\Dom}{\operatorname{Dom}}
\newcommand{\Sym}{\operatorname{Sym}}

\title{Black Hole Radiation Decoding in the Haar Random Oracle Model}
\author{Ezekiel Cochran\thanks{ecochran@vt.edu}\ }
\author{Atul Mantri\thanks{atulmantri@vt.edu}}
\affil{Department of Computer Science, Virginia Tech, USA 24061}
\date{}

\begin{document}
\maketitle

\begin{abstract}
We prove optimal query bounds for recovering a single qubit from black-hole radiation in the Haar random oracle model,
when the remaining black hole contains at most one sixteenth of the system's qubits.
Recovery with any constant Haar-averaged advantage over trivial decoding requires queries proportional to the Hilbert-space dimension of the remaining black hole.
The lower bound is unconditional for decoders chosen independently of the sampled unitary,
allows arbitrary computation between queries, and access to $U, U^\dagger, U^*, U^\mathsf T$ along with the controlled variants.
An existing decoder using only forward and inverse queries attains a matching bound.
The proof uses a path recording oracle to compare real and maximally mixed states.
As applications, we obtain an efficiently preparable, statistically far, computationally indistinguishable (EFI) pair
and quantum commitments relative to a public Haar oracle,
as well as prove a tight linear rank lower bound for Uhlmann transformation on a fixed-target family.
\end{abstract}

\clearpage
\setcounter{tocdepth}{2}
\tableofcontents
\setcounter{tocdepth}{2}
\clearpage

\section{Introduction}
\label{sec:introduction}

Hayden and Preskill~\cite{HP07} studied an old black hole after roughly half its initial entropy has been emitted. Under rapidly mixing unitary dynamics, they showed that information carried by a quantum system sent into the black hole becomes information-theoretically recoverable after a short interval of further radiation, provided the observer has access to both the earlier and subsequent radiation.

\cite{AMPS13} described a procedure acting on the old radiation
which recovers entanglement with the next single qubit of radiation:
If an observer could perform this computation quickly enough, she could then jump into the black hole
and discover the partner mode of that qubit, ostensibly violating the monogamy of entanglement.

There have been many interesting ideas addressing this apparent paradox,
including foundational ideas in many categories of modern physics.
Forefront among these is the computational complexity argument of \cite{HH13},
who showed that while the firewall-creating computation is information-theoretically possible with knowledge of the black hole dynamics,
computational complexity evidence
suggests that it cannot be carried out before the black hole in question evaporates,\footnote{
An efficient decoder for arbitrary correctable dynamics would imply $\textsf{SZK}\subseteq\textsf{BQP}$; assuming otherwise gives worst-case hardness, rather than a statement about every physical black hole.
}
providing a possible computational obstruction to the proposed experiment..
Further, Aaronson~\cite[Section~6.5.1, Theorem~6.5.3]{Aar16} derived hardness from injective quantum-secure one-way functions,
and Brakerski~\cite{Bra23} showed an existential equivalence between hard-to-decode black hole states and EFI.

We study this question in the Haar-random oracle model (HROM) introduced by Chen and Movassagh~\cite{CM24}, which allows forward and inverse queries.
A reference qubit $\reg B$ is initially maximally entangled with one input qubit, while the other $n-1$ input qubits are in $\ket0$.
A Haar-random unitary $U$ acts on these $n$ qubits, and its output is split into radiation $\reg R$ and a remaining black hole $\reg H$.
The observer receives $\reg R$ and must produce a qubit $\reg A$ entangled with $\reg B$; the observer cannot act on $\reg B$ or $\reg H$.
The observer can query the same unitary used to prepare the radiation, as shown in \cref{fig:decoding-model};
we allow arbitrary computation between queries and study how many queries are needed.

\begin{figure}[htbp]
\centering
\begin{tikzpicture}[x=1cm,y=1cm,font=\small,gate/.style={draw,fill=white,line width=.6pt,inner sep=3pt},wire/.style={line width=.6pt}]
  \draw[wire] (0,2.05) node[left] {$\ket0$} -- (9.3,2.05) node[right] {$\reg B$};
  \draw[wire] (0,1.15) node[left] {$\ket0$} -- (2.3,1.15);
  \draw[wire] (0,.2) node[left] {$\ket{0^{n-1}}$} -- (2.3,.2);
  \node[gate,minimum width=.42cm,minimum height=.42cm] at (.7,2.05) {$\mathrm H$};
  \draw[wire] (1.45,2.05) -- (1.45,1.15);
  \fill (1.45,2.05) circle (1.6pt);
  \draw[wire,fill=white] (1.45,1.15) circle (.11);
  \draw[wire] (1.45,1.04) -- (1.45,1.26);
  \draw[wire] (1.34,1.15) -- (1.56,1.15);
  \node[gate,minimum width=.8cm,minimum height=1.65cm] at (2.7,.675) {$U$};
  \draw[wire] (3.1,1.15) -- (5.3,1.15) node[midway,above=3pt] {$\reg R$};
  \draw[wire,-{Latex[length=1.6mm]}] (3.1,.2) -- (3.3,.2) -- (3.3,-1.5) -- (9.3,-1.5) node[right] {$\reg H$} -- (9.3,-1.87);
  \draw[wire] (4.1,-.65) node[left] {$\ket0$} -- (5.3,-.65) node[midway,above=3pt,font=\footnotesize] {ancillae};
  \node[gate,minimum width=2.45cm,minimum height=2.2cm,align=center] at (6.525,.25) {$\mathcal D^{U^?}$\\[3pt]\footnotesize $q$ queries};
  \draw[wire] (7.75,1.15) -- (9.3,1.15) node[right] {$\reg A$};
  \draw[wire,-{Latex[length=1.6mm]}] (7.75,-.65) -- (8.5,-.65) -- (8.5,-1.02);
  \draw[decorate,decoration={brace,amplitude=4pt},line width=.6pt] (9.95,2.05) -- (9.95,1.15) node[midway,right=6pt,align=center,font=\footnotesize] {Bell-state\\fidelity};
\end{tikzpicture}
\caption{
The decoding task.
    A Hadamard gate and a CNOT prepare the Bell pair; $U$ acts on its second qubit and $n-1$ qubits initialized to $\ket{0}$,
producing $\reg{RH}$. The decoder receives $\reg R$, but has no access to $\reg B,\reg H$.
It makes $q$ queries on its own registers, which may include $\reg R$:
the preparation query is excluded from $q$.
The downward arrows mark registers that are traced out, namely the decoder's non-output bits and $\reg H$.
}
\label{fig:decoding-model}
\end{figure}

This differs slightly from the \cite{HH13} formulation, where $U$ acts on the all-$\ket{0}$ initial state and $\reg{B}$ is a third output register.
There, when $|\reg R|$ is sufficiently larger than $|\reg{BH}|$,
Page's theorem implies for a typical $U$ that the reduced state on the system $\reg B \reg H$ is nearly maximally mixed and recovery from $\reg R$ is information-theoretically possible \cite{HH13}.
This remains true here: decoupling implies that a recovery map with high fidelity exists ~\cite{DBWR14}.
A decoder with access to both $\reg R$ and $\reg H$, along with inverse access, could simply apply $U^\dagger$ to the original output, recovering the original input qubit with one query.
With access only to $\reg R$ and not $\reg H$, this trivial strategy fails.
We allow queries to $U,U^\dagger,U^*,U^{\mathsf T}$, including their controlled versions, and show that recovery still requires exponentially many queries in the regimes stated below.
The circuit and its initial auxiliary state must be chosen independently of the sampled $U$.

\subsection{Results}
\label{sec:main-results}

We measure recovery by $\mathbb E_{U \sim \mu_n}\bra{\Phi^+} \rho_\reg{BA} \ket{\Phi^+}$, the squared Bell-state fidelity of $\reg{BA}$ averaged over $U$.
Outputting a qubit independent of $\reg R$ gives fidelity $1/4$.
Our lower bounds apply to every constant improvement over this value.

\Needspace{11\baselineskip}
\begin{informaltheorem}
Let $|\reg R|\geq|\reg H|$ and fix $\delta>0$.
Every decoder whose circuit and initial auxiliary state are independent of $U$, and whose Haar-averaged Bell-state fidelity is at least $1/4+\delta$, requires
\[
  \begin{array}{ll}
    \Omega(2^{|\reg H|}) & \text{queries with forward-only access},\\[3pt]
    \Omega\!\left(\min\{2^{|\reg H|},2^{n/16}\}\right) & \text{queries with all four types of access}.
  \end{array}
\]
Both bounds allow controlled queries and arbitrary computation between queries.
The second also allows a query to coherently select among $U,U^\dagger,U^*,U^{\mathsf T}$ and the identity.
\end{informaltheorem}

The two lower bounds scale the same when $|\reg H|\leq n/16$.
Outside this regime, the $2^{n/16}$ cutoff comes from the approximation to the Haar oracle used in the proof.
The formal bounds and their implications are in \cref{thm:all-haar,cor:forward-haar,cor:query-lower-bounds}.
The same lower bounds hold if a fixed decoder succeeds on any constant fraction of Haar oracles, by \cref{cor:typical-decoding}.

We demonstrate the decoding bounds by showing bounds for the easier distinguishing task,
in which an algorithm receives $\reg{BR}$ and access to $U$ and must distinguish the joint state in the real experiment from the maximally mixed state on $\reg{BR}$.
A successful decoder implies a successful distinguisher by acting on $\reg R$ and measuring $\reg{BA}$ onto the $\ket{\Phi^+}$ Bell state.
The test accepts with the decoder's fidelity in the real experiment and with probability $1/4$ in the mixed\footnote{
In similar arguments this is often called the ``ideal'' or ``hybrid'' experiment, especially when it models the desirable behavior of a cryptographic primitive.}
experiment.

We discuss a matching upper bound in \cref{sec:decodability-upper}, where
for $|\reg R|\geq|\reg H|+1$ we specialize the Uhlmann algorithm of Utsumi et al.~\cite[Theorem~13]{UNWT25} with an oracle-independent cutoff to obtain an $O(2^{|\reg H|})$-query decoder using forward and inverse queries, with average fidelity at least $1-2\cdot2^{(|\reg H|-|\reg R|)/2}-\epsilon$ for any fixed $0<\epsilon<1/2$.
For $|\reg H|\leq n/16$ and $|\reg H|\to\infty$, it achieves every fixed target fidelity in $(1/4,1)$.
Hence our lower bound with all four query types is tight in this regime.

We also discuss several applications in \cref{sec:applications}, including a linear rank lower bound for the Uhlmann transformation task
and the security of using the Haar-random oracle to generate EFI pairs and quantum commitments.
An Uhlmann transformation aligns two purifications by acting only on the accessible subsystem.
For inaccessible marginals $\rho,\tau$, the relevant rank is $r=\operatorname{rank}(\sqrt\rho\sqrt\tau)$.
The maximum root overlap is $\|\sqrt\rho\sqrt\tau\|_1$; its square is the maximum squared overlap.
With $|\reg R|=15|\reg H|$, our source and a fixed target give a family requiring $\Omega(r)$ preparation queries,
matching the known $O(r)$ upper bound at fixed positive error.
The lower bound allows squared-fidelity error up to $1/8$ on Haar average and all four query types including coherent selection,
as long as the algorithm is chosen independently of $U$.

Further, the joint state on $\reg{BR}$ and the maximally mixed state form an EFI pair in our public-oracle model:
they are efficiently preparable, statistically far apart, and computationally indistinguishable against algorithms with oracle-independent circuits and advice.
We use the oracle-relative version of the EFI definition~\cite[Definition~3.1]{BCQ23}, averaging the distinguishing gap over the public Haar oracle.
We bound the expected absolute acceptance gap for every such polynomial-query algorithm with all four query types.
The standard EFI construction then also gives a statistically binding bit commitment~\cite[Lemma~4.7]{BCQ23}.
The Haar encoding itself also commits to an arbitrary qubit: the sender sends $\reg R$ and later reveals $\reg H$.
Our comparison proves computational hiding~\cite[Definition~4.3]{GJMZ23}, and decoupling proves statistical swap binding~\cite[Definition~4.2]{GJMZ23}.

\subsubsection{Organization}
\Cref{sec:preliminaries} defines the access model and sets up the experiments,
and \cref{sec:query-bounds} states the query bounds, along with the reduction from decoding to distinguishing.
\Cref{sec:trace-distance-proof-sketch} proves the trace-norm bound between the marginals in the path recording experiments, which is our main technical contribution.
\Cref{sec:applications} presents the decoder, rank lower bound, and cryptographic consequences.
The appendices give the full commutator calculations and application proofs.

\subsection{Related work}
\label{sec:related-work}
\paragraph{Strong random unitaries.}
Schuster et al.~\cite[Theorems~1 and~2]{SMLBH25} define strong unitary designs from Haar in experiments with a bounded number of queries to $U$, $U^\dagger$, $U^*$, and $U^{\mathsf T}$, and strong pseudorandom unitaries (PRUs) by indistinguishability for efficient experiments with these query types. 
At fixed design order and approximation error, their Theorem~1 gives depth $O(\log n)$ with ancillas and $O(\log^2 n)$ without ancillas.
Under their cryptographic assumptions, Theorem~2 gives strong PRUs of depth $O(\log n)$ with ancillas and $\operatorname{poly}(\log n)$ without ancillas. 

Their lower bounds~\cite[Proposition~1]{SMLBH25} concern circuit depth rather than query count.
In Section~5 they observe that a strong design encodes as well as Haar in the Hayden--Preskill experiment because the Yoshida--Kitaev decoder uses $U^*$, and note that this transfer applies only when decoding is efficient from query access to the four variants.

Our bounds show that decoding from the radiation alone is query-hard for Haar under all four query types.
Transfer to a strong design requires its order and access model to cover the complete experiment: a $q$-query decoder and challenge preparation use $q+1$ calls, or $4q+1$ after reducing selected queries to controlled queries.
An output trace-norm approximation error of $\eta$ changes the average Bell-test probability by at most $\eta/2$.
For a strong PRU, indistinguishability transfers computational hardness to efficient decoders within its security budget, without the secret circuit description.

In Section~6 they ask whether strong PRUs enable new cryptographic applications and suggest that ancilla-free constructions of commitments would be stronger evidence that natural dynamics have cryptographic properties. Our direct qubit commitment requires no secret key. The sender applies the public oracle $U$ to the message qubit and $n-1$ qubits initialized to $\ket0$, without additional registers for encoding.
The real EFI state is prepared by applying $U$ to one half of a Bell pair and $n-1$ qubits initialized to $\ket0$; the other half is the reference register $\reg B$. The bit-commitment protocol keeps a purification register for opening; for commitment to $0$, this register is padded so that both opening registers have the same size.

\paragraph{Radiation decoding.}
Kim, Tang, and Preskill~\cite[Definition~6.1]{KTP20} formulate the hypothesis that low-rank radiation is computationally indistinguishable from a maximally mixed state.
Our bounds give a quantitative analogue in the Haar-random oracle model.
The efficient Hayden--Preskill decoder of Yoshida and Kitaev~\cite{YK17} has access to an entangled partner of the initial black hole.
Here the other input qubits start in a known pure state, and no such partner is supplied to the decoder.

Yang and Engelhardt~\cite[Theorem~1]{YE23} study the hardness of learning Haar-random dynamics before a fresh input is supplied.
Their algorithm finishes querying before it receives the input, whereas our decoder receives the radiation before its queries.
Poremba, Ragavan, and Vaikuntanathan~\cite[Theorem~8.4]{PRV24} bound joint recovery of a classical message by two separated observers, each making one forward or inverse query.

\paragraph{Haar oracles and query access.}
Zhandry~\cite{Zha25} argues that unitary-oracle models should allow controlled access to the unitary, its inverse, conjugate, and transpose.
Tang and Wright~\cite[Theorem~1.1]{TW25} remove controlled queries for problems that do not depend on a global phase in the oracle.
Tang, Wright, and Zhandry~\cite[Theorem~1.3]{TWZ25} show that conjugate queries can give an exponential advantage over forward and inverse queries.
Their Theorem~1.7 also gives a commitment secure with forward and inverse access that one conjugate query can break.
Bartusek and Goldin~\cite{BG26} construct reusable unclonable encryption in the Haar-random oracle model with all four query types and prove a unitary reprogramming lemma.
These results motivate proving our decoding and commitment bounds in the stronger access model.

\paragraph{Recording oracles.}
Zhandry~\cite{Zha19} introduced compressed oracles for quantum queries to random functions.
Carolan~\cite{Car25} extends this approach to random permutations, and Carolan and Majenz~\cite{CM26} improve its soundness bound.
For Haar unitaries, Ma and Huang~\cite[Theorems~5 and~8]{MH24} give forward and bidirectional path-recording approximations, and Schuster et al.~\cite[Theorem~10]{SMLBH25} extend them to all four query types.
Grinko and Yoshida~\cite{GY25} give exact compressed oracles for compact-group representations using Clebsch--Gordan transforms, including all four query types.
Foxman et al.~\cite{FLMNW26} give an exact path-recording construction for closed subgroups of the unitary group.
Our proof uses the explicit approximate maps of Ma--Huang and Schuster et al.; the insertion and deletion estimates are proved for those maps.

\paragraph{Uhlmann transformation.}
Bostanci et al.~\cite[Theorem~10.3]{BEMPQY23} relate radiation decoding to Uhlmann transformation.
Utsumi et al.~\cite[Theorem~13 and Corollary~5]{UNWT25} give an $O(r)$ preparation-query upper bound at fixed fidelity error and a cube-root lower bound in the rank of the inaccessible marginals, at sufficiently small error.
They already apply their algorithm to Uhlmann decoders in Theorem~33 and to Haar-based quantum state merging in Theorem~34.
Our upper bound specializes the algorithm with one rank cutoff that works for every $U$.

Lee and Jo~\cite[Theorem~1.1 and Corollary~1.2]{LJ26} prove nearly quadratic sample lower bounds for estimating fidelity to a known rank-$r$ reference.
At sufficiently small fixed error, their lower bound for a maximally mixed reference also holds with purification copies.
Wang~\cite[Theorems~1.7 and~5.18]{Wan26} proves an $\Omega(r/\log^2r)$ query lower bound for fidelity estimation at sufficiently small fixed error.
The transformation-to-estimation reduction in Utsumi et al.~\cite[Proof of Corollary~5]{UNWT25} gives the same lower bound for Uhlmann transformation with forward and inverse preparation queries.
\Cref{app:fidelity-comparison} gives the reduction and checks its hypotheses.
Liu and Wang~\cite[Theorem~3.3 and Corollary~3.4]{LW26} obtain optimal fidelity estimation when one of the two states is pure, without knowing which one.
Our rank lower bound concerns two mixed inaccessible marginals.

Our lower bound is linear, allows squared-fidelity error up to $1/8$, and includes all four query types on a fixed-target Haar family.
The extension of the query-simulation theorem to all four types is left open in~\cite[Footnote~3]{TWZ25}.

\paragraph{Cryptographic consequences.}
Brakerski~\cite[Theorems~4.1 and~5.2]{Bra23} gives reductions between EFI pairs and radiation ensembles satisfying hardness and strong decodability conditions.
Brakerski, Canetti, and Qian~\cite[Lemma~4.7]{BCQ23} construct quantum bit commitments from EFI pairs.
Bostanci et al.~\cite[Theorem~8.10]{BEMPQY23} characterize commitments with strong statistical hiding and strong computational binding through average-case hard Uhlmann instances with almost equal marginals.
Our Uhlmann family has nearly equal inaccessible marginals on $\reg{BH}$ for most Haar oracles, together with the stated Haar-averaged query lower bound for transforming their purifications.
The separate real and mixed challenges on $\reg{BR}$ form a statistically far EFI pair.
This pair gives the computationally hiding, statistically binding bit commitment proved in \cref{cor:bit-commitment}.

\paragraph{Common Haar states and public unitaries.}
Chen, Coladangelo, and Sattath~\cite{CCS24} and Ananth, Gulati, and Lin~\cite{AGL24} study cryptography when parties receive copies of common Haar states.
They obtain pseudorandom states with bounded-copy security and quantum bit commitments in those models.
Goldin and Zhandry~\cite{GZ25} give conditions for lifting common-Haar-state separations to unitary-oracle separations.
Constructions of pseudorandom states and unitaries in quantum and Haar oracle models also appear in~\cite{Kre21,ABGL24,ABGL25,HY24}.
In particular, the forward-and-inverse constructions~\cite{ABGL25,HY24}, combined with the commitment and EFI implications~\cite{GJMZ23,BCQ23}, already give EFI pairs and commitments in that access model.
Here the adversary can query the public unitary used to prepare its challenge, with all four query types.

\paragraph{Entropy and entanglement.}
Bansal et al.~\cite[Theorem~6 and Appendix~M]{BMBKH24} give EFI pairs by tracing out part of a secret-key pseudorandom state ensemble.
Aaronson et al.~\cite{ABF24} construct indistinguishable pure-state ensembles with different entanglement entropies.
Bouland et al.~\cite{BFG24} obtain public-key pseudoentanglement under the Learning with Errors assumption.
Ryzov et al.~\cite{RGLE26} relate operational definitions of pseudoentanglement to EFI pairs.
Our entropy observation concerns mixed $\reg{BR}$ marginals and follows directly from their rank.

\section{Preliminaries}
\label{sec:preliminaries}

\subsection{Notation}

We write $[m]=\{0,\ldots,m-1\}$.
Registers are denoted by sans-serif letters.
For a register $\reg X$, its size $|\reg X|$ is its number of qubits.
We use $\|v\|_2$ for vector norm, $\|X\|$ for operator norm, and $\|X\|_1=\Tr\sqrt{X^\dagger X}$ for trace norm.
Trace distance is $\frac12\|\rho-\sigma\|_1$.
We write $\mathrm F(\rho,\sigma)=\|\sqrt\rho\sqrt\sigma\|_1$ for root fidelity.
For a pure target $\ket\psi$, squared fidelity is $\bra\psi\rho\ket\psi$.
Quantum channels contract trace distance and cannot decrease fidelity~\cite[Corollary~3.40 and Theorem~3.27]{Wat18}.
All logarithms are base two.

\subsection{The two experiments}
\label{sec:experiments}

Let $n\geq1$, $n=|\reg R|+|\reg H|$, and $x_{0,c}=c\,0^{n-1}$ for $c\in\{0,1\}$.
We identify an $n$-qubit output label with a pair $(r,h)\in[2^{|\reg R|}]\times[2^{|\reg H|}]$.
The oracle $U$ is an $n$-qubit unitary sampled from Haar measure.
The real experiment prepares
\begin{equation}
  \ket{\Psi^{\mathrm{real},U}_0}=\frac1{\sqrt2}\sum_{c\in\{0,1\}}\ket c_{\reg B}\bigl(U\ket{x_{0,c}}\bigr)_{\reg{RH}}. \label{eq:initial-real-state-u}
\end{equation}
The resulting challenge on $\reg{BR}$ is
\[
  \rho_{\reg{BR}}^U=\Tr_{\reg{H}}\proj{\Psi^{\mathrm{real},U}_0}.
\]
The distinguishing algorithm receives $\reg{BR}$ and has no access to $\reg H$.
The mixed experiment supplies $\id_{\reg{BR}}/2^{1+|\reg R|}$ instead, with the same oracle $U$.
The algorithm's query count $q$ excludes the one forward query used to prepare the real challenge.
Let $\rho_q^{\mathrm{real},U^?}$ and $\rho_q^{\mathrm{mixed},U^?}$ be its states immediately after the last query, averaged over $U$; for $q=0$, these are the supplied input states, including the auxiliary state.
Every bound also holds after the algorithm's final oracle-independent channel, including its complete computation when $q=0$, by contractivity.

A uniform bit $b$, sampled independently of $U$, selects the real experiment for $b=0$ and the mixed experiment for $b=1$.
A distinguishing algorithm outputs $a$ and has advantage $\Pr[a=b]-1/2$.
A decoder acts only on $\reg R$ and auxiliary registers and outputs a qubit $\reg A$.
Its squared Bell-state fidelity for a fixed oracle and its Haar average are
\[
  \begin{aligned}
    f_{\mathcal D}(U)&=\bra{\Phi^+}(\operatorname{id}_{\reg{B}}\otimes\mathcal D^U)(\rho_{\reg{BR}}^U)\ket{\Phi^+},\\
    f&=\mathbb E_U f_{\mathcal D}(U),\qquad \ket{\Phi^+}=\frac{\ket{00}+\ket{11}}{\sqrt2}.
  \end{aligned}
\]
Thus fidelity in the decoding task is the acceptance probability of the Bell-state test.
Outputting a fixed qubit independently of the radiation gives $f=1/4$.

\subsection{Oracle algorithms}
\label{sec:oracle-algorithms}

Conjugation and transposition are in the computational basis.
The allowed query types are
\[
  U^?\qquad\text{for } ?\in\{1,\dagger,*,\mathsf T\},
\]
with $U^1=U$.
Forward-only access permits $U$; bidirectional access permits $U,U^\dagger$; four-mode access permits every $U^?$.
A controlled query applies one type conditional on a control register; a selected query coherently chooses among the four types and the identity.
\Cref{cor:selected-queries} states its constant-factor cost.

The algorithm has an $n$-qubit query register $\reg Q$ and a workspace $\reg W$.
In the distinguishing experiment, $\reg W$ includes $\reg{BR}$ and the auxiliary registers.
Its initial auxiliary state is supplied in tensor product with the challenge and is independent of $U$ and $b$.
Every operation between queries is independent of the sampled $U$.
The bounds permit nonuniform circuits, arbitrary auxiliary states, and unbounded computation, subject to this independence condition.

For the proof, we purify the auxiliary state with a register independent of $U$ and $b$ and include it in $\reg W$.
Giving the algorithm this register can only help it.
By also storing measurement outcomes and discarded registers in the workspace, we may take all operations between queries to be isometries.
For a fixed sequence of query types, the enlarged algorithm has the form
\[
  A_{q+1}U_qA_q\cdots U_1A_1,\qquad U_i\in\{U,U^\dagger,U^*,U^{\mathsf T}\}.
\]
All $A_i$ are oracle-independent isometries on the algorithm's registers.
Storing measurement outcomes in a control register includes classical choices in the selected-query model; terminated branches select the identity.

\Needspace{11\baselineskip}
\section{Query lower bounds}
\label{sec:query-bounds}

The two experiments are difficult to distinguish with few queries.
The following state bounds quantify this for a fixed sequence of uncontrolled query types.
Their proofs are given in \cref{sec:trace-distance-proof-sketch}.

\begin{theorem}[Four query types]
\label{thm:all-haar}
Let $|\reg R|\geq|\reg H|$ and $q\geq0$.
For every algorithm in the two experiments of \cref{sec:experiments}, with circuit and initial auxiliary state independent of $U$ and a fixed sequence of $q$ uncontrolled queries to $U^?$,
\begin{equation}
  \|\rho_q^{\mathrm{real},U^?}-\rho_q^{\mathrm{mixed},U^?}\|_1\leq\min\left\{2,\ (25+24\sqrt2)\frac q{2^{|\reg H|}}+\frac{36(q+1)^2}{2^{n/8}}\right\}. \label{eq:all-haar-bound}
\end{equation}
\end{theorem}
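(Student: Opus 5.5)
We will replace the Haar oracle in both experiments by the four-type path-recording oracle of Schuster et al.~\cite[Theorem~10]{SMLBH25}, which extends the Ma--Huang construction~\cite{MH24}. The real challenge preparation is itself one forward query. Hence the real experiment is a $(q+1)$-query experiment whose first query acts on the fixed input $\frac1{\sqrt2}\sum_c\ket c_{\reg B}\ket{x_{0,c}}$. The mixed experiment is a $q$-query experiment on $\id_{\reg{BR}}/2^{1+|\reg R|}$ tensored with the auxiliary state. The path-recording approximation theorem bounds the trace distance between the Haar-averaged state after $t$ queries and the path-recording state, with relation registers traced out, by $O(t^2/2^{n/8})$. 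Applying it with $t=q+1$ on both sides and using the triangle inequality gives the $36(q+1)^2/2^{n/8}$ term. It then remains to compare the two path-recording marginals on the algorithm's registers.

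In the path-recording picture, the real state after preparation is
\[
\frac1{\sqrt2}\sum_{c}\ket c_{\reg B}\sum_{(r,h)}\ket{r,h}_{\reg{RH}}\otimes\ket{\{(x_{0,c},(r,h))\}}_{\mathrm{rel}}
\]
(suitably normalized and symmetrized). After tracing out $\reg H$, the record holds one pair whose output half has $h$ unknown to the algorithm. Our intermediate goal is to show that this is close to the state in which the record is empty and $\reg{BR}$ is maximally mixed. The natural intermediate object is a \emph{deleted} version of the real experiment, in which the preparation pair is removed from the record. Then $c$ and $r$ become uniform and uncorrelated with everything, because the only remaining correlation ran through the record. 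The deleted state is then exactly the mixed path-recording state, provided subsequent queries act identically on records that do or do not contain the extra pair.

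We therefore propagate the difference query by query. Let $D_t$ be the difference between the real record state, with the preparation pair present, and the deleted state after $t$ queries. Each path-recording query operator (forward, inverse, conjugate, transpose) either appends a fresh pair or pops or matches an existing one. We compute the commutator of each query map with the operation ``insert the pair $(x_{0,c},(r,h))$, with $h$ traced out'' and bound its trace norm after tracing out $\reg H$. The only terms that fail to commute are the branches in which the query hits the recorded input $x_{0,c}$ or output $(r,h)$. An inverse or transpose query on $\ket{r,h'}$ can pop the pair only if $h'=h$. Since $h$ is uniform over $2^{|\reg H|}$ values and is held in the purification, this branch has amplitude weight $O(2^{-|\reg H|/2})$. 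Because the commutator is coherent, we expect a per-query trace-norm contribution of order $2^{-|\reg H|}$ rather than its square root. Summing over $q$ queries then gives $(25+24\sqrt2)q/2^{|\reg H|}$. The constant collects the insertion and deletion estimates for the four query types, with the $\sqrt2$ coming from the Bell-pair normalization over $c$. The hypothesis $|\reg R|\ge|\reg H|$ ensures that collisions on $r$ or on $x_{0,c}$ are no more likely than collisions on $h$. The $\min\{2,\cdot\}$ is trivial.

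The main obstacle is the per-query commutator estimate, in particular obtaining $1/2^{|\reg H|}$ rather than $1/2^{|\reg H|/2}$. We will first try writing the deletion as an isometry onto the reduced state, using the purification of $h$, and bounding the trace norm by pairing two Cauchy--Schwarz factors. Each factor is a $2^{-|\reg H|/2}$ overlap: one from the algorithm's query register and one from the uniform superposition over $h$. Conjugate and transpose queries require separate handling, since they act on the record with roles swapped, and the symmetrization of records makes the bookkeeping delicate.
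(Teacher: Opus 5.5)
Your outline has the same architecture as the paper's proof. It approximates each complete circuit by path recording, uses an insertion map to carry the mixed recording purification (with $e_0$ held in a separate register) to the real one, and sums per-query commutators by a hybrid. However, one load-bearing piece is missing and the per-query estimate is argued incorrectly.

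First, inserting $e_0$ into a symmetrized record is not an isometry. Distinct $(e_0,L)$ can give the same $\ket{L\cup\{e_0\}}$, and $J^\dagger J=\id+\Sigma$, where $\Sigma$ is the sum of swaps between $\reg{E_0}$ and the record positions. So even if $J$ commuted exactly with every query, the marginal of $J\ket{\Psi_q^{\mathrm{mixed},V^?}}$ on the algorithm's registers would be $\rho_q^{\mathrm{mixed},V^?}+K$, not $\rho_q^{\mathrm{mixed},V^?}$. Your plan to treat the deletion as an isometry therefore cannot work. The paper bounds $\|K\|_1\le q\,2^{-|\reg H|}$ and $\|J\phi\|_2^2\le(1+q\,2^{-|\reg H|})\|\phi\|_2^2$. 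This uses the invariant that the coefficient vectors on $\reg{QWE_LE_R}$ in the mixed purification are independent of $h$, since nothing ever touches $\reg{HE_0}$. It also uses the fact that the deletion weights of all candidate challenge pairs sum to at most the record size $q$. Your two-Cauchy--Schwarz-factor idea belongs here, in this cross term. This term is also the source of the $+1$ in $25$ and of the $\sqrt2$ in $24\sqrt2$ (through $\sqrt{1+q\,2^{-|\reg H|}}\le\sqrt2$), not the Bell normalization.

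Second, your per-query heuristic does not produce the $2^{-|\reg H|}$ rate. A commutator error of vector norm $2^{-|\reg H|/2}$ generically gives a trace-norm error of the same order, because $\|\proj v-\proj w\|_1$ is linear in $\|v-w\|_2$. You cannot discard cross terms query by query, because later queries act on the records. What is needed is a vector bound $\|[V_i,J]\phi\|_2\le 24\cdot2^{-|\reg H|}\|\phi\|_2$ on the $h$-independent family. The small factor comes from the recording normalizations $1/\sqrt{2^n-k}$ and the shifted denominators, combined with counting how many challenge labels collapse onto one final label:
\begin{itemize}
\item when an inverse query deletes $e_0$ by its output, at most $2^{|\reg R|}$ values of $r$ collide, giving squared norm of order $2^{2|\reg R|-2n}=2^{-2|\reg H|}$ via the diagonal-energy bound;
\item when a conjugate query deletes $e_0$ by its input, only two values of $c$ collide, giving $O(2^{-n/2})$;
\item for blocked deletions by output, at most $q$ labels collide, giving $O(\sqrt{q/2^{n+|\reg H|}})$.
\end{itemize}
The conjugate case has no $h$-suppression at all, since the algorithm can query $x_{0,c}$ with amplitude one. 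What converts these terms to $O(2^{-|\reg H|})$ is $n\ge2|\reg H|$ together with $q<2^{|\reg H|}$, not a comparison of collision probabilities. Finally, taking $t=q+1$ on both sides gives $36(q+1)(q+2)/2^{n/8}$. The stated term needs $t=q$ for the mixed circuit, which makes no preparation call.
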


Forward-only access gives a sharper comparison, depending only on $q/2^{|\reg H|}$.
Write $\rho_q^{j,U}$ for the same Haar-averaged output when every query is forward, for $j\in\{\mathrm{real},\mathrm{mixed}\}$.

\begin{corollary}[Forward-only bound]
\label{cor:forward-haar}
Under the same independence assumptions as \cref{thm:all-haar}, for $|\reg R|\geq|\reg H|$ and any $q\geq0$, forward-only queries satisfy
\begin{equation}
  \|\rho_q^{\mathrm{real},U}-\rho_q^{\mathrm{mixed},U}\|_1\leq\min\left\{2,\ (12+3\sqrt2)\frac q{2^{|\reg H|}}\right\}. \label{eq:forward-haar-bound}
\end{equation}
\end{corollary}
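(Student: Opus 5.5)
The plan is to prove the corollary directly with the forward path recording oracle of Ma--Huang~\cite[Theorem~5]{MH24}, not to specialize \cref{thm:all-haar}. Specializing would only reproduce the $(q+1)^2/2^{n/8}$ term. For forward queries, the $t$-fold Haar twirl is reproduced exactly by the path recording isometry $V$ restricted to distinct ("injective") relation states.

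First, I would purify the Haar average. Applying $U^{\otimes(q+1)}$ (one preparation call plus $q$ queries) to an oracle-independent input and averaging gives the same reduced state as the path recording map. The forward path recording oracle appends each query pair $(x,y)$ to a relation register $R$ with a uniformly random output $y$, subject to the distinct-outputs constraint. So in both experiments I would write the purified state after the last query as a sum over relations $\{(x_0,y_0),(x_1,y_1),\dots,(x_q,y_q)\}$. In the real experiment, the preparation pair is $(x_{0,c},(r_0,h_0))$, with $c$ entangled with $\reg B$ and $h_0$ traced out. In the mixed experiment, $\reg{BR}$ is maximally mixed. I would model it as the same recording with a fresh uniform output $(r_0,h_0)$ whose record is not linked to a query input.

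Second, I would reduce to a comparison of marginals. Tracing out $\reg H$ and the relation register, the real state differs from the mixed state only through correlations in which a later query of the algorithm "hits" the recorded preparation pair. Such a hit happens when the algorithm queries $x_{0,c}$ and the recorded output is matched against $r_0$, or when the distinctness constraint with $y_0$ correlates with $r_0$. I would insert the preparation pair and then move it to the last position in the recording, using the commutation of recorded pairs. This is the insertion/deletion estimate the paper cites. The result is an expression for the difference as at most $q$ terms, each the norm of a projection of the query register onto a component correlated with the hidden $h_0$.

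Third, for each of those $q$ terms, the algorithm holds $r_0$, but $h_0$ is uniform over $2^{|\reg H|}$ values and traced out. The amplitude of the bad event is therefore of order $2^{-|\reg H|/2}$ per query. This bound comes from the fact that the overlap, summed coherently, is a rank-one operator against a uniform $h_0$. The distinctness correction contributes at most $O(q/2^n)\le O(q/2^{|\reg H|})$, using $|\reg R|\ge|\reg H|$. The main obstacle is getting a linear rather than quadratic dependence. A hybrid argument gives $\sum_i\sqrt{\cdot}$ bounds, so naively one gets $q\cdot 2^{-|\reg H|/2}$ in amplitude. I would instead bound the trace norm directly through the mixed-state structure: tracing $h_0$ makes the bad component an incoherent mixture, so its contribution to the trace norm is its weight, $O(2^{-|\reg H|})$ per query, rather than its amplitude. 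I would expect the constant $12+3\sqrt2$ to come from combining three such weight terms (hit on the input, hit on the output, and a normalization/distinctness term) with a $\sqrt2$ cross term from the Bell-pair superposition. Finally, the bound of $2$ is trivial, and contractivity extends the bound to after the final channel.
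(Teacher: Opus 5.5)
You are right to argue directly with the forward recording oracle rather than specialize \cref{thm:all-haar}. Your ingredients are also the paper's: inserting the preparation pair, and the hidden label $h$ being uniform and untouched. But two steps fail as written.

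\textbf{The Haar-to-recording step is not exact.} Ma--Huang's forward oracle does not reproduce the Haar average exactly, even on distinct-output relations. Their Theorem~5 has trace-norm error $4t(t-1)/(2^n+1)$ for a $t$-query circuit. Already for two queries on distinct inputs, Haar gives weight $1/(2^n(2^n+1))$ to each repeated output, while the recording gives zero. Since the corollary concerns Haar oracles, you must pay this twice: with $t=q+1$ for the real circuit (which includes the preparation call) and with $t=q$ for the mixed circuit. The sum $8q^2/(2^n+1)$ is at most $8q/2^{|\reg H|}$ only after restricting to $q<2^{|\reg H|}$ (otherwise use the bound $2$) and using $n\geq2|\reg H|$. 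Without this term you have only compared the two recording experiments.

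\textbf{The linearity mechanism is a heuristic.} It also merges two error sources that must be handled differently. Insertion is not an isometry: $J^\dagger J=\id+\Sigma$. So the difference splits as in \cref{eq:comparison-decomposition} into two pieces:
\begin{itemize}
\item a vector comparison of $\Psi_q^{\mathrm{real},V}$ with $J\Psi_q^{\mathrm{mixed},V}$;
\item a single trace-norm term $K_\phi=\Tr_{\reg{HE_0E_LE_R}}[\Sigma\proj\phi]$, evaluated on the final mixed vector $\phi$.
\end{itemize}
In the vector comparison, a forward query cannot aim at $r_0$: its fresh output hits $y_0$ with probability about $2^{-n}$. So \cref{lem:commutator-a} gives $\|[A,J]\phi\|_2\leq\sqrt5\,2^{-n/2}\|\phi\|_2\leq3\cdot2^{-|\reg H|}\|\phi\|_2$, and the ordinary telescoping sum is already linear in $q$. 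No incoherence argument is needed there.

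The event you describe is a recorded pair $(x_{0,\tilde c},(\tilde r,h'))$ whose hidden part matches $h$. This is not a per-query hybrid error; it is what $K_\phi$ captures. The bound $\|K_\phi\|_1\leq q/2^{|\reg H|}$ follows once, on the final state, by counting the at most $q$ recorded pairs against the uniform $h$. This uses the invariant that the mixed coefficient vectors are independent of $h$ (\cref{lem:insertion-error}).

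Your alternative adds per-query weights $O(2^{-|\reg H|})$, on the grounds that tracing $h_0$ makes each bad component incoherent. That would require those components to stay orthogonal, on the hidden registers, to the main branch through all later queries. Otherwise a component of amplitude $\epsilon$ changes the trace norm by order $\epsilon$, and you give no such argument.

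\textbf{The constant.} Converting the vector bound to marginals uses $\|\proj v-\proj w\|_1\leq(\|v\|_2+\|w\|_2)\|v-w\|_2$ with $\|J\Psi_q^{\mathrm{mixed},V}\|_2\leq\sqrt{1+q/2^{|\reg H|}}\leq\sqrt2$. This non-isometry, not the Bell superposition, is the source of $\sqrt2$. The constant is $3(1+\sqrt2)+1+8=12+3\sqrt2$: commutator, insertion term $K_\phi$, and the two Haar approximations.
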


Both statements compare the complete experiments, in which the challenge and later queries use the same sampled oracle.
We now include controlled and selected queries and derive the distinguishing and decoding lower bounds.

\subsection{Controlled and selected queries}
\label{sec:controlled}

\begin{corollary}[Controlled queries]
\label{cor:controlled-queries}
The Haar bounds in \cref{eq:all-haar-bound,eq:forward-haar-bound} also hold for controlled queries, with the same query count.
\end{corollary}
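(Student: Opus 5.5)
The plan is to reduce controlled queries to uncontrolled ones, using the Haar invariance of $\mu_n$ together with the freedom to choose oracle-independent isometries between queries. The standard device is to enlarge the query register by one ``dummy'' dimension. A controlled query $\mathrm{c}\text{-}U^?=\proj{0}\otimes\id+\proj{1}\otimes U^?$ is not directly of the form covered by \cref{thm:all-haar}. The issue is a global phase: under $U\mapsto e^{i\theta}U$, which preserves Haar measure, the controlled query acquires a relative phase between its branches, whereas uncontrolled queries do not. I would therefore follow Tang--Wright~\cite{TW25}. Averaging the Haar measure over the global phase $e^{i\theta}$ leaves both experiments unchanged, because the challenge preparation $\rho^U_{\reg{BR}}$ is phase-invariant. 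This lets me write the controlled-query algorithm's averaged output in a form where each controlled call is replaced by an uncontrolled call.

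Concretely, for a fixed sequence of $q$ controlled queries, I would first average over $\theta$. The phase $e^{i\theta}$ (or $e^{-i\theta}$ for the $\dagger$ and $*$ types) enters only on the branch where the control is $1$. The $\theta$-average therefore projects onto sectors labelled by the net number of ``on'' calls, counted with signs according to type. Within each sector I would simulate the control without a controlled gate. Introduce a fresh $n$-qubit register $\reg Q'$ prepared in a fixed state, and swap it into the query position when the control is $0$. Then apply the uncontrolled $U^?$ to the query position in both branches, and undo the swap.

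On the control-$0$ branch, the unitary then acts on the dummy register instead of leaving it untouched. This is harmless only if the dummy's evolution factors out, i.e.\ if the dummy is in an eigenstate of every $U^?$, which fails for general $U$. So the first approach needs the Tang--Wright argument that, after the phase twirl, the mixed contributions cancel for phase-invariant tasks. I would cite their Theorem~1.1 for the forward and bidirectional cases. For $U^*,U^{\mathsf T}$, I would check that their proof uses only the Haar invariance $U\mapsto e^{i\theta}U$ and linearity, both of which persist since $(e^{i\theta}U)^*=e^{-i\theta}U^*$.

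An alternative I would keep as a fallback is more direct: reprove \cref{thm:all-haar} with controlled path-recording maps. The path-recording oracles of Ma--Huang and Schuster et al.\ act only on the branch where they are applied, so a controlled query becomes a controlled recording isometry. The insertion and deletion estimates in \cref{sec:trace-distance-proof-sketch} are operator-norm bounds per query. These bounds survive when the recording map is replaced by $\proj0\otimes\id+\proj1\otimes V$, since a block-diagonal contraction has norm at most the maximum over its blocks. The real-versus-mixed commutator terms likewise pick up only the control projector. The per-query error budget, and hence the coefficients $(25+24\sqrt2)/2^{|\reg H|}$, $36(q+1)^2/2^{n/8}$ and $(12+3\sqrt2)/2^{|\reg H|}$, would be unchanged.

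I expect the main obstacle to be this step: verifying that the path-recording approximation error to Haar, namely the $(q+1)^2/2^{n/8}$ term, holds for controlled queries with all four types. That error is stated in the literature for uncontrolled queries. I would first try to show that the Ma--Huang/Schuster purification of controlled access equals the controlled purification. If that fails, I would reduce to uncontrolled access through the phase-twirl sector decomposition above.
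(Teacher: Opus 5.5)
\textbf{Overall route.} Your strategy is the paper's. You decontrol each query following Tang--Wright, and you use that the Haar measure and both challenges are invariant under $U\mapsto e^{i\theta}U$: the real purification only acquires a global phase, which cancels in the outer product. Hence the Haar-averaged controlled output equals $\mathbb E_U\sum_k\proj{v_k(U)}$, where $v_k(U)$ collects the control branches with signed active count $k$. If you invoke Tang--Wright as a black box for forward and inverse access, that part matches the paper.

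\textbf{First gap: no counter.} The step that carries the proof is an uncontrolled circuit whose output is exactly $\sum_k\proj{v_k(U)}$ for every fixed $U$ and input. The circuit you describe does not achieve this, because it records no counter. Let $w_k(U)$ be the dummy's final state on sector $k$. Tracing out the dummy leaves cross terms $\ketbra{v_k}{v_{k'}}\braket{w_{k'}}{w_k}$ with $\braket{w_{k'}}{w_k}=\bra d U^{k'-k}\ket d$. This product is phase invariant, so the twirl cannot remove it, and its Haar average is generally nonzero. For one forward query on $\ket{+}\ket\psi$ it is $2^{-(n+1)}\ketbra{0}{1}\otimes\ket\psi\braket{\psi}{d}\bra d$, whereas the original cross term averages to zero. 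The paper records $k$ coherently in a counter taken mod $2q+1$, which kills these terms exactly. Once that is done, the dummy need not be an eigenstate of $U^?$, contrary to your diagnosis. It suffices that $\ket{w_k(U)}$ is a unit vector depending only on $(U,k)$, not on which calls were inactive.

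\textbf{Second gap: four query types.} With all four types, that property fails for a single dummy register, and it does not follow from phase invariance and linearity as you suggest. Inactive calls of types $U$ and $U^*$ on one register produce words such as $U^*U\ket d$. Since $U$ and $U^*$ do not commute, these depend on the pattern of inactive calls, so tracing out the dummy destroys coherence inside a sector. The paper's fix uses a maximally entangled auxiliary pair. Inactive forward and inverse calls go to its first half, and inactive conjugate and transpose calls go to its second half. The identities $(\id\otimes U^{\mathsf T})\sum_x\ket{x,x}=(U\otimes\id)\sum_x\ket{x,x}$ and $U^*=(U^\dagger)^{\mathsf T}$ then give $\ket{w_k(U)}=2^{-n/2}(U^{s-k}\otimes\id)\sum_x\ket{x,x}$, where $s$ is the total signed count of the schedule.

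\textbf{Fallback.} Your fallback via controlled path-recording maps does not close the gap. The commutator estimates do survive, since the control projector acts on the workspace and preserves the family in \cref{eq:mixed-experiment-state-family}. However, the Haar approximation theorems are proved only for uncontrolled queries, and your plan for that step reduces back to the same decontrol argument.
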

\begin{proof}[Proof Sketch]
Tang--Wright~\cite[Theorem~1.1 and Claim~2.6]{TW25} replace each controlled query by one uncontrolled query of the same type.
The replacement circuit is independent of $U$ and its input.
For every fixed $U$ and input, including reference registers, its output is the original output averaged over one common random global phase of $U$.
The real challenge is unchanged under $U\mapsto e^{i\theta}U$, since its preparation acquires only a global phase; the mixed challenge is also unchanged.
Thus the same replacement applies to either challenge, including the oracle-dependent real input.
Haar measure is phase invariant, so both averaged output states and the query count are preserved.
The uncontrolled bounds therefore apply with the same query count.
\Cref{app:controlled-proof} gives the counter construction and the phase-invariance calculation.
\end{proof}

\begin{corollary}[Selected queries]
\label{cor:selected-queries}
For $q$ queries coherently selecting among $\id,U,U^\dagger,U^*,U^{\mathsf T}$, \cref{eq:all-haar-bound} holds with $q$ replaced by $4q$.
This includes adaptive choices of query type.
\end{corollary}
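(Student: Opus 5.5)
We reduce each selected query to four controlled queries, one of each type, and then apply \cref{cor:controlled-queries} together with \cref{thm:all-haar}. A selected query acts on a control register $\reg C$ with basis states $\ket{\id},\ket{1},\ket{\dagger},\ket{*},\ket{\mathsf T}$ and on the query register $\reg Q$ as
\[
  \sum_{s}\proj{s}_{\reg C}\otimes U^{s},\qquad U^{\id}=\id .
\]
Since the projectors $\proj{s}$ are orthogonal, this operator equals the product
\[
  \Bigl(\proj{\mathsf T}\otimes U^{\mathsf T}+(\id-\proj{\mathsf T})\otimes\id\Bigr)\cdots\Bigl(\proj{1}\otimes U+(\id-\proj{1})\otimes\id\Bigr).
\]
Each factor is a controlled query of a single fixed type, controlled on whether $\reg C$ equals that label. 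The controls are checked with the oracle-independent isometry that computes the indicator $[s={?}]$ into a fresh qubit and uncomputes it afterwards. So each selected query becomes exactly four controlled queries, in the fixed type order $1,\dagger,*,\mathsf T$, interleaved with oracle-independent isometries.

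We then handle adaptivity. As noted in \cref{sec:oracle-algorithms}, adaptive or classical choices of query type, as well as terminated branches, are already encoded in the coherent contents of $\reg C$. These contents are produced by the oracle-independent isometries $A_i$. After the replacement, the type sequence $(1,\dagger,*,\mathsf T)^q$ is fixed regardless of the branch taken, and branches that select the identity simply leave all four controls off. The new algorithm therefore has circuit and initial auxiliary state independent of $U$ and $b$, uses a fixed sequence of $4q$ controlled queries, and produces exactly the same state after its last query as the original algorithm does after its $q$-th selected query, in both the real and the mixed experiment. The preparation query is untouched.

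Finally, \cref{cor:controlled-queries} shows that \cref{eq:all-haar-bound} holds for controlled queries with the same query count. Applying it with $4q$ queries gives the claimed bound with $q$ replaced by $4q$. Contractivity carries the bound through any final oracle-independent channel. The only point needing care is whether \cref{cor:controlled-queries} covers controls whose on/off pattern is itself adaptive and entangled with the workspace. It does: a controlled query is defined with an arbitrary control register inside $\reg W$, and the Tang--Wright replacement acts for every input. So the main obstacle is just bookkeeping, namely checking that the controlled-query corollary is stated for arbitrary control states rather than classically fixed ones.
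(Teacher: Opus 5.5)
Your proposal is correct and follows the paper's proof. Both write the selected query as the product of four single-type controlled queries, each controlled on one of the orthogonal label projectors via an indicator computed into an ancilla and then uncomputed, which fixes the type sequence; both then apply \cref{cor:controlled-queries} with $4q$ queries, and your handling of adaptivity, terminated branches, and the untouched preparation query matches the paper's.
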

\begin{proof}
Let $P_0,\ldots,P_4$ be fixed pairwise orthogonal control projectors with sum $\id$.
Set $U_1=U,U_2=U^\dagger,U_3=U^*,U_4=U^{\mathsf T}$.
Then
\[
  \id\otimes P_0+\sum_{j=1}^4U_j\otimes P_j=\prod_{j=1}^4\bigl(\id\otimes(\id-P_j)+U_j\otimes P_j\bigr).
\]
On the range of $P_j$ for $1\leq j\leq4$, only the $j$th factor is nontrivial; on the range of $P_0$, all factors are the identity.
For each fixed projector $P_j$, we can compute its binary control into an auxiliary qubit, make the controlled query, and uncompute the qubit using oracle-independent gates.
Thus four controlled queries implement the selected query.
The claim follows from \cref{cor:controlled-queries}.
Measurement outcomes can be stored in the control, and terminated branches select the identity.
\end{proof}

\Needspace{10\baselineskip}
\subsection{Distinguishing and decoding}

\begin{corollary}[Distinguishing and decoding]
\label{cor:query-lower-bounds}
For $|\reg R|\geq|\reg H|$, a fixed positive Haar-averaged distinguishing advantage requires
\[
  \begin{aligned}
    q&=\Omega(2^{|\reg H|})&&\text{with forward-only access},\\
    q&=\Omega\!\left(\min\{2^{|\reg H|},2^{n/16}\}\right)&&\text{with all four query types}.
  \end{aligned}
\]
Both bounds allow controlled queries; the second also allows selected inverse, conjugate, and transpose queries.
The same bounds hold for a fixed positive decoding fidelity advantage over $1/4$.
\end{corollary}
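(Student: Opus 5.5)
The plan is to turn the trace-norm bounds of \cref{thm:all-haar} and \cref{cor:forward-haar}, extended by \cref{cor:controlled-queries} and \cref{cor:selected-queries}, into advantage bounds via the Helstrom relation. Then solve for $q$, and finally reduce decoding to distinguishing with the Bell test.

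\textbf{Distinguishing.} Take a distinguisher with uniform $b$. Its output $a$ is obtained by an oracle-independent final channel followed by a two-outcome measurement on the Haar-averaged post-query states. The bit $b$ is independent of $U$, so averaging over $U$ first is legitimate, and the success probability is at most
\[
\frac12+\frac14\bigl\|\rho_q^{\mathrm{real},U^?}-\rho_q^{\mathrm{mixed},U^?}\bigr\|_1 .
\]
Contractivity of the final channel justifies this. Thus a Haar-averaged advantage $\delta>0$ forces the trace norm to be at least $4\delta$.

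\emph{Forward-only case.} Here \cref{cor:forward-haar}, with controlled queries allowed by \cref{cor:controlled-queries}, gives $(12+3\sqrt2)q/2^{|\reg H|}\geq 4\delta$. Hence $q\geq 4\delta\,2^{|\reg H|}/(12+3\sqrt2)=\Omega(2^{|\reg H|})$.

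\emph{Four query types with selection.} Apply \cref{thm:all-haar} with $q$ replaced by $q'=4q$. This gives
\[
(25+24\sqrt2)\frac{q'}{2^{|\reg H|}}+\frac{36(q'+1)^2}{2^{n/8}}\geq 4\delta ,
\]
so at least one of the two terms is at least $2\delta$. If the first is, then $q'\geq c\,\delta\,2^{|\reg H|}$. If the second is, then $q'+1\geq\sqrt{\delta/18}\,2^{n/16}$. In either case $q=\Omega(\min\{2^{|\reg H|},2^{n/16}\})$ for fixed $\delta$, and the $+1$ is absorbed once $\delta$ is fixed.

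One caveat concerns adaptive query types. A fixed sequence of query types in \cref{thm:all-haar} is replaced, through \cref{cor:selected-queries}, by selected queries, so adaptivity is covered. A distinguisher that uses fewer than $q$ queries is handled by padding with identity selections.

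\textbf{Decoding.} Given a decoder $\mathcal D$ with average fidelity $f\geq1/4+\delta$, build a distinguisher as follows. Apply $\mathcal D^{U^?}$ to $\reg R$ and the auxiliary registers, keeping $\reg B$ aside. Then measure $\reg{BA}$ with $\{\proj{\Phi^+},\id-\proj{\Phi^+}\}$, and output $a=0$ on acceptance. This uses exactly the decoder's queries, and its circuit is oracle-independent.

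In the real experiment the test accepts with probability $f$. In the mixed experiment $\reg B$ is maximally mixed and in tensor product with $\reg R$. The decoder never touches $\reg B$, so the state on $\reg{BA}$ is $\frac{\id}{2}\otimes\sigma_{\reg A}$ for each $U$, and acceptance probability is exactly $1/4$. The advantage is therefore $\frac12(f-\frac14)\geq\delta/2$, and the distinguishing bounds apply.

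The main subtlety is the mixed-case computation: $\reg B$ must remain untouched by queries. This holds because the decoder acts only on $\reg R$ and the auxiliary registers.
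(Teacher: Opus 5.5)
Your proposal is correct and follows essentially the same route as the paper. You apply the bound $\Pr[a=b]-\tfrac12\le\tfrac14\|\Delta\|_1$ to \cref{thm:all-haar,cor:forward-haar}, extended by \cref{cor:controlled-queries,cor:selected-queries}, and then use the Bell-test reduction giving advantage $(f-1/4)/2$. The only cosmetic difference is in the four-type bound: you split on which term is at least $2\delta$ and absorb the $+1$ asymptotically, whereas the paper uses $q+1\le2q$ for $q\geq1$ and notes separately that the averaged states coincide at $q=0$.
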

\begin{proof}
Let $0\leq M\leq\id$ be the effect for guessing the real experiment and write $\Delta=\rho_q^{\mathrm{real},U^?}-\rho_q^{\mathrm{mixed},U^?}$.
Since $\Tr\Delta=0$, its positive and negative parts each have trace $\|\Delta\|_1/2$.
Consequently,
\[
  \Pr[a=b]-\frac12=\frac12\Tr[M\Delta]\leq\frac14\|\Delta\|_1.
\]
The state bounds give the query lower bounds for any fixed advantage, using $q+1\leq2q$ for $q\geq1$.
At $q=0$, Haar twirling gives equal averaged inputs.
\Cref{app:query-lower-bounds} gives explicit lower-bound constants.

For decoding, run the decoder and measure $\reg{BA}$ with the Bell projector.
In the real experiment this accepts with average probability $f$.
For each fixed $U$, the mixed output is $\id_{\reg{B}}/2\otimes\sigma_{\reg{A}}^U$ for a state $\sigma_{\reg{A}}^U$ of trace one.
Since $\Tr_{\reg{B}}\proj{\Phi^+}=\id_{\reg{A}}/2$, its acceptance probability is
\[
  \Tr\left[\proj{\Phi^+}\left(\frac{\id_{\reg{B}}}{2}\otimes\sigma_{\reg{A}}^U\right)\right]=\frac14\Tr\sigma_{\reg{A}}^U=\frac14.
\]
The resulting distinguishing advantage is $(f-1/4)/2$, with the same queries.
\end{proof}

\begin{corollary}[Decoding on a constant fraction of oracles]
\label{cor:typical-decoding}
Fix $p,\delta>0$ and let $|\reg R|\geq|\reg H|$.
The asymptotic decoding lower bounds in \cref{cor:query-lower-bounds} also hold for any $q$-query decoder, with circuit and initial auxiliary state independent of $U$, satisfying
\[
  \Pr_U[f_{\mathcal D}(U)\geq1/4+\delta]\geq p.
\]
The lower-bound constants may depend on $p,\delta$.
\end{corollary}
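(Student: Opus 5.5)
The plan is to reduce to the Haar-averaged statement of \cref{cor:query-lower-bounds}. A decoder that succeeds on a constant fraction of oracles should have a constant Haar-averaged fidelity advantage over $1/4$. The worry is that on the complementary oracles the fidelity could drop below $1/4$ and cancel the gain. So instead of using $f$ directly, I would control the distinguishing advantage of the Bell test from below.

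First I would record the pointwise structure of the Bell test. For each fixed $U$, the proof of \cref{cor:query-lower-bounds} shows that the test accepts with probability exactly $1/4$ in the mixed experiment, and with probability $f_{\mathcal D}(U)$ in the real one.

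The per-oracle gap $g(U)=f_{\mathcal D}(U)-1/4$ may be negative. To handle this, I would use the symmetry of trace distance, $\|\Delta_U\|_1\geq2|\Tr[M\Delta_U]|$. Here $\Delta_U$ is the difference of the fixed-$U$ output states after the Bell measurement, which are two-outcome distributions. That gives
\[
  \mathbb E_U\|\Delta_U\|_1\geq 2\,\mathbb E_U|g(U)|\geq 2p\delta.
\]

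The main obstacle is that \cref{thm:all-haar} bounds only the trace norm of the \emph{averaged} difference $\|\mathbb E_U\Delta_U\|_1$, not $\mathbb E_U\|\Delta_U\|_1$. An average of differences can cancel, so the two quantities are not comparable in the direction needed. I would try to remove the cancellation by using knowledge of the sign that is independent of $U$. Modify the algorithm so that, alongside its output, it keeps an oracle-independent copy of the workspace purification. Alternatively, and more simply, apply a Markov-type argument to $f_{\mathcal D}$ itself.

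For the Markov-type route, write $f_{\mathcal D}(U)\geq0$. Then
\[
  \mathbb E_U f_{\mathcal D}\geq p(1/4+\delta)+\mathbb E_U\bigl[f_{\mathcal D}\mathbf 1_{\text{bad}}\bigr].
\]
The bad set can contribute as little as $0$, which is below $1/4$, so this alone is insufficient. I would fix this by amplifying the decoder into one that is guaranteed to be at least trivial on bad oracles.

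The decoder $\mathcal D'$ makes an oracle-independent random choice. With probability $\lambda$ it runs $\mathcal D$, and with probability $1-\lambda$ it outputs a fixed qubit. This alone doesn't help either, because mixing with the trivial decoder is linear in $f$ and cannot raise the bad-set contribution above what $\mathcal D$ gives there.

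The route I would actually commit to is therefore the distinguishing formulation with a $U$-independent test that certifies fidelity. Use $k$ independent real challenges. This is allowed, since the distinguisher may prepare extra real challenges itself using forward queries: each costs one query and needs no access to $\reg H$. The distinguisher runs the decoder on each challenge, measures the Bell projector, and accepts if the empirical acceptance rate exceeds $1/4+\delta/2$.

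In the mixed experiment the supplied challenge gives rate $1/4$, and its other copies are self-prepared. So I would instead compare the all-real ensemble to an all-mixed ensemble via a hybrid over the $k$ challenges, each step costing the bound of \cref{thm:all-haar} with $q$ replaced by $k(q+1)$.

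In the real case, Hoeffding's inequality together with $\Pr_U[\text{good}]\geq p$ gives acceptance at least about $p(1-e^{-k\delta^2/8})$. In the all-mixed case, acceptance is at most $e^{-k\delta^2/8}$ for every $U$, since each trial there accepts with probability exactly $1/4$. Choosing $k=O(\log(1/p)/\delta^2)$ yields a constant advantage with $O(kq)$ queries. Applying \cref{cor:query-lower-bounds}, extended to $k$-fold challenges by the hybrid argument, then gives the same asymptotic bounds, with constants depending on $p$ and $\delta$. The step I expect to be hardest is justifying the $k$-challenge hybrid, i.e.\ that each real-to-mixed swap is still covered by \cref{thm:all-haar} when the other challenges are prepared from the same oracle. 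I would handle this by absorbing the other preparations into the algorithm's queries.
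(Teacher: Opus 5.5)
Your proposal is correct, but it takes a different route from the paper.

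\textbf{Your route.} You remove the sign cancellation with a one-sided threshold test on $k$ supplied challenges. In the all-mixed world, every Bell test is an independent Bernoulli$(1/4)$ trial for each fixed $U$. Thresholding at $1/4+\delta/2$ therefore makes the mixed acceptance exponentially small in $k\delta^2$ pointwise. Since acceptance probabilities are nonnegative, each bad oracle can reduce the Haar-averaged gap by at most that same small amount.

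The cost is a $k$-challenge experiment, and your hybrid reduction to \cref{thm:all-haar} is sound. Each step is a single-challenge distinguisher. It self-prepares the remaining real copies with one forward query each on oracle-independent inputs, and the mixed copies with fixed gates. It therefore makes at most $k(q+1)$ queries, exactly as in the paper's hybrid for polynomially many EFI copies.

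\textbf{The paper's route.} The paper avoids the hybrid altogether. It uses $N=\lceil4/(p\delta)^2\rceil$ self-prepared real trials only to estimate the sign of $f_{\mathcal D}(U)-1/4$, not as part of the decision statistic. It then runs $\mathcal D$ once on the single supplied challenge and flips the Bell-test outcome when the estimated sign is negative. This gives a Haar-averaged gap of at least $\mathbb E_U|f_{\mathcal D}(U)-1/4|-1/\sqrt N\geq p\delta/2$ with $(N+1)(q+1)$ queries. This rescues exactly the configuration you rejected, one supplied challenge among self-prepared real copies. It is also the same calibration device the paper reuses for the expected absolute gap in \cref{cor:efi}.

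\textbf{What each buys.} Your version needs only $O((1+\log(1/p))/\delta^2)$ trials by Hoeffding, and its analysis is elementary and one-sided. The price is a factor-$k$ loss from the hybrid and a multi-challenge reduction. Both give lower-bound constants depending only on $p,\delta$.

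When writing it up, supply the distinguisher with $k$ tensor-product copies of the decoder's oracle-independent auxiliary state. You can drop the discarded false starts: the purification-copy idea, the Markov bound, and the mixing argument.
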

\begin{proof}[Proof sketch]
Using fresh challenges and auxiliary states, a distinguisher estimates whether the decoder's Bell-test probability is above or below $1/4$.
It then tests the original challenge and reverses the outcome if the estimated sign is negative.
A constant number of trials, depending only on $p,\delta$, gives a Haar-averaged acceptance gap at least $p\delta/2$.
The total query count is $O(q+1)$, so \cref{cor:query-lower-bounds} applies.
\Cref{app:typical-decoding} proves the sign estimate and checks conditional independence.
\end{proof}

\section{Proof sketch of the query bounds}
In this section we outline the proof of \cref{thm:all-haar,cor:forward-haar},
with some explicit calculations pushed to the appendix.
\label{sec:trace-distance-proof-sketch}
A distinguishing algorithm is supplied a state on $\reg{BR}$ as a challenge,
either prepared as in \cref{fig:decoding-model}, or in the mixed state $\id_\reg{BR}/2^{1 + |\reg R|}$.
The real and mixed challenges have the same Haar average before any query,
but the marginals can become distinguishable as queries are made.
In the real and mixed experiments respectively, write $\rho_q^{\mathrm{real},U^?},\rho_q^{\mathrm{mixed},U^?}$ for the algorithm's averaged states after $q$ queries to $U^?$, where $?\in\{1,\dagger,*,\mathsf T\}$.
\begin{figure}[!htbp]
\centering
\textbf{Real, Haar}\par\smallskip
\begin{tikzpicture}[x=1cm,y=1cm,font=\small,gate/.style={draw,fill=white,line width=.55pt,inner sep=2pt},wire/.style={line width=.55pt},hidden/.style={draw=black,line width=.55pt}]
  \path (-2.2,0) -- (13.3,0);
  \foreach \y in {3.25,2.57} {\draw[wire] (1.35,\y) -- (10.7,\y);}
  \node[gate,minimum width=1.35cm,minimum height=1.15cm,align=center,font=\footnotesize] at (.675,2.91) {auxiliary\\state};
  \node[above=2pt] at (2.1,3.25) {$\reg Q$};
  \node[above=2pt] at (2.1,2.57) {$\reg W\setminus\reg{BR}$};
  \draw[wire] (0,1.89) node[left] {$\ket0$} -- (10.7,1.89);
  \draw[wire] (0,1.21) node[left] {$\ket0$} -- (10.7,1.21);
  \draw[wire] (0,.53) node[left] {$\ket{0^{n-1}}$} -- (2.675,.53);
  \draw[hidden,-{Latex[length=1.6mm]}] (3.325,.53) -- (10.7,.53) -- (10.7,.2);
  \node[gate,minimum width=.4cm,minimum height=.4cm] at (.8,1.89) {$\mathrm H$};
  \draw[wire] (1.6,1.89) -- (1.6,1.21);
  \fill (1.6,1.89) circle (1.5pt);
  \draw[wire,fill=white] (1.6,1.21) circle (.10);
  \draw[wire] (1.6,1.11) -- (1.6,1.31);
  \draw[wire] (1.5,1.21) -- (1.7,1.21);
  \node[gate,minimum width=.65cm,minimum height=1.2cm] at (3,.87) {$U$};
  \node[above=2pt] at (3.9,1.89) {$\reg B$};
  \node[above=2pt] at (3.9,1.21) {$\reg R$};
  \node[above=2pt,text=black] at (3.9,.53) {$\reg H$};
  \node[gate,minimum width=.6cm,minimum height=2.58cm] at (5.25,2.23) {$A_1$};
  \node[gate,minimum width=.6cm,minimum height=.5cm] at (6.3,3.25) {$U_1$};
  \foreach \y in {3.25,2.57,1.89,1.21} {\node[fill=white,inner sep=2pt] at (7.2,\y) {$\cdots$};}
  \node[gate,minimum width=.6cm,minimum height=2.58cm] at (8.3,2.23) {$A_q$};
  \node[gate,minimum width=.6cm,minimum height=.5cm] at (9.4,3.25) {$U_q$};
  \draw[decorate,decoration={brace,amplitude=4pt},line width=.55pt] (11,3.45) -- (11,1.01) node[midway,right=6pt] {$\rho_q^{\mathrm{real},U^?}$};
\end{tikzpicture}
\par\medskip
\textbf{Mixed, Haar}\par\smallskip
\begin{tikzpicture}[x=1cm,y=1cm,font=\small,gate/.style={draw,fill=white,line width=.55pt,inner sep=2pt},wire/.style={line width=.55pt},hidden/.style={draw=black,line width=.55pt}]
  \path (-2.2,0) -- (13.3,0);
  \foreach \y in {3.25,2.57} {\draw[wire] (1.35,\y) -- (10.7,\y);}
  \node[gate,minimum width=1.35cm,minimum height=1.15cm,align=center,font=\footnotesize] at (.675,2.91) {auxiliary\\state};
  \node[above=2pt] at (2.1,3.25) {$\reg Q$};
  \node[above=2pt] at (2.1,2.57) {$\reg W\setminus\reg{BR}$};
  \draw[wire] (0,1.89) node[left] {$\ket0$} -- (10.7,1.89);
  \draw[wire] (0,1.21) node[left] {$\ket{0^{|\reg R|}}$} -- (10.7,1.21);
  \foreach \y in {.53,-.45,-1.13,-1.81} {\draw[hidden,-{Latex[length=1.6mm]}] (0,\y) -- (10.7,\y) -- (10.7,\y-.33);}
  \node[left] at (0,.53) {$\ket{0^{|\reg H|}}$};
  \node[left] at (0,-.45) {$\ket{0^n}$};
  \node[left] at (0,-1.13) {$\ket{0^{|\reg R|}}$};
  \node[left] at (0,-1.81) {$\ket{0^{|\reg H|}}$};
  \node[gate,minimum width=.4cm,minimum height=.4cm] at (.8,1.89) {$\mathrm H$};
  \node[gate,minimum width=1cm,minimum height=.4cm] at (.8,1.21) {$\mathrm H^{\otimes|\reg R|}$};
  \node[gate,minimum width=1cm,minimum height=.4cm] at (.8,.53) {$\mathrm H^{\otimes|\reg H|}$};
  \draw[wire] (2,1.89) -- (2,-.45);
  \fill (2,1.89) circle (1.5pt);
  \draw[wire,fill=white] (2,-.45) circle (.10);
  \draw[wire] (2,-.55) -- (2,-.35);
  \draw[wire] (1.9,-.45) -- (2.1,-.45);
  \node[below=4pt,font=\scriptsize] at (2,-.45) {first bit};
  \draw[wire] (2.8,1.21) -- (2.8,-1.13);
  \fill (2.8,1.21) circle (1.5pt);
  \draw[wire,fill=white] (2.8,-1.13) circle (.10);
  \draw[wire] (2.8,-1.23) -- (2.8,-1.03);
  \draw[wire] (2.7,-1.13) -- (2.9,-1.13);
  \draw[wire] (3.6,.53) -- (3.6,-1.81);
  \fill (3.6,.53) circle (1.5pt);
  \draw[wire,fill=white] (3.6,-1.81) circle (.10);
  \draw[wire] (3.6,-1.91) -- (3.6,-1.71);
  \draw[wire] (3.5,-1.81) -- (3.7,-1.81);
  \node[above=2pt] at (4.15,1.89) {$\reg B$};
  \node[above=2pt] at (4.15,1.21) {$\reg R$};
  \node[above=2pt,text=black] at (4.15,.53) {$\reg H$};
  \node[above=2pt,font=\footnotesize,text=black] at (4.1,-.45) {$x_{0,c}$};
  \node[above=2pt,font=\footnotesize,text=black] at (4.1,-1.13) {$r$};
  \node[above=2pt,font=\footnotesize,text=black] at (4.1,-1.81) {$h$};
  \draw[decorate,decoration={brace,amplitude=4pt,mirror},line width=.55pt,draw=black] (-1.3,-.2) -- (-1.3,-2.06) node[midway,left=6pt,text=black] {$\reg{E_0}$};
  \node[gate,minimum width=.6cm,minimum height=2.58cm] at (5.25,2.23) {$A_1$};
  \node[gate,minimum width=.6cm,minimum height=.5cm] at (6.3,3.25) {$U_1$};
  \foreach \y in {3.25,2.57,1.89,1.21} {\node[fill=white,inner sep=2pt] at (7.2,\y) {$\cdots$};}
  \node[gate,minimum width=.6cm,minimum height=2.58cm] at (8.3,2.23) {$A_q$};
  \node[gate,minimum width=.6cm,minimum height=.5cm] at (9.4,3.25) {$U_q$};
  \draw[decorate,decoration={brace,amplitude=4pt},line width=.55pt] (11,3.45) -- (11,1.01) node[midway,right=6pt] {$\rho_q^{\mathrm{mixed},U^?}$};
\end{tikzpicture}
\caption{The Haar experiments. The real preparation makes one forward call to $U$; the mixed preparation makes none. The algorithm then makes $q$ queries, with $U_i\in\{U,U^\dagger,U^*,U^{\mathsf T}\}$. Each gate $A_i$ acts on $\reg{QW}$; the dots omit intermediate gates and queries. In the mixed panel, the copies from $\reg R,\reg H$ are bitwise CNOTs, while the copy from $\reg B$ writes only the first bit of $x_{0,c}$. Downward arrows trace out the hidden registers. The output labels include the average over the same $U$ used throughout.}
\label{fig:haar-circuits}
\end{figure}
\FloatBarrier

The path-recording constructions of Ma and Huang~\cite{MH24} and Schuster et al.~\cite{SMLBH25} replace Haar queries by explicit maps
$V, \bar V$ so that $U, U^\dagger, U^*, U^\mathsf T$ are replaced by $V, V^\dagger, \bar V, \bar V^\dagger$ respectively.
$\bar V$ is the recording map for conjugate access, rather than the entrywise conjugate of $V$.
These maps store a ``path record'' of queried input/output pairs in inaccessible left and right query registers $\reg{E_L},\reg{E_R}$,
which store relation states $\ket L_\reg{E_L} \ket R_\reg{E_R}$ where each of $L, R$ is a multiset of pairs in $\left[2^n\right]^2$.
The size $|L|$ counts multiplicities, and $\operatorname{num}(L,e)$ is the multiplicity of the pair $e$.
The notation $L\setminus\{e\}$ removes one copy of $e$.
The union $L\cup R$ adds multiplicities, so $|L\cup R|=|L|+|R|$.
Write
\[
  \Img(L)=\{y:(x,y)\in L\text{ for some }x\},\qquad \Dom(L)=\{x:(x,y)\in L\text{ for some }y\}.
\]
The record basis consists of the normalized symmetric relation states of Ma and Huang~\cite[Section~2.1]{MH24},
with $\ket\emptyset$ denoting the empty relation.
For each fixed size, the states indexed by these multisets form an orthonormal basis of the symmetric subspace:
\cref{app:elementary} gives the explicit description.
A forward recording query can create a pair in $\reg{E_L}$ or delete a pair from $\reg{E_R}$; an inverse query reverses these roles.
The maps $A,B$ describe creation with a fresh output or input, and their adjoints describe deletion.
The barred maps exchange the two records, as summarized below.
Our $A,B$ are the maps denoted $V^L,V^R$ in Ma--Huang~\cite[Definition~25]{MH24}; $\bar A,\bar B$ correspond to the barred maps in Schuster et al.~\cite[Definition~25]{SMLBH25}.
For $|L\cup R|<2^n$, the two parts of the bidirectional oracle of Ma--Huang~\cite[Definition~25]{MH24} are
\begin{align}
  A\ket x_\reg Q\ket L_\reg{E_L}\ket R_\reg{E_R}&=\frac1{\sqrt{2^n-|\Img(L\cup R)|}}\sum_{y\notin\Img(L\cup R)}\ket y_\reg Q\ket{L\cup\{(x,y)\}}_\reg{E_L}\ket R_\reg{E_R},\nonumber\\
  B\ket y_\reg Q\ket L_\reg{E_L}\ket R_\reg{E_R}&=\frac1{\sqrt{2^n-|\Dom(L\cup R)|}}\sum_{x\notin\Dom(L\cup R)}\ket x_\reg Q\ket L_\reg{E_L}\ket{R\cup\{(x,y)\}}_\reg{E_R}. \nonumber
\end{align}
The corresponding creation maps for conjugate access are those of Schuster et al.~\cite[Definition~25]{SMLBH25}.
They have the same fresh-coordinate rule, with the destination records exchanged:
\begin{equation*}
  \bar A=\operatorname{SWAP}_{\reg{E_LE_R}}A\operatorname{SWAP}_{\reg{E_LE_R}},\qquad \bar B=\operatorname{SWAP}_{\reg{E_LE_R}}B\operatorname{SWAP}_{\reg{E_LE_R}}.
\end{equation*}
Each map is zero outside its defining domain. The path recording oracle is then
\begin{align}
  V&=A(\id-BB^\dagger)+(\id-AA^\dagger)B^\dagger,\nonumber\\
  \bar V&=\bar A(\id-\bar B\bar B^\dagger)+(\id-\bar A\bar A^\dagger)\bar B^\dagger, \nonumber
\end{align}
so that $V, V^\dagger, \bar V, \bar V^\dagger$ replace $U, U^\dagger, U^*, U^\mathsf T$, respectively.
The projection factors keep the $A$ and $B$ terms separate, so that $V$ and $\bar V$ are both partial isometries.
Each increases $|L \cup R|$ by at most one.

\cite{SMLBH25} show that this replacement results in approximation error of at most $O \left( \frac{q^2}{2^{n/8}} \right)$
in trace distance between the output state using Haar queries and the one using path recording queries.
For forward only queries, this is instead $O \left( \frac{q^2}{2^n} \right)$~\cite{MH24}.
Their approximation theorems apply to a complete circuit starting with empty records, including the one call that prepares the real challenge.
Our technical contribution is the comparison between the two path-recording states:
the chain is shown in \cref{fig:four-experiments}.

\begin{figure}[htbp]
\centering
\begin{tikzpicture}[x=1cm,y=1cm,font=\small,comparison/.style={line width=.6pt,{Latex[length=1.7mm]}-{Latex[length=1.7mm]}}]
  \node at (-3.6,.3) {\textbf{Real, Haar}};
  \node at (3.6,.3) {\textbf{Mixed, Haar}};
  \node at (-3.6,-.25) {$\rho_q^{\mathrm{real},U^?}$};
  \node at (3.6,-.25) {$\rho_q^{\mathrm{mixed},U^?}$};
  \node[font=\footnotesize] at (-3.6,-.75) {$q+1$ total oracle calls};
  \node[font=\footnotesize] at (3.6,-.75) {$q$ total oracle calls};
  \node at (-3.6,-2.1) {\textbf{Real, recording}};
  \node at (3.6,-2.1) {\textbf{Mixed, recording}};
  \node at (-3.6,-2.65) {$\rho_q^{\mathrm{real},V^?}$};
  \node at (3.6,-2.65) {$\rho_q^{\mathrm{mixed},V^?}$};
  \node[font=\footnotesize] at (-3.6,-3.15) {initial pair in $\reg{E_L}$};
  \node[font=\footnotesize] at (3.6,-3.15) {separate pair in $\reg{E_0}$};
  \draw[comparison,dashed] (-2.55,-.25) -- (2.55,-.25) node[midway,above=3pt,font=\footnotesize] {desired comparison};
  \draw[comparison] (-2.55,-2.65) -- (2.55,-2.65) node[midway,above=3pt] {$O(q/2^{|\reg H|})$};
  \draw[comparison] (-3.6,-1) -- (-3.6,-1.8) node[midway,right=5pt,align=left,font=\footnotesize] {Haar\\approximation};
  \draw[comparison] (3.6,-1) -- (3.6,-1.8) node[midway,left=5pt,align=right,font=\footnotesize] {Haar\\approximation};
\end{tikzpicture}
\caption{The four experiments for a fixed sequence of uncontrolled queries. The upper states are averaged over the same oracle used for challenge preparation and queries. The vertical comparisons use the path recording Haar approximations; our calculation proves the bottom comparison. In all cases $\rho$ denotes the reduced state on the algorithm's registers. \Cref{fig:haar-circuits,fig:recording-circuits} show all four circuits.}
\label{fig:four-experiments}
\end{figure}

\subsection{Connecting the path recording experiments}
Replacing the Haar-random oracle in \cref{fig:haar-circuits} with the path recording oracle, we obtain the circuits in \cref{fig:recording-circuits}.
\par\medskip
\noindent\begin{minipage}{\linewidth}
\centering
\textbf{Real, recording}\par\smallskip
\begin{tikzpicture}[x=1cm,y=1cm,font=\small,gate/.style={draw,fill=white,line width=.55pt,inner sep=2pt},wire/.style={line width=.55pt},hidden/.style={draw=black,line width=.55pt},record/.style={draw=black,fill=white,line width=.55pt,minimum width=.15cm,minimum height=.15cm,inner sep=0pt}]
  \path (-2.2,0) -- (13.3,0);
  \foreach \y in {3.25,2.57} {\draw[wire] (1.35,\y) -- (10.7,\y);}
  \node[gate,minimum width=1.35cm,minimum height=1.15cm,align=center,font=\footnotesize] at (.675,2.91) {auxiliary\\state};
  \node[above=2pt] at (2.1,3.25) {$\reg Q$};
  \node[above=2pt] at (2.1,2.57) {$\reg W\setminus\reg{BR}$};
  \draw[wire] (0,1.89) node[left] {$\ket0$} -- (10.7,1.89);
  \draw[wire] (0,1.21) node[left] {$\ket0$} -- (10.7,1.21);
  \draw[wire] (0,.53) node[left] {$\ket{0^{n-1}}$} -- (2.675,.53);
  \draw[hidden,-{Latex[length=1.6mm]}] (3.325,.53) -- (10.7,.53) -- (10.7,.2);
  \draw[hidden,-{Latex[length=1.6mm]}] (0,-.4) node[left,text=black] {$\reg{E_L}:\ket\emptyset$} -- (10.7,-.4) -- (10.7,-.73);
  \draw[hidden,-{Latex[length=1.6mm]}] (0,-1.08) node[left,text=black] {$\reg{E_R}:\ket\emptyset$} -- (10.7,-1.08) -- (10.7,-1.41);
  \node[gate,minimum width=.4cm,minimum height=.4cm] at (.8,1.89) {$\mathrm H$};
  \draw[wire] (1.6,1.89) -- (1.6,1.21);
  \fill (1.6,1.89) circle (1.5pt);
  \draw[wire,fill=white] (1.6,1.21) circle (.10);
  \draw[wire] (1.6,1.11) -- (1.6,1.31);
  \draw[wire] (1.5,1.21) -- (1.7,1.21);
  \draw[hidden] (3,.87) -- (3,-1.08);
  \foreach \y in {-.4,-1.08} {\node[record] at (3,\y) {};}
  \node[gate,minimum width=.65cm,minimum height=1.2cm] at (3,.87) {$V$};
  \node[above=2pt] at (3.9,1.89) {$\reg B$};
  \node[above=2pt] at (3.9,1.21) {$\reg R$};
  \node[above=2pt,text=black] at (3.9,.53) {$\reg H$};
  \node[above=2pt,font=\footnotesize,text=black] at (3.9,-.4) {$\{e_0\}$};
  \draw[densely dashed,draw=black,line width=.4pt] (4.6,3.52) -- (4.6,-1.35) node[below=4pt,text=black] {$\ket{\Psi_0^{\mathrm{real},V^?}}$};
  \node[gate,minimum width=.6cm,minimum height=2.58cm] at (5.25,2.23) {$A_1$};
  \foreach \x in {6.3,9.4} {\draw[hidden] (\x,3.25) -- (\x,-1.08); \foreach \y in {-.4,-1.08} {\node[record] at (\x,\y) {};}}
  \node[gate,minimum width=.6cm,minimum height=.5cm] at (6.3,3.25) {$V_1$};
  \foreach \y in {3.25,2.57,1.89,1.21} {\node[fill=white,inner sep=2pt] at (7.2,\y) {$\cdots$};}
  \node[gate,minimum width=.6cm,minimum height=2.58cm] at (8.3,2.23) {$A_q$};
  \node[gate,minimum width=.6cm,minimum height=.5cm] at (9.4,3.25) {$V_q$};
  \draw[densely dashed,draw=black,line width=.4pt] (10.15,3.52) -- (10.15,-1.35) node[below=4pt,text=black] {$\ket{\Psi_q^{\mathrm{real},V^?}}$};
  \draw[decorate,decoration={brace,amplitude=4pt},line width=.55pt] (11,3.45) -- (11,1.01) node[midway,right=6pt] {$\rho_q^{\mathrm{real},V^?}$};
\end{tikzpicture}
\end{minipage}
\par\medskip
\noindent\begin{minipage}{\linewidth}
\centering
\textbf{Mixed, recording}\par\smallskip
\begin{tikzpicture}[x=1cm,y=1cm,font=\small,gate/.style={draw,fill=white,line width=.55pt,inner sep=2pt},wire/.style={line width=.55pt},hidden/.style={draw=black,line width=.55pt},record/.style={draw=black,fill=white,line width=.55pt,minimum width=.15cm,minimum height=.15cm,inner sep=0pt}]
  \path (-2.2,0) -- (13.3,0);
  \foreach \y in {3.25,2.57} {\draw[wire] (1.35,\y) -- (10.7,\y);}
  \node[gate,minimum width=1.35cm,minimum height=1.15cm,align=center,font=\footnotesize] at (.675,2.91) {auxiliary\\state};
  \node[above=2pt] at (2.1,3.25) {$\reg Q$};
  \node[above=2pt] at (2.1,2.57) {$\reg W\setminus\reg{BR}$};
  \draw[wire] (0,1.89) node[left] {$\ket0$} -- (10.7,1.89);
  \draw[wire] (0,1.21) node[left] {$\ket{0^{|\reg R|}}$} -- (10.7,1.21);
  \foreach \y in {.53,-.45,-1.13,-1.81,-2.63,-3.31} {\draw[hidden,-{Latex[length=1.6mm]}] (0,\y) -- (10.7,\y) -- (10.7,\y-.33);}
  \node[left] at (0,.53) {$\ket{0^{|\reg H|}}$};
  \node[left] at (0,-.45) {$\ket{0^n}$};
  \node[left] at (0,-1.13) {$\ket{0^{|\reg R|}}$};
  \node[left] at (0,-1.81) {$\ket{0^{|\reg H|}}$};
  \node[left] at (0,-2.63) {$\reg{E_L}:\ket\emptyset$};
  \node[left] at (0,-3.31) {$\reg{E_R}:\ket\emptyset$};
  \node[gate,minimum width=.4cm,minimum height=.4cm] at (.8,1.89) {$\mathrm H$};
  \node[gate,minimum width=1cm,minimum height=.4cm] at (.8,1.21) {$\mathrm H^{\otimes|\reg R|}$};
  \node[gate,minimum width=1cm,minimum height=.4cm] at (.8,.53) {$\mathrm H^{\otimes|\reg H|}$};
  \draw[wire] (2,1.89) -- (2,-.45);
  \fill (2,1.89) circle (1.5pt);
  \draw[wire,fill=white] (2,-.45) circle (.10);
  \draw[wire] (2,-.55) -- (2,-.35);
  \draw[wire] (1.9,-.45) -- (2.1,-.45);
  \node[below=4pt,font=\scriptsize] at (2,-.45) {first bit};
  \draw[wire] (2.8,1.21) -- (2.8,-1.13);
  \fill (2.8,1.21) circle (1.5pt);
  \draw[wire,fill=white] (2.8,-1.13) circle (.10);
  \draw[wire] (2.8,-1.23) -- (2.8,-1.03);
  \draw[wire] (2.7,-1.13) -- (2.9,-1.13);
  \draw[wire] (3.6,.53) -- (3.6,-1.81);
  \fill (3.6,.53) circle (1.5pt);
  \draw[wire,fill=white] (3.6,-1.81) circle (.10);
  \draw[wire] (3.6,-1.91) -- (3.6,-1.71);
  \draw[wire] (3.5,-1.81) -- (3.7,-1.81);
  \node[above=2pt] at (4.15,1.89) {$\reg B$};
  \node[above=2pt] at (4.15,1.21) {$\reg R$};
  \node[above=2pt,text=black] at (4.15,.53) {$\reg H$};
  \node[above=2pt,font=\footnotesize,text=black] at (4.1,-.45) {$x_{0,c}$};
  \node[above=2pt,font=\footnotesize,text=black] at (4.1,-1.13) {$r$};
  \node[above=2pt,font=\footnotesize,text=black] at (4.1,-1.81) {$h$};
  \draw[decorate,decoration={brace,amplitude=4pt,mirror},line width=.55pt,draw=black] (-1.3,-.2) -- (-1.3,-2.06) node[midway,left=6pt,text=black] {$\reg{E_0}$};
  \draw[densely dashed,draw=black,line width=.4pt] (4.6,3.52) -- (4.6,-3.56) node[below=4pt,text=black] {$\ket{\Psi_0^{\mathrm{mixed},V^?}}$};
  \node[gate,minimum width=.6cm,minimum height=2.58cm] at (5.25,2.23) {$A_1$};
  \foreach \x in {6.3,9.4} {\draw[hidden] (\x,3.25) -- (\x,-3.31); \foreach \y in {-2.63,-3.31} {\node[record] at (\x,\y) {};}}
  \node[gate,minimum width=.6cm,minimum height=.5cm] at (6.3,3.25) {$V_1$};
  \foreach \y in {3.25,2.57,1.89,1.21} {\node[fill=white,inner sep=2pt] at (7.2,\y) {$\cdots$};}
  \node[gate,minimum width=.6cm,minimum height=2.58cm] at (8.3,2.23) {$A_q$};
  \node[gate,minimum width=.6cm,minimum height=.5cm] at (9.4,3.25) {$V_q$};
  \draw[densely dashed,draw=black,line width=.4pt] (10.15,3.52) -- (10.15,-3.56) node[below=4pt,text=black] {$\ket{\Psi_q^{\mathrm{mixed},V^?}}$};
  \draw[decorate,decoration={brace,amplitude=4pt},line width=.55pt] (11,3.45) -- (11,1.01) node[midway,right=6pt] {$\rho_q^{\mathrm{mixed},V^?}$};
\end{tikzpicture}
\captionof{figure}{The recording experiments. Each $V_i\in\{V,V^\dagger,\bar V,\bar V^\dagger\}$ acts on $\reg Q\reg{E_L}\reg{E_R}$, as indicated by the small squares connected to its gate. The algorithm gates leave all external registers untouched. The real preparation places $e_0$ in $\reg{E_L}$; the mixed preparation leaves $e_0$ in $\reg{E_0}$ and both oracle records empty. Dashed cuts mark the initial and final vectors. Downward arrows trace out hidden registers to give the displayed marginals, which may be subnormalized. The map $J$ compares the purifications in the proof; it is not a circuit gate.}
\label{fig:recording-circuits}
\end{minipage}
\par\medskip
For the challenge pair, we use $y_0=(r,h)$ and $e_0=(x_{0,c},y_0)$.
In general $c\in\{0,1\}$, $r\in[2^{|\reg R|}]$, $h\in[2^{|\reg H|}]$, and $x,y\in[2^n]$.
The initial path-recording purifications are
\begin{align}
  \ket{\Psi^{\mathrm{real},V^?}_0}&=\frac1{\sqrt{2^{n+1}}}\sum_{c,r,h}\ket{c,r}_{\reg{BR}}\ket h_{\reg{H}}\ket{\{e_0\}}_{\reg{E_L}}\ket\emptyset_{\reg{E_R}}, \label{eq:initial-real-purification}\\
  \ket{\Psi^{\mathrm{mixed},V^?}_0}&=\frac1{\sqrt{2^{n+1}}}\sum_{c,r,h}\ket{c,r}_{\reg{BR}}\ket h_{\reg{H}}\ket{e_0}_{\reg{E_0}}\ket\emptyset_{\reg{E_L}}\ket\emptyset_{\reg{E_R}}. \label{eq:initial-mixed-purification}
\end{align}
The first state is obtained by one $V$ query with empty records: $B^\dagger$ annihilates that input, so $V$ agrees with $A$ there.
Tracing out $\reg{HE_0}$ in the second state gives the mixed challenge $\id_{\reg{BR}}/2^{1+|\reg R|}$.
These differ only by the location of the challenge pair $e_0$, and to relate them
we use an insertion map $J$ following the merge-map method of Ananth et al.~\cite[Section~2.1]{ABGL25},
acting only on the external registers $\reg{E_0E_L}$:
\begin{equation}
  J\ket e_{\reg{E_0}}\ket L_{\reg{E_L}}\ket R_{\reg{E_R}}=\sqrt{\operatorname{num}(L,e)+1}\ket{L\cup\{e\}}_{\reg{E_L}}\ket R_{\reg{E_R}}. \label{eq:insertion-map-behavior}
\end{equation}
Note that it is not an isometry:
different choices of $\ket{e_0}_\reg{E_0} \ket L_\reg{E_L}$ can yield the same $\ket{L \cup \{e_0\}}_\reg{E_L}$ after insertion,
so insertion need not preserve inner products.
It aligns the initial states exactly: $J\ket{\Psi^{\mathrm{mixed},V^?}_0} = \ket{\Psi^{\mathrm{real},V^?}_0}$.
For $|L| = \ell$, symmetrizing the inserted challenge with the $\ell$ other positions gives $J = \sqrt{\ell + 1} \Pi_\text{sym}$,
which for the normalized relation states already symmetric on $\reg{E_L}$ gives the coefficient in \cref{eq:insertion-map-behavior}.
Since the symmetrization projection squares to itself, this gives
\begin{equation}
  J^\dagger J =\id+\Sigma,\qquad \Sigma=\sum_{i=1}^{\ell}\operatorname{SWAP}_{0i}, \label{eq:jdj-is-id-plus-sigma}
\end{equation}
where position $0$ is $\reg{E_0}$ and positions $1,\ldots,\ell$ form $\reg{E_L}$.
This is again spelled out in \cref{app:elementary}.
After $q$ queries, the two purifications are
\[
  \ket{\Psi_q^{j,V^?}}=V_qA_q\cdots V_1A_1\ket{\Psi_0^{j,V^?}},\qquad j\in\{\mathrm{real},\mathrm{mixed}\},
\]
with marginals on $\reg{QW}$ denoted by $\rho_q^{j,V^?}$.
Each recording query acts on $\reg Q\reg{E_L}\reg{E_R}$ while the algorithm gates act on $\reg{QW}$.
In the mixed experiment, $\reg{E_0}$ therefore remains isolated throughout the experiment.

Write $\mathcal A$ for the recording-algorithm operator, distinguishing it from primitive creation $A$, and let $\ket\psi$ be the mixed initial vector in \cref{eq:initial-mixed-purification}.
Insertion consumes $\reg{E_0}$, so the real and mixed marginals use different partial traces:
\[
  \begin{aligned}
    \rho_q^{\mathrm{real},V^?}&=\Tr_{\reg{HE_LE_R}}\proj{\mathcal A J\psi},\\
    \rho_q^{\mathrm{mixed},V^?}&=\Tr_{\reg{HE_0E_LE_R}}\proj{\mathcal A\psi},\\
    \widetilde\rho_q&=\Tr_{\reg{HE_LE_R}}\proj{J\mathcal A\psi}.
  \end{aligned}
\]
We bound the difference between the mixed marginal and $\widetilde\rho_q$, and then between $\widetilde\rho_q$ and the real marginal.
Both bounds apply to the full coherent states, grouping contributions with identical final labels.

Both error estimates use the same property of the mixed purification:
its challenge pair stays in $\reg{E_0}$, which neither the algorithm nor the recording maps touch.
Changing $h$ thus changes only the labels in $\reg H,\reg{E_0}$; the coefficient vectors on the algorithm and oracle records are unchanged.
For the mixed state immediately before query $t$, write
\[
  \ket{\phi'_t}=A_tV_{t-1}A_{t-1}\cdots V_1A_1\ket{\Psi_0^{\mathrm{mixed},V^?}}.
\]
The coefficient vectors at every stage can be grouped as
\[
  \ket{\alpha_{c,r}}=\sum_{\substack{x,L,R\\|L\cup R|\leq q}}\ket x_{\reg{Q}}\ket{\alpha_{c,r,x,L,R}}_{\reg W}\ket L_{\reg{E_L}}\ket R_{\reg{E_R}}.
\]
They are independent of $h$, and the mixed-experiment vector has the form
\begin{equation}
  \ket\phi=\frac1{\sqrt{2^{n+1}}}\sum_{c,r,h}\ket{\alpha_{c,r}}_{\reg{QWE_LE_R}}\ket h_{\reg{H}}\ket{e_0}_{\reg{E_0}}. \label{eq:mixed-experiment-state-family}
\end{equation}
The records start empty and each query adds at most one pair, giving the size bound.
The vector has norm at most one because all algorithm gates and query maps are contractions.
Each coefficient includes the sum of all contributions with the same $x,L,R$; no orthogonality of query histories is assumed.

Different values of $(c,r,h)$ give orthogonal labels in $\reg{E_0}$, even when the corresponding workspace vectors overlap.
The normalization in \cref{eq:mixed-experiment-state-family} therefore gives $\|\phi\|_2^2=2^{|\reg H|-(n+1)}\sum_{c,r}\|\alpha_{c,r}\|_2^2$.
Using $n=|\reg R|+|\reg H|$ and the orthogonality of the query and relation bases, we obtain
\begin{equation}
  \sum_{c,r,x,L,R}\|\alpha_{c,r,x,L,R}\|_2^2=\sum_{c,r}\|\alpha_{c,r}\|_2^2=2^{1+|\reg R|}\|\phi\|_2^2. \label{eq:coefficient-energy}
\end{equation}
For fixed $r$, the labels $(r,h)$ are distinct query labels, so also
\begin{equation}
  \sum_{c,r,h,L,R}\|\alpha_{c,r,(r,h),L,R}\|_2^2\leq\sum_{c,r,x,L,R}\|\alpha_{c,r,x,L,R}\|_2^2=2^{1+|\reg R|}\|\phi\|_2^2. \label{eq:diagonal-energy}
\end{equation}
The first identity will bound the marginal error by counting recorded pairs.
The second will bound deletion of the challenge pair, for which the query must equal $(r,h)$.

\subsubsection{The error due to $J$ not being an isometry}
In the first case, the error comes from insertion ``forgetting'' which pair in the record $\ket{L \cup \{e_0\}}_\reg{E_L}$ was the challenge,
causing more than one query history to contribute to a branch's norm.
For a vector $\ket\phi$ arising in the mixed experiment, define
\[
  K_\phi=\Tr_{\reg{HE_0E_LE_R}}[\Sigma\proj\phi].
\]
Because $J$ acts only on the traced out environment, using \cref{eq:jdj-is-id-plus-sigma} gives
\begin{equation}
  \Tr_{\reg{HE_LE_R}}[J\proj\phi J^\dagger]=\Tr_{\reg{HE_0E_LE_R}}[\proj\phi]+K_\phi. \label{eq:marginal-identity}
\end{equation}
Consequently, the desired comparison splits as
\begin{equation}
  \rho_q^{\mathrm{real},V^?}-\rho_q^{\mathrm{mixed},V^?}=\Tr_{\reg{HE_LE_R}}\!\left[\proj{\Psi_q^{\mathrm{real},V^?}}-J\proj{\Psi_q^{\mathrm{mixed},V^?}}J^\dagger\right]+K_{\Psi_q^{\mathrm{mixed},V^?}}. \label{eq:comparison-decomposition}
\end{equation}
\begin{lemma}[Marginal error from insertion]
\label{lem:insertion-error}
For every vector $\ket\phi$ of the form in \cref{eq:mixed-experiment-state-family}, with coefficient vectors independent of $h$ and supported on $|L\cup R|\leq q$,
\begin{equation}
  \|K_\phi\|_1\leq\frac q{2^{|\reg H|}}\|\phi\|_2^2,\qquad \|J\phi\|_2^2\leq\left(1+\frac q{2^{|\reg H|}}\right)\|\phi\|_2^2. \label{eq:insertion-error-bound}
\end{equation}
\end{lemma}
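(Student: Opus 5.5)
The second inequality follows from the first. Since $J^\dagger J=\id+\Sigma$ by \cref{eq:jdj-is-id-plus-sigma}, we have $\|J\phi\|_2^2=\|\phi\|_2^2+\bra\phi\Sigma\ket\phi$. Moreover, $\bra\phi\Sigma\ket\phi=\Tr K_\phi\leq\|K_\phi\|_1$. So the work is to bound $\|K_\phi\|_1$.

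\textbf{Computing $\Sigma\ket\phi$.} Write $\Sigma=\sum_i\operatorname{SWAP}_{0i}$ and expand $\ket{L}_{\reg{E_L}}$ in the unsymmetrized tensor basis of the $\ell=|L|$ positions. The operator $\operatorname{SWAP}_{0i}$ exchanges $e_0=(x_{0,c},(r,h))$ with the $i$th recorded pair $e_i=(x_i,y_i)$. After the swap:
\begin{itemize}
  \item $\reg{E_0}$ holds $e_i$;
  \item $\reg{E_L}$ holds $L$ with $e_i$ replaced by $e_0$;
  \item $\reg H$ still holds $h$.
\end{itemize}
Now take the partial trace over $\reg{HE_0E_LE_R}$ of $\Sigma\proj\phi$. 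A bra term indexed by $(c',r',h')$ and a ket term indexed by $(c,r,h)$ survive only if the environment labels match. The $\reg H$ register forces $h'=h$. Matching $\reg{E_0}$ and $\reg{E_L}$ then forces $e_i=e_0'=(x_{0,c'},(r',h))$ and $L'$ to be $L$ with $e_i$ replaced by $e_0$, consistently with $\reg{E_R}$. So the second coordinate of the pair removed from $L$ must have $\reg H$-part equal to $h$, and this $h$ is simultaneously the traced $\reg H$ label.

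\textbf{Bounding the trace norm.} The resulting operator is a sum over $h$, over $(c,r,c',r')$, and over multisets $L_0$ with $|L_0|=\ell-1$. Each summand is an off-diagonal term
\[
  \ketbra{\alpha_{c,r,\cdot,L_0\cup\{e_0'\},R}}{\alpha_{c',r',\cdot,L_0\cup\{e_0\},R}},
\]
carrying the symmetric-basis combinatorial factors, which I expect to equal $\operatorname{num}$ factors that cancel with the normalization. I would bound its trace norm by the triangle inequality followed by Cauchy--Schwarz, $\|\ketbra ab\|_1\leq\frac12(\|a\|^2+\|b\|^2)$. The key point is that the coefficient vectors do not depend on $h$. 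For a fixed coefficient $\alpha_{c,r,x,L,R}$, the number of pairs $(h,i)$ that contribute is at most the number of pairs in $L$ whose output has $\reg H$-part compatible with an $h$ that simultaneously indexes the record. The crude version is $\sum_h\operatorname{num}(\cdot)\leq|L|\leq q$. Against the prefactor $2^{-(n+1)}$ and the identity \cref{eq:coefficient-energy}, namely $\sum\|\alpha\|^2=2^{1+|\reg R|}\|\phi\|_2^2$, this count gives
\[
  \|K_\phi\|_1\leq 2^{-(n+1)}\,q\,2^{1+|\reg R|}\|\phi\|_2^2=\frac q{2^{|\reg H|}}\|\phi\|_2^2.
\]
The uniform $h$-sum is the source of the $2^{-|\reg H|}$ gain: summing over $h$ costs nothing extra because each recorded pair matches at most one $h$.

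\textbf{Main obstacle.} The hardest part is the exact bookkeeping between the normalized symmetric relation states and the tensor positions. The $\sqrt{\operatorname{num}+1}$ factors and the multiplicities have to combine so that each recorded pair is counted once and not $\ell$ times. I would handle this through the description in \cref{app:elementary}: write $J=\sqrt{\ell+1}\,\Pi_{\mathrm{sym}}$ and compute $\Sigma$ on symmetric vectors. There, $\sum_i\operatorname{SWAP}_{0i}$ acts as the "remove one copy of some pair from $L$ and place it in $\reg{E_0}$" operator, weighted by $\sqrt{\operatorname{num}}$ factors. I would then check that the Cauchy--Schwarz weights sum to $|L|$.
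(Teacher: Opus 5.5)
Your proposal is correct and follows essentially the same route as the paper. The paper makes your ``remove one copy and place it in $\reg{E_0}$'' description precise as $\bra{e'}_{\reg{E_0}}\Sigma\ket e_{\reg{E_0}}=D_e^\dagger D_{e'}$, with $D_e\ket L=\sqrt{\operatorname{num}(L,e)}\ket{L\setminus\{e\}}$, writes $K_\phi$ as $2^{-(n+1)}\sum_{c,r,\widetilde c,\widetilde r,h}\Tr_{\reg{E_LE_R}}\ketbra{D_{\widetilde e_0}\alpha_{c,r}}{D_{e_0}\alpha_{\widetilde c,\widetilde r}}$, and then uses $\|\Tr_E\ketbra vw\|_1\leq\|v\|_2\|w\|_2$, the AM--GM step, and the $h$-independent count $\sum_{\widetilde c,\widetilde r,h}\|D_{\widetilde e_0}\alpha_{c,r}\|_2^2\leq q\|\alpha_{c,r}\|_2^2$. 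One correction: the multiplicity factors $\sqrt{\operatorname{num}(L,\widetilde e_0)}\sqrt{\operatorname{num}(L',e_0)}$ do not cancel against the normalization; they are kept and absorbed exactly by that deletion-weight count, as your closing remark about the weights summing to $|L|$ anticipates.
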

The proof is in \cref{app:insertion-error-bound-proof}. The bound uses the stronger invariant in \cref{eq:mixed-experiment-state-family}: all mixed-experiment operations leave $\reg{HE_0}$ untouched, so the coefficient vectors are independent of $h$. The proof bounds the crossed coherent terms by counting at most $q$ recorded pairs. Inaccessibility of $\reg H$ alone would not imply the required independence.

This only works because the algorithm has no access to $\ket h_\reg H$,
meaning that its choice of which query paths to put weight on must remain uncorrelated with $h$.

\subsubsection{The error due to $J$ not commuting with the algorithm}
In the second case, our goal is to show that $J$ ``almost commutes'' with the algorithm for this particular initial state,
in the sense that the difference between the algorithm's reduced states under the two orders is small in trace norm.
$J$ commutes exactly with any non-query algorithm operations, as those cannot touch the external registers, so the work is to bound its commutator with the algorithm's queries.
We use a hybrid argument, bounding the commutator of $V^?$ and $J$ for just one query (on relevant intermediate states in the mixed experiment),
and then use the triangle inequality to conclude a bound after all $q$ queries.

\begin{lemma}[Query commutator bound]
\label{lem:query-commutator}
Let $|\reg R|\geq|\reg H|$ and $1\leq q<2^{|\reg H|}$.
For every vector $\ket\phi$ of the form in \cref{eq:mixed-experiment-state-family}, with coefficient vectors independent of $h$ and supported on $|L\cup R|\leq q$,
\begin{equation}
  \|[V_i,J]\phi\|_2\leq\frac{24}{2^{|\reg H|}}\|\phi\|_2,\qquad V_i\in\{V,V^\dagger,\bar V,\bar V^\dagger\}. \label{eq:query-commutator}
\end{equation}
\end{lemma}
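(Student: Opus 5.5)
We treat $V$ in detail; the other three maps follow by symmetry. For $V^\dagger$, exchange the roles of $A$ and $B$. For $\bar V,\bar V^\dagger$, conjugate by $\operatorname{SWAP}_{\reg{E_LE_R}}$, so that insertion into $\reg{E_L}$ becomes insertion into the record that $\bar A$ or $\bar B$ treats as the "$\reg{E_R}$" side.

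Write $V=A(\id-BB^\dagger)+(\id-AA^\dagger)B^\dagger$ and expand the commutator. This gives
\[
[V,J]=[A,J](\id-BB^\dagger)-A[BB^\dagger,J]-[AA^\dagger,J]B^\dagger+(\id-AA^\dagger)[B^\dagger,J].
\]
Each factor $A,B,\id-AA^\dagger,\id-BB^\dagger$ is a contraction. It therefore suffices to bound $[A,J]$, $[B,J]$ and $[B^\dagger,J]$ (together with their adjoints, as needed) on the relevant vectors. Some of these operators act on vectors that are no longer exactly of the form \cref{eq:mixed-experiment-state-family}, for example $B^\dagger\phi$. We handle this by noting that every primitive map also acts trivially on $\reg{HE_0}$. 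Images of $\phi$ under primitive maps are therefore again of the form \cref{eq:mixed-experiment-state-family}, with $h$-independent coefficients and record size at most $q+1$. A fixed number of such terms then gives the constant $24$.

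For the creation map $A$, compute directly on a basis state $\ket x\ket e_{\reg{E_0}}\ket L\ket R$. The two orders $JA$ and $AJ$ produce the same superposition $\sum_y\ket y\ket{L\cup\{e,(x,y)\}}\ket R$ with two kinds of discrepancy.

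The first discrepancy is the normalization: $2^n-|\Img(L\cup R)|$ versus $2^n-|\Img(L\cup R\cup\{e\})|$. These differ by at most one, so the relative amplitude error is $O(1/2^n)$.

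The second discrepancy is the excluded set. $AJ$ omits $y=y_0=(r,h)$, while $JA$ does not. The term with $y=y_0$ has amplitude $2^{-n/2}$ per $x$, and it is additionally weighted by the $h$-average in $\phi$. Summing over $h$ with $h$-independent coefficients, and using \cref{eq:coefficient-energy}, we expect this contribution to be $O(2^{-|\reg H|/2})$ after normalization. Since we need $1/2^{|\reg H|}$, we must be more careful here.

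The point is that the bad term $\ket{y_0}$ is correlated with $h$ in $\reg{E_0}$. Its squared norm is $\sum_{c,r,h}2^{-(n+1)}\cdot 2^{-n}\|\alpha_{c,r}\|^2$, which is $O(2^{-n})\|\phi\|^2$. This gives an $O(2^{-n/2})\le O(2^{-|\reg H|})$ bound because $|\reg R|\ge|\reg H|$, so $n\geq 2|\reg H|$.

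The $\sqrt{\operatorname{num}+1}$ factors of $J$ also change when the newly created pair coincides with $e_0$. That event again forces $y=y_0$ and is absorbed into the same estimate.

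For the deletion map $B^\dagger$ acting on $\reg{E_R}$, $J$ inserts only into $\reg{E_L}$. The two orders then differ in two ways.

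The first is the normalization $2^n-|\Dom(L\cup R)|$ versus $2^n-|\Dom(L\cup R\cup\{e_0\})|$. This is again an $O(1/2^n)$ relative change, but it multiplies the full deletion amplitude. We control that amplitude using \cref{eq:diagonal-energy}-type counting, combined with $|L\cup R|\le q<2^{|\reg H|}$.

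The second is the domain condition that requires a fresh input. If $x_{0,c}\in\Dom(R)$, some branches are annihilated in one order but not in the other. We bound this using the fact that there are only two values of $x_{0,c}$. The $h$-independent coefficients then spread the weight over $2^{|\reg H|}$ labels, which yields the factor $1/2^{|\reg H|}$.

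\textbf{Main obstacle.} The genuinely hard term is $[BB^\dagger,J]$ together with the interaction between deletion and the challenge pair. When the query register equals $(r,h)$, the query could delete a copy of $e_0$ from the record if $e_0$ sits on the relevant side. For $V^\dagger$ acting on $\reg{E_L}$, this is exactly the deletion of $e_0$ that $J$ makes possible.

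This is where \cref{eq:diagonal-energy} enters. The weight of $\phi$ on query labels $x=(r,h)$ matching the $\reg{E_0}$ label satisfies
\[
\sum_{c,r,h}2^{-(n+1)}\|\alpha_{c,r,(r,h)}\|^2\le 2^{-|\reg H|}\|\phi\|^2,
\]
because for fixed $(c,r)$ the labels $(r,h)$ are orthogonal. Combined with the $\sqrt{\operatorname{num}+1}\le\sqrt{q+1}$ factors and the deletion normalization $1/\sqrt{2^n-q}$, this should give $O(2^{-|\reg H|/2}\cdot\sqrt{q/2^n})$. That quantity is $O(2^{-|\reg H|})$ when $q<2^{|\reg H|}\le 2^{n/2}$.

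I expect most of the work, and the constant $24$, to come from tracking these deletion cases carefully across the four terms and four query types.
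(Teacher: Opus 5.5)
\textbf{There is a genuine gap in the term you identify as the crux.} Your skeleton does match the paper's: you Leibniz-expand $[V,J]$ into the primitive commutators $[A,J],[B,J],[A^\dagger,J],[B^\dagger,J]$ and bound each on the $h$-independent family. Keeping $\id-BB^\dagger$ and $\id-AA^\dagger$ as single contractions even gives a slightly better tally, $2\epsilon_A+\epsilon_B+\epsilon_{A^\dagger}+2\epsilon_{B^\dagger}$.

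The problem is the deletion of the inserted pair $e_0$ by $A^\dagger$ when the query is $y_0=(r,h)\notin\Img(L\cup R)$. After $e_0$ is removed, the surviving labels are $(x_{0,c},h,L,R)$, so nothing records $r$ any more. Up to $2^{|\reg R|}$ contributions $\alpha_{c,r,(r,h),L,R}$ therefore land on the same basis label and add coherently, and you computed the norm as if they were orthogonal. Writing $\ket b$ for this term, the correct estimate combines \cref{eq:matching-sums} with $m=2^{|\reg R|}$ and \cref{eq:diagonal-energy}:
\[
\|b\|_2^2\le\frac{2^{|\reg R|}\cdot2^{1+|\reg R|}}{2^{n+1}(2^n-q)}\|\phi\|_2^2=\frac{2^n}{2^{2|\reg H|}(2^n-q)}\|\phi\|_2^2 .
\]
So the factor $2^{-|\reg H|}$ comes from $2^{|\reg R|-n}$ and does not depend on $q$.

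Your intermediate bound $O(2^{-|\reg H|/2}\sqrt{q/2^n})$ is actually false. Take empty records, which is allowed for $q=1$. Let $A_1$ swap $\reg R$ into the first $|\reg R|$ qubits of $\reg Q$ and apply Hadamards to the remaining $|\reg H|$ query qubits. Then $\|\phi\|_2=1$ and $JA^\dagger\phi=0$, while
\[
\|[A^\dagger,J]\phi\|_2=\|A^\dagger J\phi\|_2=2^{-|\reg H|},
\]
which exceeds your estimate by a factor of order $2^{|\reg R|/2}$.

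Your final $O(2^{-|\reg H|})$ agrees with the truth only because of a spurious $\sqrt q$. On this branch $e_0\notin L$, so the insertion weight is one and no $\sqrt{\operatorname{num}+1}$ appears. A $\sqrt q$ does arise legitimately in a term you do not treat: the blocked deletion where inserting $e_0$ creates a second pair with output $y_0$. There, up to $q$ pairs of the final relation could be $e_0$, which gives $\sqrt{q/(2^{|\reg H|}(2^n-q+1))}$.

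\textbf{Other steps also need repair.}

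In the blocked deletion for $B^\dagger$, the event $x=x_{0,c}$ does not involve $h$, so there is no spreading over $2^{|\reg H|}$ labels. The term is small because of the deletion amplitude $1/\sqrt{2^n-d+1}$, with at most two colliding values of $c$. It is $O(2^{-|\reg H|})$ only because $n\ge2|\reg H|$.

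The barred maps do not follow by symmetry. Conjugating by $\operatorname{SWAP}_{\reg{E_LE_R}}$ moves the insertion of $J$ into $\reg{E_R}$, so $[\bar V,J]$ is not a conjugate of $[V,J]$. Four separate calculations are needed. For example, in $\bar B^\dagger$ the deletion of $e_0$ is keyed on the $h$-independent input $x_{0,c}$, so \cref{eq:diagonal-energy} is unavailable; one uses $\sqrt{2/(2^n-q)}$ together with $n\ge2|\reg H|$.

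Every operator to the right of a commutator is a deletion, or a deletion followed by a creation. The records therefore stay of size at most $q$, not $q+1$. This matters because the explicit per-term constants are computed under $q<2^{|\reg H|}$, and you still need those constants, which you do not derive, to reach $24$.
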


The maps $A,\bar A$ create a pair with a fresh output coordinate, while $B,\bar B$ require a fresh input coordinate.
Inserting the challenge first can exclude a coordinate allowed in the other order; it can also change the normalization denominator.
Before insertion, the creation maps spread weight over at least $2^n-q$ available coordinates; insertion can reduce this number by one, and \cref{eq:ratio} controls the normalization change.
For deletion, insertion can change a denominator, block a valid deletion, or make the inserted pair removable.
The proofs count contributions with identical final labels to control coherent interference.
The exceptional terms for $A^\dagger,\bar A^\dagger$ use the hidden-label coefficient bound; those for $B^\dagger,\bar B^\dagger$ use input matching and at most two challenge-bit choices.
\Cref{app:commutators} gives all eight calculations.

Write $\epsilon_X$ for the coefficient in the bound $\|[X,J]\phi\|_2\leq\epsilon_X\|\phi\|_2$: \cref{eq:eight-epsilons} lists these coefficients explicitly.

It remains to combine the creation and deletion estimates for a full query.
Expanding the projection factors gives $V=A-ABB^\dagger+B^\dagger-AA^\dagger B^\dagger$.
The commutator of each product is given by $[XY,J]=X[Y,J]+[X,J]Y$.
Each operator to the right of a commutator is the identity, one or two deletions, or a deletion followed by a creation.
These contractions leave $\reg{HE_0}$ unchanged and do not increase the record size, so they preserve the family in \cref{eq:mixed-experiment-state-family}.
The operators to the left are contractions as well.
Thus, for example,
\begin{equation}
  \|[V,J]\phi\|_2\leq(3\epsilon_A+\epsilon_B+\epsilon_{A^\dagger}+3\epsilon_{B^\dagger})\|\phi\|_2. \label{eq:v-commutator}
\end{equation}
The factors $3$ count the three occurrences of $[A,J]$ and $[B^\dagger,J]$ in the product expansion; $[B,J]$ and $[A^\dagger,J]$ each occur once.
The adjoint and barred queries obey the analogous bounds,
which all give coefficients at most $\frac{24}{2^{|\reg H|}}$.
\Cref{app:commutators,app:commutator-constants} give the full component proofs and coefficient bounds.

\subsection{Completing the state comparison}

\begin{lemma}[Path-recording comparison]
\label{lem:path-bound}
If $|\reg R|\geq|\reg H|$ and $q\geq0$, then
\begin{equation}
  \|\rho_q^{\mathrm{real},V^?}-\rho_q^{\mathrm{mixed},V^?}\|_1\leq\min\left\{2,\ (25+24\sqrt2)\frac q{2^{|\reg H|}}\right\}. \label{eq:all-path-bound}
\end{equation}
\end{lemma}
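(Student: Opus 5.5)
The plan is to combine the decomposition \cref{eq:comparison-decomposition} with \cref{lem:insertion-error,lem:query-commutator} through a hybrid argument over the $q$ queries. The bound $2$ is trivial, since both marginals are (sub)normalized states. So I will assume $q\geq1$ and $q<2^{|\reg H|}$; otherwise $(25+24\sqrt2)q/2^{|\reg H|}>2$ and there is nothing to prove. Write $\ket{\psi}=\ket{\Psi_0^{\mathrm{mixed},V^?}}$, $\mathcal A=V_qA_q\cdots V_1A_1$, and $\ket{\phi}=\mathcal A\ket\psi=\ket{\Psi_q^{\mathrm{mixed},V^?}}$. Since $J\ket\psi=\ket{\Psi_0^{\mathrm{real},V^?}}$, the real final vector is $\mathcal AJ\ket\psi$.

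By \cref{eq:comparison-decomposition}, the difference of marginals equals $\Tr_{\reg{HE_LE_R}}[\proj{\mathcal AJ\psi}-\proj{J\mathcal A\psi}]+K_\phi$. The vector $\phi$ has the form \cref{eq:mixed-experiment-state-family} with records of size at most $q$, so \cref{lem:insertion-error} gives $\|K_\phi\|_1\leq q2^{-|\reg H|}$, using $\|\phi\|_2\leq1$. For the first term, partial trace contracts trace norm, and $\|\proj u-\proj v\|_1\leq\|u-v\|_2(\|u\|_2+\|v\|_2)$. It therefore suffices to bound $\|\mathcal AJ\psi-J\mathcal A\psi\|_2$ together with the two norms. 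I will take $\|\mathcal AJ\psi\|_2\leq\|J\psi\|_2=1$, and \cref{lem:insertion-error} gives $\|J\mathcal A\psi\|_2\leq\sqrt{1+q2^{-|\reg H|}}\leq\sqrt2$. The sum of norms is at most $1+\sqrt2$.

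For the vector difference, I telescope. The operator $J$ commutes with every $A_t$, because $A_t$ acts on $\reg{QW}$ and $J$ acts on $\reg{E_0E_L}$. Hence
\[
\mathcal AJ\psi-J\mathcal A\psi=\sum_{t=1}^q V_qA_q\cdots V_{t+1}A_{t+1}\,[V_t,J]\,\ket{\phi'_t}.
\]
The prefix operators are contractions, so the norm is at most $\sum_t\|[V_t,J]\phi'_t\|_2$. Each $\phi'_t$ is an intermediate mixed-experiment vector of the form \cref{eq:mixed-experiment-state-family}, with records of size at most $t-1\leq q$ and norm at most $1$. \Cref{lem:query-commutator} then bounds each term by $24\cdot2^{-|\reg H|}$, so the total is at most $24q/2^{|\reg H|}$.

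Combining the pieces gives $24(1+\sqrt2)q/2^{|\reg H|}+q/2^{|\reg H|}=(25+24\sqrt2)q/2^{|\reg H|}$, which is exactly the claim. The main point needing care is checking that every intermediate vector $\phi'_t$ stays in the family required by the lemmas. Specifically, its coefficients must be independent of $h$ and its record size bounded by $q$. This holds because $\reg{HE_0}$ is never touched in the mixed experiment and each query adds at most one pair. The substantive commutator work is already contained in \cref{lem:query-commutator}.
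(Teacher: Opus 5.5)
Your proof is correct and follows the paper's argument step by step. You use the same decomposition into $K_\phi$ plus a vector comparison, the same telescoping identity with \cref{lem:query-commutator} giving $24q/2^{|\reg H|}$, and the same rank-one trace bound with norms $1$ and $\sqrt{1+q/2^{|\reg H|}}\leq\sqrt2$. One small slip: your ``otherwise'' clause does not cover $q=0$, where the bound to prove is $0$ rather than trivially $2$; there the telescoping sum is empty and $K_\phi=0$ because the records are empty, so the marginals agree, as the paper notes.
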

\begin{proof}
The query estimates now compare insertion before all queries with insertion after all queries.
Assume $1\leq q<2^{|\reg H|}$.

\paragraph{Comparing the final vectors.}
Since $J$ maps the initial mixed state to the real state and commutes with each algorithm gate $A_t$, the difference of the final vectors is
\[
  \ket{\Psi_q^{\mathrm{real},V^?}}-J\ket{\Psi_q^{\mathrm{mixed},V^?}}=\sum_{t=1}^qV_qA_q\cdots V_{t+1}A_{t+1}[V_t,J]\ket{\phi'_t}.
\]
This is the telescoping identity obtained by moving $J$ past one query at a time.
The $t$th summand is the error at query $t$, followed by all later operations; the product is the identity when $t=q$.
Every later operation is a contraction, and the mixed vector $\ket{\phi'_t}$ has norm at most one.
The triangle inequality and \cref{eq:query-commutator} therefore give
\[
  \|\Psi_q^{\mathrm{real},V^?}-J\Psi_q^{\mathrm{mixed},V^?}\|_2\leq\sum_{t=1}^q\|[V_t,J]\phi'_t\|_2\leq\frac{24q}{2^{|\reg H|}}.
\]

\paragraph{Passing to the marginals.}
To convert this vector estimate to a marginal-state bound, we use
\[
  \|\proj v-\proj w\|_1\leq(\|v\|_2+\|w\|_2)\|v-w\|_2.
\]
It follows by writing $\proj v-\proj w=\ketbra{v-w}v+\ketbra w{v-w}$ and bounding the two rank-one terms.
Here the real vector has norm at most one, while \cref{lem:insertion-error} bounds the inserted mixed vector by $\sqrt{1+q/2^{|\reg H|}}$.
The decomposition in \cref{eq:comparison-decomposition}, contractivity under partial trace, and the same lemma give
\begin{equation}
  \|\rho_q^{\mathrm{real},V^?}-\rho_q^{\mathrm{mixed},V^?}\|_1\leq\left[24\left(1+\sqrt{1+\frac q{2^{|\reg H|}}}\right)+1\right]\frac q{2^{|\reg H|}}. \label{eq:all-path-intermediate}
\end{equation}
Since $q<2^{|\reg H|}$, the square root is at most $\sqrt2$.
At $q=0$ the marginals agree.
For $q\geq2^{|\reg H|}$, their traces are at most one, so the trace-norm difference is at most $2$.
\end{proof}

The recording comparison gives the bottom bound in \cref{fig:four-experiments}; the two vertical bounds follow from the Haar approximation.

\paragraph{The Haar approximation.}
For any fixed $t$-query circuit with an oracle-independent input and empty records, let $\rho_t^{U^?}$ and $\rho_t^{V^?}$ denote the outputs of that same circuit using the Haar and recording oracles.
Schuster et al.~\cite[Theorem~10]{SMLBH25} give
\begin{equation}
  \|\rho_t^{U^?}-\rho_t^{V^?}\|_1\leq\frac{18t(t+1)}{2^{n/8}}. \label{eq:all-haar-simulation}
\end{equation}
Their bound is $9t(t+1)/2^{n/8}$ in trace distance, so the trace-norm coefficient is $18$.
They obtain this bound by extending the proof of Ma--Huang~\cite[Theorem~8]{MH24} to the barred query maps.
The Haar output is averaged over $U$, and reference registers may be included in the workspace.
An arbitrary oracle-independent initial state can be purified as in \cref{sec:oracle-algorithms}.
Fixed swaps put the preparation and interaction calls on one query register, without changing their number.

\begin{proof}[Proof of \cref{thm:all-haar}]
If $q\geq2^{n-1}$, then $q\geq2^{|\reg H|}$ because $|\reg R|\geq1$.
The right-hand side of \cref{eq:all-haar-bound} is then $2$, which bounds the trace-norm difference of any two states.
We may therefore assume $q\leq2^{n-1}-1$.
The complete real circuit uses $q+1$ queries, including the query preparing the challenge; the mixed circuit uses $q$.
Both circuits start with oracle-independent inputs and empty records.
The approximation in \cref{eq:all-haar-simulation} applies to each complete circuit, after which we trace out all registers hidden from the algorithm.
The two simulation errors are at most $18(q+1)(q+2)/2^{n/8}$ and $18q(q+1)/2^{n/8}$.
These two upper bounds sum to $36(q+1)^2/2^{n/8}$.
The triangle inequality and \cref{lem:path-bound} give the stated bound.
The distance between two states is also at most $2$ in trace norm.
\end{proof}

The sharper forward-only approximation gives \cref{cor:forward-haar}.

\begin{proof}[Proof of \cref{cor:forward-haar}]
Forward queries keep $\reg{E_R}$ empty and $V=A$, also after inserting $e_0$.
The comparison therefore uses only $[A,J]$, whose coefficient is at most $3/2^{|\reg H|}$.
The proof of \cref{lem:path-bound} then gives $[3(1+\sqrt2)+1]q/2^{|\reg H|}=(4+3\sqrt2)q/2^{|\reg H|}$ for $1\leq q<2^{|\reg H|}$.
Starting from empty records, the forward recording circuits reach only relations with distinct outputs; on these, $A$ agrees with the forward oracle of Ma--Huang~\cite[Definition~10]{MH24}.
Their Theorem~5 gives trace-norm error $4t(t-1)/(2^n+1)$ for a $t$-query circuit.
Using $t=q+1$ for the real circuit and $t=q$ for the mixed circuit, the error bounds sum to $8q^2/(2^n+1)\leq8q/2^{|\reg H|}$, since $q<2^{|\reg H|}$ and $n\geq2|\reg H|$.
At $q=0$ the marginals agree; for $q\geq2^{|\reg H|}$, use the trace-norm bound $2$.
\end{proof}

\section{Applications}
\label{sec:applications}

\subsection{A matching decoder}
\label{sec:decodability-upper}

Utsumi et al.~\cite[Theorems~33 and~34]{UNWT25} already apply their Uhlmann algorithm to decoding and Haar-based state merging.
We specialize their preparation-query algorithm to our source and fixed target, with parameters chosen uniformly over $U$.
A decoder aims to transform $\reg R$ so that one of its qubits is maximally entangled with $\reg B$, while leaving $\reg{BH}$ untouched.
We choose a fixed target purification with this property and approximate it by acting only on $\reg R$.
Decoupling bounds the achievable overlap, and the Uhlmann algorithm implements the transformation through preparation queries.
The source and target marginals on the inaccessible registers are
\[
  \rho_{\reg{BH}}^U=\Tr_{\reg{R}}\proj{\Psi_0^{\mathrm{real},U}},\qquad \tau_{\reg{BH}}=\frac{\id_{\reg{BH}}}{2^{1+|\reg H|}}.
\]
Assume $|\reg R|\geq|\reg H|+1$ and split $\reg R$ into a qubit $\reg A$, a copy $\reg{H'}$ of $\reg H$, and a register $\reg{R'}$ of $|\reg R|-|\reg H|-1$ qubits.
The fixed target is
\begin{align*}
  \ket\Theta_{\reg{BHR}}&=\ket{\Phi^+}_{\reg{BA}}\ket{\Phi_{\reg{H}}}_{\reg{HH'}}\ket0_{\reg{R'}},\\
  \ket{\Phi_{\reg{H}}}_{\reg{HH'}}&=\frac1{\sqrt{2^{|\reg H|}}}\sum_h\ket h_{\reg{H}}\ket h_{\reg{H'}}.
\end{align*}
Its $\reg{BH}$ marginal is $\tau_{\reg{BH}}$, and its qubit $\reg A$ is maximally entangled with $\reg B$.
Uhlmann's theorem~\cite[Theorem~3.22]{Wat18} identifies the optimal overlap:
\[
  \max_{W_{\reg R}\text{ unitary}}\bigl|\bra\Theta(\id_{\reg{BH}}\otimes W_{\reg R})\ket{\Psi_0^{\mathrm{real},U}}\bigr|^2=\mathrm F(\rho_{\reg{BH}}^U,\tau_{\reg{BH}})^2.
\]
Both purifications fit in $\reg{BH}:\reg R$ since $2^{|\reg R|}\geq2^{1+|\reg H|}$.
The remaining task is to approximate this optimum with oracle queries.
The Haar-decoupling estimate~\cite[Theorem~3.3]{DBWR14}, applied with reference $\reg B$ and channel $\Tr_{\reg{R}}$, gives
\begin{equation}
  \mathbb E_U\|\rho_{\reg{BH}}^U-\tau_{\reg{BH}}\|_1\leq2\cdot2^{(|\reg H|-|\reg R|)/2}. \label{eq:decoupling-trace-bound}
\end{equation}
Here $\tau_{\reg{BH}}$ is the product of the maximally mixed states on $\reg B$ and $\reg H$.
\Cref{app:decoupling} gives a direct proof from the moments of the two Haar columns.

Since $\tau_{\reg{BH}}$ is maximally mixed, the fidelity bound follows from a scalar inequality.
Let $a_j$ be the eigenvalues of $\rho_{\reg{BH}}^U$, including zeros.
For nonnegative $a,b$, $(\sqrt a-\sqrt b)^2\leq|a-b|$, since $|\sqrt a-\sqrt b|\leq\sqrt a+\sqrt b$.
Thus
\begin{align*}
  1-\mathrm F(\rho_{\reg{BH}}^U,\tau_{\reg{BH}})&=\frac12\sum_j\left(\sqrt{a_j}-2^{-(1+|\reg H|)/2}\right)^2\\
  &\leq\frac12\sum_j\left|a_j-2^{-(1+|\reg H|)}\right|=\frac12\|\rho_{\reg{BH}}^U-\tau_{\reg{BH}}\|_1.
\end{align*}
Using $1-\mathrm F^2=(1-\mathrm F)(1+\mathrm F)\leq2(1-\mathrm F)$ gives
\begin{equation}
  \mathrm F(\rho_{\reg{BH}}^U,\tau_{\reg{BH}})^2\geq1-\|\rho_{\reg{BH}}^U-\tau_{\reg{BH}}\|_1. \label{eq:fidelity-decoupling}
\end{equation}
For additive squared-fidelity error $\epsilon$, the transformation criterion is
\begin{equation}
  \bra\Theta(\id_{\reg{BH}}\otimes\mathcal T^U)(\proj{\Psi_0^{\mathrm{real},U}})\ket\Theta\geq\mathrm F(\rho_{\reg{BH}}^U,\tau_{\reg{BH}})^2-\epsilon. \label{eq:uhlmann-fidelity-error}
\end{equation}

\begin{lemma}
\label{lem:purification-decoder}
A channel satisfying \cref{eq:uhlmann-fidelity-error} yields a decoder $\mathcal D^U=\Tr_{\reg{H'R'}}\circ\mathcal T^U$ using the same queries, with fidelity at least $1-\|\rho_{\reg{BH}}^U-\tau_{\reg{BH}}\|_1-\epsilon$.
\end{lemma}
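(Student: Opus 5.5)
The argument is a data-processing step followed by the decoupling bound. Write $\omega^U_{\reg{BHR}}=(\id_{\reg{BH}}\otimes\mathcal T^U)(\proj{\Psi_0^{\mathrm{real},U}})$, the output of the transformation channel. Since $\mathcal T^U$ acts only on $\reg R$, which is all the decoder holds, the map $\mathcal D^U=\Tr_{\reg{H'R'}}\circ\mathcal T^U$ is a legitimate decoder. It acts on $\reg R$ alone, outputs the qubit $\reg A$, and uses exactly the queries of $\mathcal T^U$. Its output on $\reg{BA}$ is $\Tr_{\reg{HH'R'}}\omega^U$, where the trace over $\reg H$ comes from the experiment, since the decoder never touches $\reg H$. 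Hence
\[
f_{\mathcal D}(U)=\Tr\bigl[(\proj{\Phi^+}_{\reg{BA}}\otimes\id_{\reg{HH'R'}})\,\omega^U\bigr].
\]

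The key step is the operator inequality
\[
\proj\Theta=\proj{\Phi^+}_{\reg{BA}}\otimes\proj{\Phi_{\reg H}}_{\reg{HH'}}\otimes\proj0_{\reg{R'}}\leq\proj{\Phi^+}_{\reg{BA}}\otimes\id_{\reg{HH'R'}}.
\]
It holds because the difference is $\proj{\Phi^+}\otimes(\id-P)$, where $P$ is a projector. Because $\omega^U\geq0$, this inequality gives
\[
f_{\mathcal D}(U)\geq\bra\Theta\omega^U\ket\Theta\geq\mathrm F(\rho^U_{\reg{BH}},\tau_{\reg{BH}})^2-\epsilon,
\]
where the second inequality is \cref{eq:uhlmann-fidelity-error}.

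Finally, applying \cref{eq:fidelity-decoupling} replaces $\mathrm F^2$ with $1-\|\rho^U_{\reg{BH}}-\tau_{\reg{BH}}\|_1$, which gives the claimed pointwise bound for each $U$. Averaging over $U$ and using \cref{eq:decoupling-trace-bound} then yields the Haar-averaged fidelity $1-2\cdot2^{(|\reg H|-|\reg R|)/2}-\epsilon$ stated in the introduction.

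There is no real obstacle. The only point needing care is that the register bookkeeping is consistent. The decomposition $\reg R=\reg A\reg{H'}\reg{R'}$ is fixed independently of $U$, so the decoder's circuit remains oracle-independent apart from its queries. The Bell test is then exactly the acceptance probability $f_{\mathcal D}(U)$ defined in \cref{sec:experiments}.
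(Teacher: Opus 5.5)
Your proof is correct and follows the paper's argument: the domination $\proj{\Phi^+}_{\reg{BA}}\otimes\id_{\reg{HH'R'}}\geq\proj\Theta$ bounds the Bell-test probability below by the overlap with $\ket\Theta$, and \cref{eq:uhlmann-fidelity-error,eq:fidelity-decoupling} then give the pointwise bound, with discarding $\reg{H'R'}$ costing no queries. Your final averaging over $U$ is also correct, although the paper carries out that step in the proof of \cref{cor:decoding-upper-bound} rather than in this lemma.
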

\begin{proof}
Since $\proj{\Phi^+}_{\reg{BA}}\otimes\id_{\reg{HH'R'}}\geq\proj\Theta$, the Bell-test probability after discarding $\reg{H'R'}$ is at least the overlap with $\ket\Theta$.
The two fidelity inequalities \cref{eq:uhlmann-fidelity-error,eq:fidelity-decoupling} give the result, and discarding registers uses no queries.
\end{proof}

To implement the transformation, we use source and target preparation circuits.
Preparation queries act on auxiliary registers of the same dimensions as $\reg{BH}:\reg R$ and leave the actual $\reg{BH}$ untouched.
In these preparation circuits, $\reg B,\reg Q$ label auxiliary copies of the register spaces.
Let $C$ apply a Hadamard to this auxiliary $\reg B$, followed by a CNOT from $\reg B$ to the first qubit of $\reg Q$.
Then
\[
  C\ket0_{\reg{BQ}}=\frac1{\sqrt2}\sum_c\ket c_{\reg{B}}\ket{x_{0,c}}_{\reg Q},\qquad P_U=(\id_{\reg{B}}\otimes U)C.
\]
The source preparation $P_U$ costs one query, while preparing $\ket\Theta$ uses fixed gates.

\begin{corollary}[Decoder upper bound]
\label{cor:decoding-upper-bound}
For $|\reg R|\geq|\reg H|+1$ and $0<\epsilon<1/2$, there is a decoder using $O((2^{|\reg H|}/\epsilon)\log(1/\epsilon))$ forward and inverse queries with Haar-averaged fidelity at least
\[
  1-2\cdot2^{(|\reg H|-|\reg R|)/2}-\epsilon.
\]
Its operations between queries are independent of $U$.
At fixed $\epsilon$, its query cost is $O(2^{|\reg H|})$.
\end{corollary}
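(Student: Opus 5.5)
The plan is to combine \cref{lem:purification-decoder} with the Uhlmann-transformation algorithm of Utsumi et al.~\cite[Theorem~13]{UNWT25}. It is instantiated with the source preparation $P_U$ (one forward query, with inverse $P_U^\dagger$ costing one inverse query) and the fixed target preparation of $\ket\Theta$ (no queries). Their algorithm takes as input a rank cutoff, or equivalently a target precision, and outputs a channel $\mathcal T^U$ on $\reg R$. This channel achieves squared fidelity $\mathrm F(\rho,\tau)^2-\epsilon$ using $O((r/\epsilon)\log(1/\epsilon))$ preparation queries, where $r$ bounds the relevant rank. The key point is to choose a cutoff that does not depend on $U$, so that the operations between queries are oracle-independent. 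The natural choice is $r=2^{1+|\reg H|}=\dim\reg{BH}$: the rank of $\sqrt{\rho^U_{\reg{BH}}}\sqrt{\tau_{\reg{BH}}}$ is at most the dimension of $\reg{BH}$ for every $U$, so this single cutoff works uniformly.

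Concretely, I would carry out the following steps. First, verify that the hypotheses of \cite[Theorem~13]{UNWT25} hold for the pair $(P_U,\ket\Theta)$. Both purifications live on $\reg{BH}:\reg R$, which is possible since $2^{|\reg R|}\ge2^{1+|\reg H|}$. The algorithm also needs controlled or inverse access to the preparations, and a controlled $P_U$ reduces to controlled $U$, $U^\dagger$ plus fixed gates; if controlled queries are needed, \cite{TW25} does not change the count in the upper-bound direction, or controlled versions may simply be assumed available. Second, conclude that \cref{eq:uhlmann-fidelity-error} holds for every fixed $U$ with the query count $O((2^{|\reg H|}/\epsilon)\log(1/\epsilon))$. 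Third, apply \cref{lem:purification-decoder} pointwise in $U$ to get $f_{\mathcal D}(U)\ge1-\|\rho^U_{\reg{BH}}-\tau_{\reg{BH}}\|_1-\epsilon$. Fourth, average over $U$ and use \cref{eq:decoupling-trace-bound} to obtain the stated bound $1-2\cdot2^{(|\reg H|-|\reg R|)/2}-\epsilon$. At fixed $\epsilon$ the query cost is $O(2^{|\reg H|})$.

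The main obstacle is the first step: matching the exact form of the Utsumi et al.\ guarantee to \cref{eq:uhlmann-fidelity-error}. Their error may be stated in terms of root fidelity or purified distance rather than additive squared fidelity, and their cutoff may be phrased through a spectral threshold that they tune to the instance. I would first check whether their theorem allows an arbitrary upper bound on the rank as input. If it does, the uniform choice $r=\dim\reg{BH}$ applies directly. If their error is in root fidelity $\mathrm F-\epsilon'$, I would convert it using $\mathrm F^2-(\mathrm F-\epsilon')^2\le2\epsilon'$ and rescale $\epsilon'=\epsilon/2$. This rescaling changes only constants in the query count, and the restriction $\epsilon<1/2$ keeps the $\log(1/\epsilon)$ factor meaningful.
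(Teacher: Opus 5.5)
Your plan is essentially the paper's proof: one oracle-independent cutoff, justified by $\operatorname{rank}(\sqrt{\tau_{\reg{BH}}}\sqrt{\rho^U_{\reg{BH}}})\leq2^{1+|\reg H|}$ for every $U$, then \cref{lem:purification-decoder} applied pointwise in $U$, then the decoupling bound \cref{eq:decoupling-trace-bound} averaged over $U$. The obstacle you flag is resolved in the paper by working with Utsumi et al.'s construction directly rather than citing Theorem~13 as a black box: it uses the sign polynomial of their Proposition~10 with cutoff $\epsilon/(8\cdot2^{1+|\reg H|})$ and uniform error $\epsilon/4$, the overlap operator's singular values sum to $\mathrm F\leq1$, so the loss above the cutoff is at most $\epsilon/4$ and below it at most $2r$ times the cutoff, hence at most $\epsilon/4$, and the root-to-squared-fidelity conversion is exactly the factor-two rescaling you describe (no controlled queries are needed, since the block encoding uses only $P_U$ and $P_U^\dagger$).
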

\begin{proof}[Proof sketch]
The overlap operator $\Tr_{\reg{BH}}\ketbra{\Theta}{\Psi_0^{\mathrm{real},U}}$ on $\reg R$ has at most $2^{1+|\reg H|}$ nonzero singular values.
For this fixed target, the optimal root overlap is the sum of these singular values.
The algorithm of Utsumi et al.~\cite[Theorem~13 and Proposition~10]{UNWT25} uses preparation queries and a polynomial approximation to align the operator's right and left singular vectors.
The uniform approximation error above its cutoff is chosen proportional to $\epsilon$.
The singular values below the cutoff have total weight at most $2^{1+|\reg H|}$ times the cutoff.
A cutoff proportional to $\epsilon/2^{1+|\reg H|}$ therefore bounds the total loss, while keeping the query cost $O(2^{|\reg H|})$ at fixed $\epsilon$.
The same cutoff works for every $U$, so the polynomial and its gates depend only on the dimensions and $\epsilon$.
This gives \cref{eq:uhlmann-fidelity-error}.
The decoder implication in \cref{lem:purification-decoder}, averaged over $U$ with \cref{eq:decoupling-trace-bound}, gives the stated fidelity.
\Cref{app:decoding-upper} gives the full proof and the uniform parameter choice.
\end{proof}
If $|\reg H|\leq n/16$ and $|\reg H|\to\infty$, the decoupling error tends to zero and any fixed fidelity below one is achievable with $O(2^{|\reg H|})$ queries.
For every fixed target fidelity in $(1/4,1)$, \cref{cor:query-lower-bounds} gives the matching lower bound even with all four query types.

\subsection{A rank lower bound for Uhlmann transformation}
\label{sec:uhlmann-rank}

The transformation above gives a decoder without increasing the source-query cost.
We use this reduction to prove a lower bound for Uhlmann transformation with a fixed target.
With $|\reg R|=15|\reg H|$, we have $n=16|\reg H|$.
The two query lower-bound scales then both equal $2^{|\reg H|}$, while the decoupling error is at most $2^{1-7|\reg H|}$.
The rank of the product on this family is almost surely $2^{1+|\reg H|}$, as shown next.

\begin{lemma}
\label{lem:uhlmann-rank}
If $|\reg R|\geq|\reg H|+1$, then for Haar-almost every $U$,
\[
  \operatorname{rank}\bigl(\sqrt{\tau_{\reg{BH}}}\sqrt{\rho_{\reg{BH}}^U}\bigr)=2^{1+|\reg H|}.
\]
\end{lemma}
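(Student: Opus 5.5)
Since $\tau_{\reg{BH}}$ is a positive multiple of the identity, $\sqrt{\tau_{\reg{BH}}}$ is invertible. Hence $\operatorname{rank}(\sqrt{\tau_{\reg{BH}}}\sqrt{\rho_{\reg{BH}}^U})=\operatorname{rank}\sqrt{\rho_{\reg{BH}}^U}=\operatorname{rank}\rho_{\reg{BH}}^U$. The statement therefore reduces to showing that $\rho_{\reg{BH}}^U$ has full rank $2^{1+|\reg H|}$ for Haar-almost every $U$.

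Next we write the marginal explicitly. The purification $\ket{\Psi_0^{\mathrm{real},U}}$ across the cut $\reg{BH}:\reg R$ has Schmidt rank equal to $\operatorname{rank}\rho_{\reg{BH}}^U=\operatorname{rank}\rho_{\reg R}^U$. Equivalently, it is the rank of the coefficient matrix $M_U$, viewed as a $2^{1+|\reg H|}\times 2^{|\reg R|}$ matrix. Its entries are
\[
  (M_U)_{(c,h),r}=\tfrac1{\sqrt2}\bra{r,h}U\ket{x_{0,c}}.
\]
Thus $M_U$ is a rearrangement of the two columns $U\ket{x_{0,0}}$ and $U\ket{x_{0,1}}$ of $U$. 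Full rank means $\det(M_UM_U^\dagger)\neq0$, and this determinant is a real polynomial in the real and imaginary parts of the entries of $U$.

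We then use the standard fact that the zero set of a polynomial on the connected real-analytic manifold $\mathrm U(2^n)$ either is everything or has Haar measure zero. To rule out the first case, it suffices to exhibit one unitary $U_0$ with $\det(M_{U_0}M_{U_0}^\dagger)\neq0$. Here we use $|\reg R|\geq|\reg H|+1$: split $\reg R$ as in the decoder construction into a qubit $\reg A$, a copy $\reg{H'}$, and $\reg{R'}$. Choose $U_0$ to map $\ket{x_{0,c}}$ to orthonormal vectors $\ket c_{\reg A}\ket{\Phi_{\reg H}}_{\reg{HH'}}\ket0_{\reg{R'}}$, and extend it to a unitary. This is possible because any orthonormal pair extends. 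Then $\ket{\Psi_0^{\mathrm{real},U_0}}=\ket\Theta$, whose $\reg{BH}$ marginal is $\tau_{\reg{BH}}$, which has full rank. Hence the polynomial is not identically zero, and the rank equals $2^{1+|\reg H|}$ almost surely.

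The only step needing care is the measure-zero claim for a nonzero polynomial on the unitary group. I would justify it by parametrizing a dense open chart of $\mathrm U(2^n)$ analytically, for example via the exponential map or a Cayley transform. In that chart Haar measure has a smooth positive density, and the zero set of a nonzero real-analytic function has Lebesgue measure zero. Connectedness of $\mathrm U(2^n)$ ensures the function cannot vanish identically on one chart unless it vanishes everywhere. As an alternative, one can observe that the pair of columns is distributed as a Haar-random orthonormal 2-frame. This distribution is absolutely continuous with respect to a Gram–Schmidt image of Gaussian vectors, for which the same measure-zero argument is standard.
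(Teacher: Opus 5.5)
Your proof is correct. It shares the paper's first step, reducing to full rank of $\rho_{\reg{BH}}^U$ via the coefficient matrix across $\reg{BH}:\reg R$, and then takes a different route.

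\textbf{The paper's route.} It argues probabilistically:
\begin{itemize}
\item It realizes the two Haar columns as the Gram--Schmidt orthonormalization of two independent standard complex Gaussian vectors.
\item It stacks their reshaped coefficients into a $2^{1+|\reg H|}\times2^{|\reg R|}$ matrix $G$ with independent Gaussian rows.
\item Each row almost surely avoids the proper span of the preceding ones. This is where $|\reg R|\geq|\reg H|+1$ enters, as a row/column count.
\item The coefficient matrix equals an invertible block-triangular matrix times $G$ in distribution, so it has full row rank.
\end{itemize}

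\textbf{Your route.} You use a generic-rank argument. Rank deficiency is the zero set of the polynomial $\det(M_UM_U^\dagger)$ on the connected analytic manifold $\mathrm U(2^n)$, so a single good instance suffices. Your witness $U_0$, with $\ket{\Psi_0^{\mathrm{real},U_0}}=\ket\Theta$, uses the same dimension hypothesis to fit $\reg A\reg{H'}$ inside $\reg R$.

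\textbf{What each buys.} Your approach is more structural. It ties the lemma directly to the decoder's fixed target: at $U_0$ the Uhlmann optimum is exactly one. It also applies unchanged to any polynomial condition that has one good instance. Its cost is the analytic fact about zero sets on $\mathrm U(2^n)$, which needs the chart and identity-theorem care you sketch. For example, you can cover $\mathrm U(2^n)$ by finitely many phase-translated Cayley charts, each with a connected parameter domain.

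The paper avoids this entirely. It uses only that a vector with a Lebesgue density misses a fixed proper subspace almost surely, plus the explicit Gram--Schmidt relation. It justifies the distributional correctness of that relation by unitary invariance and transitivity on orthonormal pairs. Your closing alternative essentially points toward the paper's route.
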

\begin{proof}[Proof sketch]
Since $\tau_{\reg{BH}}$ is invertible and square roots preserve rank, the product has the rank of $\rho_{\reg{BH}}^U$.
The two relevant Haar columns can be generated by orthonormalizing two independent standard complex Gaussian vectors.
Reshaping these vectors across $\reg{BH}:\reg R$ gives a $2^{1+|\reg H|}\times2^{|\reg R|}$ matrix with independent Gaussian entries and full row rank almost surely.
The hypothesis makes the number of columns at least the number of rows, and each new Gaussian row lies in the span of the preceding rows with probability zero.
Orthonormalization acts by an invertible matrix on its rows and preserves rank.
\Cref{app:uhlmann-rank-proof} gives the coefficient matrices and the full calculation.
\end{proof}

\begin{theorem}[Linear rank lower bound]
\label{thm:uhlmann-rank-lower-bound}
Let $|\reg H|\geq1$, $|\reg R|=15|\reg H|$, and $0\leq\epsilon\leq1/8$.
Suppose one algorithm, with circuit and initial auxiliary state independent of $U$, acts only on $\reg R$ and satisfies
\[
  \mathbb E_U\bra\Theta(\operatorname{id}_{\reg{BH}}\otimes\mathcal T^U)(\proj{\Psi_0^{\mathrm{real},U}})\ket\Theta\geq\mathbb E_U\mathrm F(\rho_{\reg{BH}}^U,\tau_{\reg{BH}})^2-\epsilon.
\]
In particular, this condition holds if \cref{eq:uhlmann-fidelity-error} holds for every $U$.
If it uses at most $q$ source-preparation queries, then $q=\Omega(r)$, where $r=\operatorname{rank}(\sqrt{\tau_{\reg{BH}}}\sqrt{\rho_{\reg{BH}}^U})=2^{1+|\reg H|}$ for Haar-almost every $U$.
This holds even when a query coherently selects among $\id,P_U,P_U^\dagger,P_U^*,P_U^{\mathsf T}$.
\end{theorem}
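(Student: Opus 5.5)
The plan is to reduce the Uhlmann transformation task to decoding, and then invoke the decoding lower bound of \cref{cor:query-lower-bounds} with explicit constants. Given a transformation algorithm $\mathcal T^U$ meeting the averaged criterion, define the decoder $\mathcal D^U=\Tr_{\reg{H'R'}}\circ\mathcal T^U$ as in \cref{lem:purification-decoder}. Since $\proj{\Phi^+}_{\reg{BA}}\otimes\id\geq\proj\Theta$, its Haar-averaged Bell fidelity is at least $\mathbb E_U\mathrm F(\rho^U_{\reg{BH}},\tau_{\reg{BH}})^2-\epsilon$. By \cref{eq:fidelity-decoupling,eq:decoupling-trace-bound}, this is at least $1-2\cdot2^{-7|\reg H|}-\epsilon\geq 1-2^{-6}-1/8$ for $|\reg H|\geq1$. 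That exceeds $1/4$ by a fixed constant, about $0.6$. The averaged form of the hypothesis suffices here, because the lemma's inequality is linear in the fidelities and so survives averaging over $U$.

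The next step is to translate preparation queries into oracle queries. A query to $P_U=(\id_{\reg B}\otimes U)C$ equals one $U$ query conjugated by the fixed gate $C$. Likewise $P_U^\dagger=C^\dagger U^\dagger$, $P_U^*=(\id\otimes U^*)C^*$ with $C^*=C$ (real gates), and $P_U^{\mathsf T}=C^{\mathsf T}U^{\mathsf T}$. Hence a selected query among $\id,P_U,P_U^\dagger,P_U^*,P_U^{\mathsf T}$ becomes a selected query among $\id,U,U^\dagger,U^*,U^{\mathsf T}$, sandwiched by controlled versions of the fixed gates $C,C^\dagger$. These gates are oracle-independent, so they can be absorbed into the surrounding isometries. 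The preparation registers are auxiliary, so the algorithm still acts only on $\reg R$ plus ancillas, and its circuit and auxiliary state remain independent of $U$. This yields a decoder with $q$ selected queries.

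Finally, I apply \cref{cor:selected-queries} and the distinguishing reduction in the proof of \cref{cor:query-lower-bounds}. The distinguishing advantage is $(f-1/4)/2\geq c_0>0$, which forces
\[
  (25+24\sqrt2)\frac{4q}{2^{|\reg H|}}+\frac{36(4q+1)^2}{2^{n/8}}\geq 4c_0 .
\]
With $n=16|\reg H|$ we have $2^{n/8}=2^{2|\reg H|}$, so the second term is at most a constant times $(q/2^{|\reg H|})^2$ when $q\geq1$. The $q=0$ case is excluded, since it gives advantage zero by Haar twirling. Either term being at least a constant forces $q\geq c\,2^{|\reg H|}$ for an explicit $c>0$.

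Combining this with \cref{lem:uhlmann-rank}, which gives $r=2^{1+|\reg H|}$ almost surely, yields $q=\Omega(r)$. The main point to check carefully is the query translation for the conjugate and transpose variants, in particular that $C^*=C$ and that transposes reverse the order correctly. The other delicate point is ensuring the constant gap survives the worst case $\epsilon=1/8$, $|\reg H|=1$; the numbers above suggest ample slack.
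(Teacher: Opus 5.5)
Your proposal is correct and takes essentially the same route as the paper: convert each selected preparation query into one selected query to $U^?$ using the fixed factors $C$ (with $C^*=C$ and $C^{\mathsf T}=C^\dagger$ since $C$ is real), and pass to the decoder $\Tr_{\reg{H'R'}}\circ\mathcal T^U$ via $\proj{\Phi^+}\otimes\id\geq\proj\Theta$. The decoupling bounds then give average fidelity at least $3/4$, and you conclude with \cref{cor:query-lower-bounds} at $n=16|\reg H|$ together with \cref{lem:uhlmann-rank}. The only difference is that you unpack the explicit constants from \cref{cor:selected-queries} and \cref{eq:all-haar-bound} instead of citing the corollary, which is harmless; you might also note, as the paper does, that any target-preparation queries are fixed gates and cost nothing.
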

\begin{proof}[Proof sketch]
Since $C$ is fixed, each of the four source-preparation queries costs one oracle query of the corresponding type.
The same conversion holds for a coherent choice of query type, using fixed controlled gates before and after the query.
The Bell projector tensored with the identity dominates $\proj\Theta$, so discarding $\reg{H'R'}$ gives a $q$-query decoder.
The averaged transformation condition and \cref{eq:fidelity-decoupling,eq:decoupling-trace-bound} bound its Haar-averaged fidelity by
\[
  1-2^{1-7|\reg H|}-\epsilon\geq\frac34.
\]
Here $n=16|\reg H|$, so \cref{cor:query-lower-bounds} requires $q=\Omega(2^{|\reg H|})$.
By \cref{lem:uhlmann-rank}, this is $\Omega(r)$.
\Cref{app:uhlmann-lower-proof} gives the complete query conversion and reduction.
\end{proof}

For every fixed $0<\epsilon\leq1/8$, this matches the $O(r)$ preparation-query upper bound~\cite[Theorem~13]{UNWT25}.
Both purifications have efficient oracle preparations: the source uses one forward query, and the target uses fixed gates.
Decoupling makes their inaccessible marginals nearly equal for most Haar unitaries, so a nearly perfect local transformation exists.
Our bound quantifies the query cost of finding such a transformation on this family.
This connects the example to the hard Uhlmann instances considered by Bostanci et al.~\cite[Theorem~8.10]{BEMPQY23}; the resource measured here is access to the public oracle.

\subsection{EFI and commitments from a public Haar oracle}
\label{sec:efi}

The low rank of the real challenge makes it statistically far from the maximally mixed state in the family below.
The query bound shows that this difference remains hard to detect even with access to the same public oracle.
For $\lambda\geq1$, set $|\reg H|=\lambda$, $|\reg R|=2\lambda$, and $n=3\lambda$, and sample $U_\lambda$ from Haar measure.
The pair is
\[
  \rho_{\reg{BR}}^{U_\lambda},\qquad \sigma_{\reg{BR}}^\lambda=\frac{\id_{\reg{BR}}}{2^{1+2\lambda}}.
\]
The real generator makes one forward query and discards $\reg H$; the mixed generator makes no query.
Both use polynomially many fixed gates, and the distinguisher has the same public oracle.
For a binary-output algorithm $\mathcal A$, write
\[
  \Delta_{\mathcal A}(U)=\Pr[\mathcal A^U(\rho_{\reg{BR}}^U)=1]-\Pr[\mathcal A^U(\sigma_{\reg{BR}}^\lambda)=1].
\]

\begin{corollary}[Oracle-relative EFI]
\label{cor:efi}
For every $U_\lambda$, the trace distance of the pair is at least $1-2^{-\lambda-1}$.
For every algorithm whose circuit family and initial auxiliary state are independent of $U_\lambda$, with polynomial selected-query count $q$,
\[
  \mathbb E_{U_\lambda}|\Delta_{\mathcal A}(U_\lambda)|=O((q+1)^2 2^{-\lambda/16}),
\]
which is negligible in $\lambda$.
\end{corollary}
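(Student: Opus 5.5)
The proof has two independent parts: a statistical lower bound for each fixed $U_\lambda$, and a computational upper bound on the averaged gap.

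\emph{Statistical part.} For fixed $U$, $\ket{\Psi_0^{\mathrm{real},U}}$ is pure on $\reg{BRH}$. Hence $\rho_{\reg{BR}}^U$ has rank at most $2^{|\reg H|}=2^\lambda$. Let $P$ be the projector onto its support. Then $\Tr[P\rho^U_{\reg{BR}}]=1$, while
\[
\Tr[P\sigma^\lambda_{\reg{BR}}]\leq 2^\lambda/2^{1+2\lambda}=2^{-\lambda-1}.
\]
Trace distance dominates the gap of any two-outcome measurement, so the trace distance is at least $1-2^{-\lambda-1}$. This holds for every $U_\lambda$, not only on average.

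\emph{Computational part.} Fix an algorithm $\mathcal A$ with $q$ selected queries. By \cref{cor:selected-queries}, \cref{eq:all-haar-bound} holds with $q$ replaced by $4q$. The absolute value sits inside the Haar expectation, so a single fixed test does not suffice. I would handle it by the same sign-guessing device as in \cref{cor:typical-decoding}. Alternatively, and more cleanly, I would bound the expected absolute gap directly as follows.

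For each $U$, write $s(U)=\operatorname{sign}\Delta_{\mathcal A}(U)$. Then $\mathbb E|\Delta|=\mathbb E[s(U)\Delta(U)]$, and $s(U)$ is not implementable. Instead, a modified distinguisher $\mathcal A'$ proceeds in three steps.
\begin{enumerate}
\item It prepares fresh real and mixed challenges itself, at cost one extra query per real challenge.
\item It runs $\mathcal A$ on $k$ copies of each to estimate $\Delta_{\mathcal A}(U)$, obtaining a sign estimate $\hat s$.
\item It runs $\mathcal A$ on the true challenge and flips the output when $\hat s=-1$.
\end{enumerate}
The estimation stage uses only fresh registers and the oracle, so its randomness is independent of $b$ given $U$. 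Consequently $\mathcal A'$ has gap $\mathbb E_U[\mathbb E\hat s\cdot\Delta(U)]$. A Hoeffding bound shows that this is at least $\mathbb E|\Delta|-O(\eta)$, by splitting into $U$ with $|\Delta(U)|\leq\eta$ and $U$ with $|\Delta(U)|>\eta$; the latter have correct sign except with probability $e^{-\Omega(k\eta^2)}$. The gap of $\mathcal A'$ is at most $\frac12\|\rho^{\mathrm{real}}-\rho^{\mathrm{mixed}}\|_1$ for its $q'=O(k(q+1))$ queries. By \cref{thm:all-haar} this is at most $O(q'/2^\lambda+(q'+1)^2 2^{-3\lambda/8})$, using $n=3\lambda$, so $2^{n/8}=2^{3\lambda/8}$, and $|\reg R|=2\lambda\geq|\reg H|$.

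Choosing $\eta$ and $k$ polynomially related to the target bound yields some negligible bound, but the stated rate $O((q+1)^2 2^{-\lambda/16})$ needs an explicit choice. I would take $\eta=2^{-\lambda/16}$ and $k=\Theta(\lambda\, 2^{\lambda/8})$. The failure term is then $e^{-\Omega(\lambda)}$, which needs a slightly larger constant in $k$ to fall below $2^{-\lambda/16}$. The simulation term is $(q+1)^2\lambda^2 2^{\lambda/4}2^{-3\lambda/8}=(q+1)^2\lambda^2 2^{-\lambda/8}$, which is $O((q+1)^2 2^{-\lambda/16})$. The term $q'/2^\lambda$ is smaller still.

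The main obstacle is this bookkeeping. The absolute value prevents a direct application of \cref{thm:all-haar}, and the rate depends on balancing $k$ against the $2^{-n/8}$ simulation error. One must also check that fresh real challenges prepared with the same $U$ fit the fixed-sequence framework. They do: each fresh challenge is one forward query inside $\mathcal A'$, which is oracle-independent apart from its queries. Negligibility then follows since $q$ is polynomial.
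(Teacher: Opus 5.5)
Your proposal is correct and follows the paper's proof. It uses the same support-projector bound for the statistical part and the same calibrate-then-flip sign estimation on fresh real and mixed trials, with the calibrated algorithm's superpolynomial query count absorbed by \cref{thm:all-haar,cor:selected-queries} at $n=3\lambda$.

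The differences are minor. The paper replaces your Hoeffding-plus-threshold split by the pointwise inequality $s\Delta_{\mathcal A}(U)\geq|\Delta_{\mathcal A}(U)|-2|\widehat\Delta-\Delta_{\mathcal A}(U)|$ and a variance bound, which lets it take $N=\lceil 2^{\lambda/8}\rceil$ without your extra factor of $\lambda$. It also explicitly supplies $2N+1$ tensor-product copies of the oracle-independent auxiliary state, so that each run of $\mathcal A$ is independent given $U$; you should state this rather than leave it implicit in ``fresh registers.''
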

\begin{proof}[Proof sketch]
The real state has rank at most $2^\lambda$ in dimension $2^{1+2\lambda}$.
Its support projector witnesses the stated trace distance.

We prove the stronger bound on the expected absolute gap, whereas the query bound controls the signed average gap.
Gaps with opposite signs can cancel across oracles.
We augment $\mathcal A$ with a calibration step that leaves the original challenge untouched.
The calibrated algorithm estimates the sign of $\Delta_{\mathcal A}(U)$ using $N$ real trials and $N$ mixed trials on separate registers.
These trials use fresh challenges and separately supplied copies of the auxiliary state, so they are independent conditioned on $U$.
It then runs $\mathcal A$ on the original challenge and reverses the output when the estimated sign is negative.
Conditioned on $U$, the resulting acceptance gap averaged over the trials is at least $|\Delta_{\mathcal A}(U)|-\sqrt{2/N}$.
With $N=\lceil2^{\lambda/8}\rceil$, the procedure uses $O(N(q+1))$ queries.

The calibrated algorithm need not be efficient, since the state bound allows arbitrary computation.
\Cref{thm:all-haar,cor:selected-queries} apply to this query count and give the stated bound on the expected absolute gap.
\Cref{app:efi-proof} gives the full proof, including independence, the sign estimate, and the total query count.
\end{proof}

For an independently sampled oracle sequence, \cref{cor:efi-fixed-oracle} gives almost-sure security against every uniform polynomial-query algorithm with a finite description that accesses the current oracle.
The public oracle is part of the construction: averaging it out gives $\mathbb E_{U_\lambda}\rho_{\reg{BR}}^{U_\lambda}=\sigma_{\reg{BR}}^\lambda$.
The same rank bound shows that the real state's entropy is at most $\lambda$ bits, whereas the mixed state's entropy is $1+2\lambda$ bits.
Thus polynomial-query algorithms cannot detect an entropy gap of at least $\lambda+1$ bits.
A hybrid argument extends the indistinguishability bound to polynomially many copies; \cref{app:efi-copies} gives the query count.

\paragraph{Quantum bit commitment.}
The purified-generator construction of Brakerski, Canetti, and Qian~\cite[Lemma~4.7]{BCQ23} applies directly to this pair.
To commit to zero, the sender prepares $\ket{\Psi_0^{\mathrm{real},U}}_{\reg{BRH}}$, sends $\reg{BR}$, and keeps $\reg H$.
To commit to one, it prepares a maximally entangled state between $\reg{BR}$ and an opening register of $1+2\lambda$ qubits, and again sends $\reg{BR}$.
The opening register for zero is padded with $\lambda+1$ qubits initialized to $\ket0$ so that both openings have the same size.
Opening reveals the bit and this register; the receiver uncomputes the stated preparation and accepts exactly when every qubit is zero.

\begin{corollary}[Quantum bit commitment]
\label{cor:bit-commitment}
This protocol has perfect correctness and expected absolute hiding gap $O((q+1)^2 2^{-\lambda/16})$ against any receiver whose circuit and initial auxiliary state are independent of $U$ and which makes a polynomial number $q$ of selected queries.
$q$ counts all receiver queries before and after receipt of the commitment, through the end of the hiding experiment and before opening.
For every fixed $U$, every common post-commit state and every pair of opening procedures satisfy $p_0+p_1\leq1+2^{-(\lambda+1)/2}$, where $p_b$ is the probability of successfully opening bit $b$.
Honest commitment uses at most one forward query, and verification uses at most one inverse query.
\end{corollary}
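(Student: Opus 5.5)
The plan is to prove the four claims separately: correctness and query counts first, then hiding by reduction to \cref{cor:efi}, then binding by a fidelity argument.

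\emph{Correctness and query counts.} Honest commitment to zero prepares $\ket{\Psi_0^{\mathrm{real},U}}_{\reg{BRH}}$ with one forward query, applied after the fixed circuit $C$. Commitment to one uses only fixed gates to build the maximally entangled state. Verification for zero applies $(\id_{\reg B}\otimes U^\dagger)$ on $\reg{RH}$ and then $C^\dagger$, using one inverse query, and checks for all zeros on $\reg{BRH}$ and the padding. Verification for one undoes the fixed Bell preparation. Both honest openings return exactly the all-zero state, so correctness is perfect.

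\emph{Hiding.} The receiver's view after commitment is $\rho_{\reg{BR}}^U$ for bit zero and $\sigma_{\reg{BR}}^\lambda$ for bit one. Any receiver that makes $q$ selected queries before and after receipt and then outputs a guess is a binary-output algorithm $\mathcal A$ as in \cref{cor:efi}. Queries made before receipt can be deferred into the auxiliary workspace. This is legitimate because the challenge is supplied in tensor product with the auxiliary state, and queries before receipt act only on registers independent of the challenge. Its circuit and auxiliary state are independent of $U$. Therefore $\mathbb E_U|\Delta_{\mathcal A}(U)|=O((q+1)^22^{-\lambda/16})$ follows directly from \cref{cor:efi}. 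The one point to check is that reordering pre-receipt queries after the challenge arrives does not change the joint state. It does not, since they commute with the challenge preparation, which acts on separate registers.

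\emph{Binding.} Fix $U$ and a common post-commit state on the receiver's $\reg{BR}$ together with the sender's system $\reg S$. Opening bit $b$ applies a channel on $\reg S$ that outputs the opening register $\reg O$ of $1+2\lambda$ qubits. The receiver's acceptance projector for bit $b$ is, after the fixed uncomputation, projection onto a pure state: $\ket{\Psi_0^{\mathrm{real},U}}\ket{0^{\lambda+1}}$ for zero and a maximally entangled state $\ket{\Omega}$ for one. By Uhlmann's theorem, $p_b\le \mathrm F(\xi_{\reg{BR}},\eta^b_{\reg{BR}})^2$, where $\xi$ is the common receiver marginal, $\eta^0=\rho^U_{\reg{BR}}$, and $\eta^1=\sigma^\lambda$. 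The standard inequality $\mathrm F(\xi,\eta^0)^2+\mathrm F(\xi,\eta^1)^2\le 1+\mathrm F(\eta^0,\eta^1)$ then gives $p_0+p_1\le1+\mathrm F(\rho^U,\sigma^\lambda)$. This follows from the triangle inequality for the Bures angle, or equivalently from $\max_\xi$ being the top eigenvalue of a sum of two rank-one projectors in a purified picture.

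It remains to compute $\mathrm F(\rho^U,\sigma)=2^{-(1+2\lambda)/2}\Tr\sqrt{\rho^U}$. Since $\rho^U$ has rank at most $2^\lambda$, Cauchy--Schwarz gives $\Tr\sqrt{\rho^U}\le2^{\lambda/2}$, so $\mathrm F\le2^{-(\lambda+1)/2}$. This yields the stated bound. The main obstacle is the Uhlmann step: the opening procedures act on $\reg S$ and may be non-unitary, so the bound should be proved by purifying each opening channel and applying monotonicity of fidelity. After that, the sum-of-fidelities inequality is routine.
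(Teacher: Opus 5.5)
Your proposal is correct. Correctness, the query counts, and the hiding reduction are the same as in the paper; the hiding step works by running the receiver's pre-receipt computation after the challenge arrives, which is valid because for each fixed $U$ it acts on registers disjoint from the honest preparation.

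For binding you take a different route from the paper. The paper proceeds in three steps.
\begin{itemize}
\item It dilates both opening channels to unitaries $W_0,W_1$ on one common workspace and pulls the acceptance projectors back to projectors $P_0,P_1$ on a single space.
\item It proves $p_0+p_1\le\|P_0+P_1\|\le1+\|P_0P_1\|$.
\item It bounds $\|P_0P_1\|$ by Cauchy--Schwarz. Every vector in $\operatorname{ran}(P_0)$ lies in the support $\Pi_U\otimes\id$ of $\rho_{\reg{BR}}^U$, while every unit vector in $\operatorname{ran}(P_1)$ has weight only $\Tr[\Pi_U\sigma_{\reg{BR}}^\lambda]\le2^{-\lambda-1}$ there.
\end{itemize}

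You instead work with fidelities of the marginals.
\begin{itemize}
\item Monotonicity of fidelity, applied channel by channel, gives $p_b\le\mathrm F(\xi,\eta^b)^2$ for the common receiver marginal $\xi$.
\item You then use $\mathrm F(\xi,\eta^0)^2+\mathrm F(\xi,\eta^1)^2\le1+\mathrm F(\eta^0,\eta^1)$. This does follow from the Bures-angle triangle inequality via $\cos^2\theta_0+\cos^2\theta_1=1+\cos(\theta_0+\theta_1)\cos(\theta_0-\theta_1)$.
\item Finally you bound $\mathrm F(\rho^U_{\reg{BR}},\sigma^\lambda_{\reg{BR}})=2^{-(1+2\lambda)/2}\Tr\sqrt{\rho^U_{\reg{BR}}}$ using the rank.
\end{itemize}

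The two arguments are closely related. The paper's inequality $\|P_0+P_1\|\le1+\|P_0P_1\|$ is the subspace form of your sum-of-fidelities inequality. Its support-projector estimate is a hands-on upper bound on the Uhlmann overlap, since $\mathrm F(\rho,\sigma)\le\sqrt{\Tr[\Pi_\rho\sigma]}$ by H\"older.

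Your version is more conceptual. It needs no common dilation, and it yields the general, slightly sharper intermediate statement $p_0+p_1\le1+\mathrm F(\rho^U_{\reg{BR}},\sigma^\lambda_{\reg{BR}})$ for any pair of commitment marginals. The paper's version is fully elementary, using neither Uhlmann's theorem nor the Bures-angle triangle inequality. After your Cauchy--Schwarz step on $\Tr\sqrt{\rho^U_{\reg{BR}}}$, both give the same constant $2^{-(\lambda+1)/2}$.
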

\begin{proof}[Proof sketch]
The receiver's commitment marginals are exactly the EFI pair, so \cref{cor:efi} proves hiding.
Let $\Pi_U$ be the support projector of $\rho_{\reg{BR}}^U$.
Its weight in the mixed marginal is
\[
  \Tr[\Pi_U\sigma_{\reg{BR}}^\lambda]=\frac{\operatorname{rank}(\rho_{\reg{BR}}^U)}{2^{1+2\lambda}}\leq2^{-\lambda-1}.
\]
An accepted zero-opening vector is supported on $\Pi_U$ in $\reg{BR}$, whereas an accepted one-opening vector has the mixed marginal.
Cauchy--Schwarz therefore bounds their inner product by $2^{-(\lambda+1)/2}$.
The norm of the sum of the two acceptance projections is at most one plus this overlap, giving the binding bound.
\Cref{app:bit-commitment} proves this for arbitrary malicious commitments and opening procedures.
\end{proof}

\paragraph{A commitment to a qubit.}
The same Haar encoding gives a direct commitment to a quantum message.
Let $\reg M$ be one qubit and initialize the $3\lambda-1$ qubits of $\reg W$ in $\ket0^{\otimes(3\lambda-1)}$.
The sender applies $U$ to $\reg{MW}$, sends $\reg R$, and keeps $\reg H$.
Opening reveals $\reg H$; the receiver applies $U^\dagger$, checks that $\reg W$ is zero, and recovers $\reg M$.
Correctness holds even when $\reg M$ is entangled with an external reference.

We use the hiding definition~\cite[Definition~4.3]{GJMZ23} and the swap-binding definition~\cite[Definition~4.2]{GJMZ23} of Gunn et al.
Hiding compares encoding a receiver-chosen message with encoding zero, while preserving the receiver's workspace.
Swap binding asks whether a malicious sender can distinguish keeping the accepted message from replacing it by zero, after the receiver re-encodes and returns only $\reg H$.

\begin{corollary}[Quantum state commitment]
\label{cor:qubit-commitment}
The one-qubit protocol has perfect correctness and Haar-averaged statistical swap-binding advantage at most $2\cdot2^{-\lambda/2}$.
For polynomial $q$, its expected absolute hiding gap is $O((q+1)^2 2^{-\lambda/16})$.
Hiding holds for chosen messages entangled with a reference and for receivers with oracle-independent circuits and initial auxiliary states; $q$ counts all selected queries before and after commitment, , through the end of the hiding experiment and before opening.
The binding bound holds uniformly over unbounded senders, including strategies chosen with knowledge of $U$.
Honest commitment uses one forward query, and opening verification uses one inverse query.
\end{corollary}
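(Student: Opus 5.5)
The corollary has four parts: correctness, binding, hiding, and query counts. I would handle each separately, reusing the EFI machinery for hiding and a decoupling argument for binding.

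\textbf{Correctness and query counts.} The sender applies $U$ to $\reg{MW}$ with $\reg W=\ket0^{\otimes(3\lambda-1)}$. The receiver, holding $\reg{RH}$, applies $U^\dagger$. This restores $\reg{MW}$ exactly, including any entanglement of $\reg M$ with a reference. So the check that $\reg W=0$ passes with certainty and $\reg M$ is recovered. This uses one forward query to commit and one inverse query to verify.

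\textbf{Hiding.} The receiver prepares a message $\reg M$, possibly entangled with its workspace, and receives either $\Tr_{\reg H}[U(\rho_{\reg M}\otimes\proj{0})U^\dagger]$ or the encoding of $\ket0$. I would compare each case to the maximally mixed state on $\reg R$ and use the triangle inequality.

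The key step is a reduction from the Bell-pair challenge of \cref{sec:experiments}. Let $c$ be the first input bit. Encoding a receiver-held qubit $\reg M$ is teleportation-equivalent to the $\reg{BR}$ challenge: the receiver holds $\reg B$ and can, by an oracle-independent operation on $\reg B$ and its own message register, produce any chosen message state. Concretely, it measures $\reg M$ together with $\reg B$ in the Bell basis. Conditioned on the outcome $k$, the radiation equals the encoding of $P_k\ket\psi$, where $P_k$ is a Pauli.

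The Pauli frame can be absorbed. In the mixed case the radiation is maximally mixed regardless of $k$. In the real case, averaging over outcomes reproduces the encoding with a known Pauli correction, and this can be removed because the algorithm can instead choose the message $P_k^\dagger$-rotated in advance.

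The cleaner route is as follows. Apply the Bell challenge with projection onto the post-selected outcome $k=0$, which occurs with probability $1/4$. Post-selection inflates the gap by at most a factor $4$, via $\Tr[M\Delta]$ arguments. Then the real branch is exactly the encoding of $\psi$ and the mixed branch is maximally mixed on $\reg R$.

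Thus each encoding is close to $\id_{\reg R}/2^{2\lambda}$ against selected-query receivers. The absolute-gap calibration of \cref{cor:efi} (via \cref{thm:all-haar,cor:selected-queries}) then gives $O((q+1)^22^{-\lambda/16})$. Receiver queries made before commitment are allowed, since the theorem permits arbitrary oracle-independent circuits with queries in any position; I would fold them in by reordering, which is harmless because the challenge is supplied in tensor product and the algorithm's queries before and after act the same way in the recording analysis.

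\textbf{Swap binding.} This is the main obstacle, because it must hold against unbounded senders who know $U$. I would use decoupling. After the receiver re-encodes, the sender gets back only $\reg H$. Consider two cases: the accepted message is kept, or it is swapped for $\ket0$ and re-encoded. The sender's view is $\reg H$ together with its own registers.

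By decoupling, $\Tr_{\reg R}$ of $U(\cdot\otimes\proj0)U^\dagger$ is close to the constant channel outputting $\id_{\reg H}/2^\lambda$. This is the estimate of \cref{eq:decoupling-trace-bound}, with reference $\reg B$ and $|\reg H|-|\reg R|=-\lambda$, giving $\mathbb E_U\le 2\cdot2^{-\lambda/2}$. The Choi-state trace distance controls the diamond-type distinguishability, since the input is one qubit, so Choi closeness bounds the channel distance.

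Both cases then produce $\reg H$ nearly maximally mixed and uncorrelated with everything else. Hence the sender's advantage is at most $\mathbb E_U\|\rho^U_{\reg{BH}}-\tau_{\reg{BH}}\|_1\le2\cdot2^{-\lambda/2}$, uniformly over sender strategies.

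Two points need care. First, converting the Choi bound to arbitrary entangled inputs: for a qubit input this loses at most a dimension factor, which I would avoid by a direct purification argument. Second, the sender's knowledge of $U$: this is harmless because the decoupling bound is applied pointwise in $U$ before averaging.
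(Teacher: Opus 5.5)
Your correctness and binding arguments match the paper's proof. The Choi state of the channel $\mathcal E^U$ from $\reg M$ to $\reg H$ is exactly $\rho^U_{\reg{BH}}$, so \cref{eq:decoupling-trace-bound} bounds the average Choi error $\epsilon_U$. A one-qubit purification argument extends this to inputs entangled with the sender's $\reg Z$, and the bound is applied pointwise in $U$ before averaging.

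For hiding you take a slightly different route. The paper applies \cref{cor:efi} twice to pass from $\rho^U_{\reg{BR}}$ to the decorrelated challenge $\id_{\reg B}/2\otimes\rho^U_{\reg R}$. Its $\Phi^+$ branch is $\rho^U_{\reg R}\otimes\omega^U_{\reg Z}$, and the paper ends with $16$ times the EFI bound. You instead post-select directly on the EFI mixed challenge. Since $\id_{\reg{BR}}/2^{1+2\lambda}$ is already a product across $\reg B:\reg R$, its $\Phi^+$ branch has probability exactly $1/4$, as in the real case, and leaves $\id_{\reg R}/2^{2\lambda}\otimes\omega^U_{\reg Z}$ independently of the message. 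This is valid, uses one application of \cref{cor:efi} per encoding, and gives $8$ times the EFI bound. The paper's intermediate state buys only the interpretation as an honest encoding of a maximally mixed message.

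Write the comparison state with its $\omega^U_{\reg Z}$ factor. It is the same state for the chosen message and for zero (after substituting $\proj0_{\reg M}$), and that is what closes the triangle inequality. For queries made before commitment, the justification is simply that, for each fixed $U$, message preparation acts on registers disjoint from the challenge. No appeal to the recording analysis is needed, since \cref{cor:efi} is used as a black box.

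There are three small corrections.

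\begin{itemize}
\item Drop the Pauli-frame paragraph. The outcome $k$ depends on the message, so you cannot rotate by $P_k^\dagger$ in advance; only the post-selection route works.
\item The purification argument does not avoid the dimension factor. Writing $\ket\omega=(\id\otimes A)\ket{\Phi^+}$ gives $\|A\|^2\le2$. For general channels this is tight, for instance with input $\ket0_{\reg M}$ and a measure-and-prepare channel that deviates from the constant channel only on $\proj0$. The factor is harmless only because the advantage $\Pr[\text{correct}]-1/2$ is a quarter of the trace-norm difference, so $\frac14(2\epsilon_U+2\epsilon_U)=\epsilon_U$. With the difference-of-acceptance-probabilities normalization you would get $2\epsilon_U$ instead.
\item State explicitly how the swap-binding game handles verification. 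Rejection has the same probability in both worlds, and acceptance leaves $\reg W$ in $\ket0$. Hence the challenger's re-encoding is exactly $\mathcal E^U$ applied to the accepted $\reg M$, which is where your decoupling bound enters.
\end{itemize}
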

\begin{proof}[Proof sketch]
Replacing $\reg B$ by a maximally mixed qubit fixes the mixed challenge.
Applying \cref{cor:efi} before and after this replacement therefore also bounds the gap between $\rho_{\reg{BR}}^U$ and $\id_{\reg{B}}/2\otimes\rho_{\reg{R}}^U$.
To encode a chosen message using this EPR challenge, measure the message and $\reg B$ in the Bell basis.
The $\Phi^+$ outcome has probability $1/4$ for every $U$ and every message, and produces exactly its encoding on $\reg R$, including its correlations with the receiver's workspace.
For the decorrelated challenge it instead produces $\rho_{\reg{R}}^U$ independently of the message.
Comparing both the chosen message and zero with this same state proves hiding.

For binding, \cref{eq:decoupling-trace-bound} bounds the Choi state of the channel from $\reg M$ to $\reg H$ by its maximally mixed replacement.
Since $\reg M$ is one qubit, this trace-norm bound controls the channel uniformly over all inputs with a reference, losing only a factor of two.
It therefore bounds both states returned to a malicious sender in the swap-binding game, even after conditioning on a successful validity check.
\Cref{app:qubit-commitment} gives the channel estimate, the Bell-measurement identities, and the full security bounds.
\end{proof}

Both protocols use quantum communication and the same public oracle as the decoding task.
The one-qubit protocol sends $2\lambda$ qubits during commitment and another $\lambda$ qubits at opening, for $3\lambda$ qubits in total; its purpose here is to give a direct cryptographic use of the encoding.

\section*{Acknowledgments}
This work is supported by AM's faculty startup grant from Virginia Tech and, in part, by the Commonwealth Cyber Initiative, an investment in the advancement of cyber research, innovation, and workforce development.
The authors thank Charles Cao and Thomas Koutsikos for helpful discussions.

\section*{AI Disclosure}
The authors came up with the initial idea and approach for the forward-access result. Claude Opus 5 then helped turn the authors' proof sketch for this case into a complete proof, and also helped with the calculations for the inverse-access case. For the remaining two access models and for some of the applications, the authors wrote proof sketches and used ChatGPT Sol 6.1 to help draft more rigorous proofs from them. The AI-generated drafts of these proofs were often long and complicated. The authors replaced them with simpler proofs based on elementary identities, and these are the proofs in the paper. Claude Opus 5.5 and ChatGPT Sol 6.1 were also used throughout for revising and editing the manuscript, including finding and fixing errors in the initial proofs. The authors checked the correctness of all proofs and take full responsibility for the content of this paper.

\Needspace{5\baselineskip}
\phantomsection
\addcontentsline{toc}{section}{References}
\bibliographystyle{alpha}
\begingroup
\small
\raggedright
\bibliography{refs_bhrd}
\endgroup

\clearpage
\appendix
\addtocontents{toc}{\protect\setcounter{tocdepth}{1}}
\section*{Deferred proofs}

These appendices give the full case calculations, query-bound constants, and application proofs.

\section{Recording-oracle calculations}
\label{app:deferred-proofs}
\label{app:recording-calculations}

The calculations below establish the relation normalization, swap identity, and commutator bounds used in \cref{sec:trace-distance-proof-sketch}.
The commutators are expanded separately for the four creation maps and their adjoints.

\subsection{Elementary norm and counting bounds}
\label{app:elementary}

\paragraph{Relation states.}
Let $\mathcal K=\operatorname{span}\{\ket e:e\in[2^n]^2\}$.
At relation size $\ell$, each relation register has state space $\operatorname{Sym}^{\ell}(\mathcal K)$, the subspace of $\mathcal K^{\otimes\ell}$ invariant under permutations.
The full relation registers are the direct sums of these spaces over $\ell\geq0$, and $\reg{E_0}$ has state space $\mathcal K$.
For $L=\{(x_1,y_1),\ldots,(x_\ell,y_\ell)\}$, use the normalized relation state of Ma and Huang~\cite[Section~2.1]{MH24}:
\begin{equation}
  \ket L=\frac{\sum_{\pi\in\Sym_\ell}\ket{x_{\pi(1)},y_{\pi(1)},\ldots,x_{\pi(\ell)},y_{\pi(\ell)}}}{\sqrt{\ell!\prod_{e\in[2^n]^2}\operatorname{num}(L,e)!}}. \label{eq:relation-state}
\end{equation}
Here $\Sym_\ell$ is the set of permutations of the $\ell$ positions.
For $\ell=0$, write $\ket\emptyset$.

In \cref{eq:relation-state}, each distinct ordering occurs $\prod_e\operatorname{num}(L,e)!$ times.
There are $\ell!/\prod_e\operatorname{num}(L,e)!$ such orderings.
The numerator therefore has squared norm $\ell!\prod_e\operatorname{num}(L,e)!$, which verifies the normalization.
Distinct multisets have disjoint supports, and a symmetric vector has equal coefficients on all orderings of a given multiset.
Thus these states form an orthonormal basis of the symmetric subspace.

\paragraph{Oracle norms.}
The maps $A,B$ are isometries on their defining domains, as in Ma--Huang~\cite[Claim~14]{MH24}.
For $A$, the final query $y$ identifies the unique added pair $(x,y)$; removing it recovers $x,L,R$.
Thus distinct input basis vectors have orthogonal images, and
\[
  \|A\ket x\ket L\ket R\|_2^2 =\frac{2^n-|\Img(L\cup R)|}{2^n-|\Img(L\cup R)|}=1.
\]
For $B$, the final query $x$ identifies the unique added pair in $R$, and
\[
  \|B\ket y\ket L\ket R\|_2^2 =\frac{2^n-|\Dom(L\cup R)|}{2^n-|\Dom(L\cup R)|}=1.
\]
Extend $A,B$ by zero outside their defining domains.
Then $AA^\dagger,BB^\dagger$ are orthogonal projectors and
\[
  \|A\|=\|A^\dagger\|=\|B\|=\|B^\dagger\|=1.
\]
Ma--Huang~\cite[Claim~15]{MH24} prove that $V$ is a partial isometry.
For the norm bound used here, the first term in $V$ is in the image of $A$ and the second is orthogonal to it.
Hence
\begin{align*}
  \|Vv\|_2^2 &=\|A(\id-BB^\dagger)v\|_2^2 +\|(\id-AA^\dagger)B^\dagger v\|_2^2\\
  &\leq\|(\id-BB^\dagger)v\|_2^2+\|B^\dagger v\|_2^2\\
  &=\|(\id-BB^\dagger)v\|_2^2+\|BB^\dagger v\|_2^2 =\|v\|_2^2.
\end{align*}
Here $\|B^\dagger v\|_2^2=\bra vBB^\dagger\ket v=\|BB^\dagger v\|_2^2$, and the projectors $BB^\dagger$ and $\id-BB^\dagger$ have orthogonal images.
The adjoint obeys the same norm bound:
\begin{align*}
  \|V^\dagger v\|_2 &=\sup_{\|w\|_2=1}|\braket{w}{V^\dagger v}| =\sup_{\|w\|_2=1}|\braket{Vw}{v}|\\
  &\leq\sup_{\|w\|_2=1}\|Vw\|_2\|v\|_2 \leq\|v\|_2.
\end{align*}
Swapping $\reg{E_L},\reg{E_R}$ maps $A,B,V$ to $\bar A,\bar B,\bar V$, respectively.
The swap is unitary, so
\[
  \|\bar A\|=\|\bar A^\dagger\|=\|\bar B\|=\|\bar B^\dagger\|=1,\qquad \|\bar V\|,\|\bar V^\dagger\|\leq1.
\]

\paragraph{Insertion and its norm.}
For $|L|=\ell$, let $P_{\ell+1}$ be the orthogonal projector onto $\operatorname{Sym}^{\ell+1}(\mathcal K)$ and define
\[
  J=\sqrt{\ell+1}\,P_{\ell+1}\big|_{\mathcal K\otimes\operatorname{Sym}^{\ell}(\mathcal K)}.
\]
The first factor is $\reg{E_0}$, and the remaining $\ell$ factors form $\reg{E_L}$.
The map acts trivially on $\reg{E_R}$ and the other registers.
The fraction of orderings of $L\cup\{e\}$ with first pair $e$ is $(\operatorname{num}(L,e)+1)/(\ell+1)$.
By the normalization above, its square root is the overlap of $\ket e\ket L$ with $\ket{L\cup\{e\}}$.
The projection $P_{\ell+1}$ is this overlap times $\ket{L\cup\{e\}}$; multiplying by $\sqrt{\ell+1}$ gives
\[
  J\ket e_\reg{E_0}\ket L_\reg{E_L}=\sqrt{\operatorname{num}(L,e)+1}\ket{L\cup\{e\}}_\reg{E_L}.
\]
The factors are numbered $0,1,\ldots,\ell$, with $0$ corresponding to $\reg{E_0}$.
We write
\[
  \Sigma=\sum_{i=1}^{\ell}\operatorname{SWAP}_{0i}.
\]
Permuting the last $\ell$ factors permutes the summands of $\Sigma$, so $\Sigma$ preserves $\mathcal K\otimes\operatorname{Sym}^{\ell}(\mathcal K)$.
For each $i$, a permutation sending $0$ to $i$ acts as $\operatorname{SWAP}_{0i}$ on this subspace: removing that swap leaves a permutation of the last $\ell$ factors.
There are $\ell!$ such permutations for each $i$, so averaging all $(\ell+1)!$ permutations gives $P_{\ell+1}=(\id+\Sigma)/(\ell+1)$ on this subspace.
Since $P_{\ell+1}^2=P_{\ell+1}$, we obtain
\[
  J^\dagger J=(\ell+1)P_{\ell+1}\big|_{\mathcal K\otimes\operatorname{Sym}^{\ell}(\mathcal K)}=\id+\Sigma.
\]
These definitions are applied separately on each subspace of fixed relation size.

For $|L|=\ell$, the orthogonal projection gives
\begin{align*}
  \|Jv\|_2^2 &=(\ell+1)\|P_{\ell+1}v\|_2^2 \leq(\ell+1)\|v\|_2^2,\\
  \|v\|_2^2-\|P_{\ell+1}v\|_2^2 &=\|(\id-P_{\ell+1})v\|_2^2\geq0.
\end{align*}
If $v=\sum_{\ell=0}^s v_\ell$, where $v_\ell$ has $|L|=\ell$, then
\[
  \|Jv\|_2^2=\sum_{\ell=0}^s\|Jv_\ell\|_2^2 \leq\sum_{\ell=0}^s(\ell+1)\|v_\ell\|_2^2 \leq(s+1)\|v\|_2^2.
\]
Different $\ell$ give orthogonal outputs, since $Jv_\ell$ has $|L|=\ell+1$.

\paragraph{Amplitude and interference bounds.}
We use the following square-root estimate for each denominator change.
For $m\geq2$,
\begin{align}
  f(m):=\sqrt{\frac m{m-1}}-1 &=\frac{m/(m-1)-1}{\sqrt{m/(m-1)}+1}\nonumber\\
  &=\frac1{(m-1)(\sqrt{m/(m-1)}+1)} \leq\frac1{2(m-1)}. \label{eq:ratio}
\end{align}
The last inequality uses $\sqrt{m/(m-1)}\geq1$.

When several terms have the same output label, their workspace vectors can interfere.
For vectors $v_1,\dots,v_m$,
\begin{align}
  \left\|\sum_{i=1}^m v_i\right\|_2^2 &=\sum_{i,j=1}^m\operatorname{Re}\braket{v_i}{v_j}\nonumber\\
  &\leq\sum_{i,j=1}^m\|v_i\|_2\|v_j\|_2\nonumber\\
  &\leq\frac12\sum_{i,j=1}^m(\|v_i\|_2^2+\|v_j\|_2^2) =m\sum_{i=1}^m\|v_i\|_2^2. \label{eq:matching-sums}
\end{align}
The two inequalities use Cauchy--Schwarz and $2ab\leq a^2+b^2$, respectively.
If $\ket\lambda$ are orthonormal labels and $\lambda_\mu$ is the label of the term indexed by $\mu$, then
\begin{align}
  \left\|\sum_\mu\ket{\lambda_\mu}\ket{v_\mu}\right\|_2^2 &=\sum_{\substack{\mu,\widetilde\mu\\ \lambda_\mu=\lambda_{\widetilde\mu}}} \operatorname{Re}\braket{v_\mu}{v_{\widetilde\mu}}\nonumber\\
  &\leq\frac12\sum_{\substack{\mu,\widetilde\mu\\ \lambda_\mu=\lambda_{\widetilde\mu}}} \left(\|v_\mu\|_2^2+\|v_{\widetilde\mu}\|_2^2\right)\nonumber\\
  &=\sum_\mu\|v_\mu\|_2^2 \sum_{\substack{\widetilde\mu\\ \lambda_{\widetilde\mu}=\lambda_\mu}}1. \label{eq:matching-labels}
\end{align}
The last inner sum counts the indices with the same final label.
The two squared-norm sums agree after interchanging $\mu,\widetilde\mu$.

\paragraph{Rank-one trace norm.}
For any vectors $v,w$,
\[
  (\ketbra v w)^\dagger\ketbra v w=\|v\|_2^2\proj w, \qquad \|\ketbra v w\|_1=\|v\|_2\|w\|_2.
\]
The only possible nonzero singular value is $\|v\|_2\|w\|_2$.

\subsection{\texorpdfstring{Proof of \cref{lem:insertion-error}}{Proof of the insertion marginal bound}}
\begin{proof}
\label{app:insertion-error-bound-proof}
We express matrix elements of the swap sum $\Sigma$ as deletion followed by insertion.

\paragraph{Matrix entries of the swap sum.}
For a pair $e$, define $D_e\ket L=\sqrt{\operatorname{num}(L,e)}\ket{L\setminus\{e\}}$, with value zero if $e\notin L$.
Then $D_e^\dagger D_e\ket L=\operatorname{num}(L,e)\ket L$.
The matrix entry $\bra{e'}_{\reg{E_0}}\Sigma\ket e_{\reg{E_0}}$ replaces $e'$ by $e$ in $L$, summed over all positions.
The normalized insertion and deletion weights give
\begin{equation}
  \bra{e'}_{\reg{E_0}}\Sigma\ket e_{\reg{E_0}}=D_e^\dagger D_{e'}. \label{eq:swap-deletion}
\end{equation}
\subsubsection{Verification of \cref{eq:swap-deletion}}
For $|L|=\ell\geq1$, the fraction of orderings whose first pair is $e$ is $\operatorname{num}(L,e)/\ell$.
Grouping by the first pair gives
\[
  \ket L=\sum_{e\in L}\sqrt{\frac{\operatorname{num}(L,e)}\ell} \ket e\ket{L\setminus\{e\}},
\]
where the sum is over distinct pairs.
Also,
\[
  \bra{e'}_\reg{E_0}\Sigma\ket e_\reg{E_0} =\sum_{i=1}^\ell(\ketbra e{e'})_i.
\]
For each swap, taking the matrix entry from $e$ to $e'$ in $\reg{E_0}$ replaces $e'$ by $e$ in the swapped position of $\reg{E_L}$.
The $\ell$ terms have equal matrix entries because the relation states are symmetric.
For $e\neq e'$, $e'\in L$, and $L'=(L\setminus\{e'\})\cup\{e\}$,
\begin{align*}
  \bra{L'}\sum_{i=1}^\ell(\ketbra e{e'})_i\ket L &=\ell\bra{L'}(\ketbra e{e'})_1\ket L\\
  &=\ell\sqrt{\frac{\operatorname{num}(L,e')}\ell} \sqrt{\frac{\operatorname{num}(L',e)}\ell}\\
  &=\sqrt{\operatorname{num}(L,e')\bigl(\operatorname{num}(L,e)+1\bigr)}.
\end{align*}
The second line uses the decomposition by the first pair, and the last uses $\operatorname{num}(L',e)=\operatorname{num}(L,e)+1$.
For $e=e'$,
\[
  \bra L\sum_{i=1}^\ell(\proj e)_i\ket L =\ell\,\frac{\operatorname{num}(L,e)}\ell=\operatorname{num}(L,e).
\]
All other matrix entries are zero.
Applying $D_{e'}$ and then $D_e^\dagger$ gives these same coefficients by their definitions.
For $\ell=0$, both $\Sigma$ and $D_{e'}\ket\emptyset$ vanish.
Thus $\bra{e'}_{\reg{E_0}}\Sigma\ket e_{\reg{E_0}}=D_e^\dagger D_{e'}$ for every relation size.

\paragraph{Counting the deletion weight.}
Let $\widetilde e_0=(x_{0,\widetilde c},(\widetilde r,h))$.
Tracing out $\reg H$ forces the same $h$ in the ket and bra.
Using \cref{eq:swap-deletion} and cyclicity of the partial trace over $\reg{E_L}$ gives
\[
  K_\phi=\frac1{2^{n+1}}\sum_{c,r,\widetilde c,\widetilde r,h}\Tr_{\reg{E_LE_R}}\left[\ketbra{D_{\widetilde e_0}\alpha_{c,r}}{D_{e_0}\alpha_{\widetilde c,\widetilde r}}\right].
\]
For any vectors $v,w$, trace-norm duality and $\|\ketbra vw\|_1=\|v\|_2\|w\|_2$ give $\|\Tr_E\ketbra vw\|_1\leq\|v\|_2\|w\|_2$.
The deletion weights have a common bound.
For fixed $c,r$, the diagonal operator $D_e^\dagger D_e$ counts copies of $e$ in $L$.
Summing over all challenge pairs counts a subset of $L$, so
\[
  \sum_{\widetilde c,\widetilde r,h}\|D_{\widetilde e_0}\alpha_{c,r}\|_2^2\leq q\|\alpha_{c,r}\|_2^2.
\]
\paragraph{Marginal and norm bounds.}
We apply the rank-one trace bound and then $2ab\leq a^2+b^2$.
The two squared sums are equal after interchanging $(c,r)$ with $(\widetilde c,\widetilde r)$, and the deletion bound applies to each.
Thus
\begin{align*}
  \|K_\phi\|_1&\leq\frac1{2^{n+1}}\sum_{c,r,\widetilde c,\widetilde r,h}\|D_{\widetilde e_0}\alpha_{c,r}\|_2\|D_{e_0}\alpha_{\widetilde c,\widetilde r}\|_2\\
  &\leq\frac q{2^{n+1}}\sum_{c,r}\|\alpha_{c,r}\|_2^2=\frac q{2^{|\reg H|}}\|\phi\|_2^2.
\end{align*}
The last equality is \cref{eq:coefficient-energy}.
Taking the trace in \cref{eq:marginal-identity} gives $\|J\phi\|_2^2=\|\phi\|_2^2+\Tr K_\phi$.
The second bound follows from $|\Tr K_\phi|\leq\|K_\phi\|_1$.
\end{proof}

\subsection{Commutator bounds}
\label{app:commutators}

This subsection proves the eight component bounds collected in \cref{eq:eight-epsilons}.
\Cref{lem:commutator-a,lem:commutator-b} treat creation in the two records, and \cref{lem:commutator-a-dagger,lem:commutator-b-dagger} treat the corresponding deletions.
The last four lemmas treat the barred maps, whose destination records are exchanged.
In every case, $[X,J]=XJ-JX$, $0\leq q\leq2^n-2$, and $\ket\phi$ has the form in \cref{eq:mixed-experiment-state-family}, with coefficient vectors independent of $h$ and supported on $|L\cup R|\leq q$.
Here $\ket a,\ket b,\ket c$ denote error vectors; $c$ in summation indices is the challenge bit.
Old labels mean the input challenge, radiation, hidden, query, and relation labels; final labels mean the output basis labels outside the workspace.

Each proof expands the two orders and cancels the terms with unchanged normalization and allowed labels.
The remaining vectors describe the normalization change, the excluded creation or blocked deletion, and, for $A^\dagger,\bar B^\dagger$, deletion of $e_0$ itself.
We bound the normalization term using \cref{eq:ratio} and the norm of $J$.
For the other terms, we group contributions with the same query, hidden, and relation labels, count the input labels yielding each such output, and apply \cref{eq:matching-labels} to their workspace vectors.
\Cref{eq:coefficient-energy,eq:diagonal-energy} then account for the sums over the radiation and hidden labels.

\begingroup
\allowdisplaybreaks[1]

\begin{lemma}[Forward creation]
  \label{lem:commutator-a}
Let $0\leq q\leq2^n-2$ and let $\ket\phi$ have the form in \cref{eq:mixed-experiment-state-family}, with coefficient vectors independent of $h$ and supported on $|L\cup R|\leq q$.
Then
\begin{equation}
  \|[A,J]\phi\|_2^2\leq\left(\frac2{2^n-q}+\frac{q+2}{4(2^n-q-1)^2}\right)\|\phi\|_2^2. \label{eq:creation-summary}
\end{equation}
\end{lemma}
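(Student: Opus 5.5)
We compare the two orders $AJ$ and $JA$ directly on the basis vectors appearing in \cref{eq:mixed-experiment-state-family}, namely $\ket x_{\reg Q}\ket{e_0}_{\reg{E_0}}\ket L_{\reg{E_L}}\ket R_{\reg{E_R}}$ with $e_0=(x_{0,c},(r,h))$ and $y_0=(r,h)$. Write $m=2^n-|\Img(L\cup R)|\geq 2^n-q$.

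\emph{The two orders.} In the order $JA$, creation spreads over all $y\notin\Img(L\cup R)$ with amplitude $m^{-1/2}$. Insertion then contributes the factor $\sqrt{\operatorname{num}(L\cup\{(x,y)\},e_0)+1}$ and produces the record $L\cup\{(x,y),e_0\}$.

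In the order $AJ$, insertion first gives $\sqrt{\operatorname{num}(L,e_0)+1}\,\ket{L\cup\{e_0\}}$. Creation then runs over $y\notin\Img(L\cup R)\cup\{y_0\}$ with denominator $m'=2^n-|\Img(L\cup R\cup\{e_0\})|$. Here $m'=m$ if $y_0\in\Img(L\cup R)$, and $m'=m-1$ otherwise.

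For every $y\neq y_0$ the insertion coefficients agree, since $(x,y)\neq e_0$, and the final labels coincide. So these terms cancel except for the ratio of normalizations. The commutator therefore splits as $[A,J]\phi=\ket a+\ket b$, with the two pieces below. I will bound $\|a\|_2^2$ and $\|b\|_2^2$ separately and then use $\|a+b\|_2^2\leq 2\|a\|_2^2+2\|b\|_2^2$, or an orthogonality argument where the labels allow it, to reach the two summands in \cref{eq:creation-summary}.

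\emph{Normalization term $\ket a$.} This is supported on branches where $y_0$ is fresh. Its amplitude is $(\sqrt{m/(m-1)}-1)$ times the $AJ$-branch, which \cref{eq:ratio} bounds by $1/(2(m-1))\leq 1/(2(2^n-q-1))$. This term can be written as a scalar (label-dependent but bounded) multiple of $AJ$ applied to part of $\phi$. Since $A$ is a contraction and $\|J v\|_2^2\leq(q+2)\|v\|_2^2$ from the elementary bounds in \cref{app:elementary} (relation size at most $q+1$ after insertion), we get $\|a\|_2^2\leq \frac{q+2}{4(2^n-q-1)^2}\|\phi\|_2^2$. Lemma~\ref{lem:insertion-error} could sharpen this, but the crude bound already matches the statement.

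\emph{Excluded term $\ket b$.} This is the $y=y_0$ term present only in $JA$, with amplitude $m^{-1/2}\sqrt{\operatorname{num}(L,e_0)+[x=x_{0,c}]+1}$. Its final labels are the query $y_0=(r,h)$, the hidden $h$, and the record $L\cup\{(x,y_0),e_0\}$. I expect this to be the main obstacle, because different $(x,c,L)$ can produce the same final record and their workspace vectors $\alpha_{c,r,x,L,R}$ may interfere.

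The query label fixes $r,h$, so only the choice of which of the pairs with output $y_0$ is the challenge remains. For the dominant branches, where $y_0\notin\Img(L\cup R)$, this is at most two choices: $(x,c)$ versus swapping the roles of the two pairs with output $y_0$. Apply \cref{eq:matching-labels} with this multiplicity, absorb the insertion coefficient (at most $\sqrt 2$ on these branches), and sum over $h$. This gives $\|b\|_2^2\lesssim \frac{c_0}{2^{n+1}m}\sum_{c,r,x,L,R}\|\alpha_{c,r,x,L,R}\|_2^2\cdot 2^{|\reg H|}$ for a small constant $c_0$. By \cref{eq:coefficient-energy} this is $\frac{c_0}{m}\|\phi\|_2^2$, and tracking constants should give $\frac{2}{2^n-q}$ after combining with $\ket a$. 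The branches where $y_0$ is already recorded need a separate count: there the multiplicity can grow, but the denominator is unchanged, so only the excluded term survives. I would handle these by bounding the number of recorded pairs with output $y_0$ via the same $D_e$-counting used in the proof of \cref{lem:insertion-error}.
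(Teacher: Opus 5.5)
Your decomposition is the paper's: on branches with $y_0\notin\Img(L\cup R)$ the two orders differ by a normalization term over $y\neq y_0$ and by the $y=y_0$ term that only $JA$ produces. Your bound on the normalization term is valid. The amplitude difference is $f(m)/\sqrt m$, i.e.\ $f(m)$ times the $JA$ amplitude. Writing $\ket a=AJD\phi$ with $D$ diagonal of norm at most $1/[2(2^n-q-1)]$ even gives $q+1$ in place of $q+2$. However, the plan as written does not reach the constants the lemma asserts, and it misreads one case.

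\emph{No separate case for a recorded $y_0$.} Since $A$ only creates $y\notin\Img(L\cup R)$, suppose $y_0\in\Img(L\cup R)$. Then in both orders $y$ ranges over the same set, both denominators equal $m$, and both insertion weights equal $\sqrt{\operatorname{num}(L,e_0)+1}$, so $AJ$ and $JA$ agree exactly on these branches. For the same reason $\operatorname{num}(L,e_0)=0$ on the excluded term. Its amplitude is therefore exactly $w_{c,x}/\sqrt m$, with $w_{c,x}=\sqrt2$ iff $x=x_{0,c}$. The $D_e$-counting you propose is bounding zero and should be dropped.

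\emph{The constant $2$.} Multiplying ``at most two preimages'' by ``weight squared at most $2$'' in \cref{eq:matching-labels} gives $4/(2^n-q)$, not $2/(2^n-q)$. You need the joint bound
$w_\mu^2\cdot\#\{\widetilde\mu:\lambda_{\widetilde\mu}=\lambda_\mu\}\leq2$.
Weight $\sqrt2$ occurs only when $x=x_{0,c}$. Then both pairs with output $y_0$ in the final record equal $e_0$, so the final label has a single preimage. When the two pairs differ, there are at most two preimages (choose which pair is $e_0$; its input must be $x_{0,0}$ or $x_{0,1}$). Each such preimage has $x\neq x_{0,c}$ and weight $1$.

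\emph{Combining the two pieces.} Using $\|a+b\|_2^2\leq2\|a\|_2^2+2\|b\|_2^2$ would double both summands and miss the statement, so you must prove $\braket ab=0$. In every final basis label of $\ket a$, the record in $\reg{E_L}$ has exactly one pair whose output equals the query label. In $\ket b$ it has two, namely $e_0$ and $(x,y_0)$. These are distinct relation labels.

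With these three points your argument becomes the paper's proof.
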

\begin{proof}
Write $k=|\Img(L\cup R)|\leq q$.
Insertion changes the image size as follows:
\[
  |\Img((L\cup\{e_0\})\cup R)| =\begin{cases} k+1,&y_0\notin\Img(L\cup R),\\ k,&y_0\in\Img(L\cup R). \end{cases}
\]
\paragraph{The two orders.}
For the branch $y=y_0\notin\Img(L\cup R)$, the insertion weight is
\[
  w_{c,x}:=\begin{cases}\sqrt2,&x=x_{0,c},\\1,&x\neq x_{0,c},\end{cases}
\]
The common output vector is
\[
  \ket\beta=\ket y_\reg Q\ket{\alpha_{c,r,x,L,R}}_\reg W \ket h_\reg H\ket{L\cup\{e_0,(x,y)\}}_\reg{E_L}\ket R_\reg{E_R}.
\]
By the definitions of $A,J$,
\begin{align*}
  AJ\ket\phi ={}&\frac1{\sqrt{2^{n+1}}} \sum_{\substack{c,r,h,x,y,L,R\\y_0\notin\Img(L\cup R)\\ y\notin\Img(L\cup R),\ y\neq y_0}} \frac{\ket\beta}{\sqrt{2^n-k-1}}\\*
  &+\frac1{\sqrt{2^{n+1}}} \sum_{\substack{c,r,h,x,y,L,R\\y_0\in\Img(L\cup R)\\ y\notin\Img(L\cup R)}} \frac{\sqrt{\operatorname{num}(L,e_0)+1}}{\sqrt{2^n-k}}\ket\beta,\\
  JA\ket\phi ={}&\frac1{\sqrt{2^{n+1}}} \sum_{\substack{c,r,h,x,y,L,R\\y_0\notin\Img(L\cup R)\\ y\notin\Img(L\cup R),\ y\neq y_0}} \frac{\ket\beta}{\sqrt{2^n-k}}\\*
  &+\frac1{\sqrt{2^{n+1}}} \sum_{\substack{c,r,h,x,y,L,R\\y_0\in\Img(L\cup R)\\ y\notin\Img(L\cup R)}} \frac{\sqrt{\operatorname{num}(L,e_0)+1}}{\sqrt{2^n-k}}\ket\beta\\*
  &+\frac1{\sqrt{2^{n+1}}} \sum_{\substack{c,r,h,x,y,L,R\\y_0\notin\Img(L\cup R)\\y=y_0}} \frac{w_{c,x}}{\sqrt{2^n-k}}\ket\beta.
\end{align*}
When $y_0\in\Img(L\cup R)$, the two orders have the same denominator and insertion weight, so their second sums cancel.
When $y_0\notin\Img(L\cup R)$, insertion forbids $y=y_0$ in $AJ$, whereas $JA$ can include it.
In this last case, the inserted pairs coincide exactly when $x=x_{0,c}$, giving the weight $\sqrt2$.
Since
\[
  \frac1{\sqrt{2^n-k-1}}-\frac1{\sqrt{2^n-k}} =\frac{f(2^n-k)}{\sqrt{2^n-k}},
\]
we have
\begin{align*}
  [A,J]\ket\phi&=\ket a-\ket c,\\
  \ket a&=\frac1{\sqrt{2^{n+1}}} \sum_{\substack{c,r,h,x,y,L,R\\y_0\notin\Img(L\cup R)\\ y\notin\Img(L\cup R),\ y\neq y_0}} \frac{f(2^n-k)}{\sqrt{2^n-k}}\ket\beta,\\
  \ket c&=\frac1{\sqrt{2^{n+1}}} \sum_{\substack{c,r,h,x,y,L,R\\y_0\notin\Img(L\cup R)\\y=y_0}} \frac{w_{c,x}}{\sqrt{2^n-k}}\ket\beta.
\end{align*}

\paragraph{The normalization term.}
Let $\ket{a'}$ be the sum defining $\ket a$ before applying $J$:
\[
  \ket{a'}=\frac1{\sqrt{2^{n+1}}} \sum_{\substack{c,r,h,x,y,L,R\\y_0\notin\Img(L\cup R)\\ y\notin\Img(L\cup R),\ y\neq y_0}} \frac{f(2^n-k)}{\sqrt{2^n-k}} \ket y\ket{\alpha_{c,r,x,L,R}}\ket h\ket{e_0} \ket{L\cup\{(x,y)\}}\ket R.
\]
Here $\ket a=J\ket{a'}$, and the added pair is the unique pair with output $y$.
Removing it and reading $e_0$ recovers all old labels, so the terms in $\ket{a'}$ have orthogonal labels.
There are $2^n-k-1$ allowed $y$ values.
The relation in $\reg{E_L}$ has size at most $q+1$ in $\ket{a'}$.
Using $f(2^n-k)\leq1/[2(2^n-q-1)]$, \cref{eq:coefficient-energy}, and the bound on $J$, we get
\begin{align*}
  \|a'\|_2^2 &=\frac1{2^{n+1}} \sum_{\substack{c,r,h,x,L,R\\y_0\notin\Img(L\cup R)}} \frac{2^n-k-1}{2^n-k}f(2^n-k)^2 \|\alpha_{c,r,x,L,R}\|_2^2\\
  &\leq\frac1{4(2^n-q-1)^2}\cdot\frac1{2^{n+1}} \sum_{c,r,h,x,L,R}\|\alpha_{c,r,x,L,R}\|_2^2\\
  &=\frac{\|\phi\|_2^2}{4(2^n-q-1)^2},\\
  \|a\|_2^2 &\leq(q+2)\|a'\|_2^2 \leq\frac{q+2}{4(2^n-q-1)^2}\|\phi\|_2^2.
\end{align*}
\paragraph{The excluded output.}
For $\ket c$, write
\[
  \mu=(c,r,h,x,L,R),\quad \lambda_\mu=(y_0,h,L\cup\{e_0,(x,y_0)\},R),\quad w_\mu=w_{c,x}.
\]
The index $\mu$ ranges over the terms in the sum defining $\ket c$, with $\alpha_\mu=\alpha_{c,r,x,L,R}$ and $k_\mu=|\Img(L\cup R)|$.
The two pairs with output $y_0$ recover the old $L$ after their removal.
Choosing which is $e_0$ fixes $c,r,h,x$.
If the two pairs are distinct, there are at most two choices, both with $w_\mu=1$.
If they are equal, there is one choice with $w_\mu=\sqrt2$.
Hence
\[
  w_\mu^2\sum_{\substack{\widetilde\mu\\ \lambda_{\widetilde\mu}=\lambda_\mu}}1\leq2.
\]
By \cref{eq:matching-labels},
\begin{align*}
  \|c\|_2^2 &=\frac1{2^{n+1}}\sum_\lambda \left\|\sum_{\substack{\mu\\\lambda_\mu=\lambda}} \frac{w_\mu\alpha_\mu}{\sqrt{2^n-k_\mu}}\right\|_2^2\\
  &\leq\frac1{2^{n+1}}\sum_\mu \frac{w_\mu^2\|\alpha_\mu\|_2^2}{2^n-k_\mu} \sum_{\substack{\widetilde\mu\\ \lambda_{\widetilde\mu}=\lambda_\mu}}1\\
  &\leq\frac2{2^{n+1}} \sum_{\substack{c,r,h,x,L,R\\y_0\notin\Img(L\cup R)}} \frac{\|\alpha_{c,r,x,L,R}\|_2^2}{2^n-k}\\
  &\leq\frac2{2^n-q}\|\phi\|_2^2.
\end{align*}
In $\ket a$, the final relation $L$ contains one pair with output equal to the final query label; in $\ket c$, it contains two.
These are different relation labels, so
\[
  \braket a c=0,\qquad \|[A,J]\phi\|_2^2=\|a\|_2^2+\|c\|_2^2 \leq\left(\frac{q+2}{4(2^n-q-1)^2}+\frac2{2^n-q}\right) \|\phi\|_2^2.
\]
\end{proof}

\begin{lemma}[Inverse creation]
  \label{lem:commutator-b}
Let $0\leq q\leq2^n-2$ and let $\ket\phi$ have the form in \cref{eq:mixed-experiment-state-family}, with coefficient vectors independent of $h$ and supported on $|L\cup R|\leq q$.
Then
\[
  \|[B,J]\phi\|_2^2 \leq\left(\frac1{2^n-q} +\frac{q+1}{4(2^n-q-1)^2}\right)\|\phi\|_2^2.
\]
\end{lemma}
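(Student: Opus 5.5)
I will mirror the proof of \cref{lem:commutator-a}, with the roles of domain and image exchanged. Write $k=|\Dom(L\cup R)|\leq q$. The key difference from the forward case is that $B$ writes its new pair into $\reg{E_R}$, while $J$ inserts $e_0$ into $\reg{E_L}$. So the two relation records are modified in different registers. The only interaction between $B$ and $J$ is through $\Dom(L\cup R)$, which enters both the normalization $2^n-|\Dom|$ and the set of allowed fresh inputs. Inserting $e_0$ adds $x_{0,c}$ to the domain exactly when $x_{0,c}\notin\Dom(L\cup R)$. I will therefore split into the cases $x_{0,c}\in\Dom(L\cup R)$ and $x_{0,c}\notin\Dom(L\cup R)$.

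\textbf{Expanding the two orders.} In the case $x_{0,c}\in\Dom(L\cup R)$, both $BJ$ and $JB$ have denominator $\sqrt{2^n-k}$ and the same allowed set of $x$. Since $B$ does not touch $\reg{E_L}$, the insertion weight $\sqrt{\operatorname{num}(L,e_0)+1}$ is the same in both orders, so these terms cancel exactly. In the case $x_{0,c}\notin\Dom(L\cup R)$, there is no further weight issue, because $e_0$ goes to $L$ and $(x,y)$ goes to $R$ and they never merge into a multiplicity. I then obtain $[B,J]\ket\phi=\ket a-\ket c$ with two pieces:
\begin{itemize}
\item $\ket a$ is the normalization term over $x\notin\Dom(L\cup R)$ with $x\neq x_{0,c}$, carrying coefficient $f(2^n-k)/\sqrt{2^n-k}$.
\item $\ket c$ is the excluded-input term with $x=x_{0,c}$, present only in $JB$, carrying coefficient $1/\sqrt{2^n-k}$.
\end{itemize}

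\textbf{Bounding the normalization term.} Write $\ket a=J\ket{a'}$. The pre-insertion vector $\ket{a'}$ has orthogonal labels, because the final query $x$ identifies the unique added pair in $R$, and removing it together with reading $\reg{E_0}$ recovers all old labels. Summing over the at most $2^n-k-1$ allowed values of $x$ and using \cref{eq:ratio} gives $\|a'\|_2^2\leq\|\phi\|_2^2/[4(2^n-q-1)^2]$ via \cref{eq:coefficient-energy}. In $\ket{a'}$ the relation $L$ itself is unchanged, so $|L|\leq q$. The norm bound $\|Jv\|_2^2\leq(s+1)\|v\|_2^2$ then gives the factor $q+1$ rather than $q+2$.

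\textbf{Bounding the excluded term, and the main obstacle.} For $\ket c$, the final labels are $(x_{0,c},h,L\cup\{e_0\},R\cup\{(x_{0,c},y)\})$. The main obstacle is the coherent-interference count. Fix a final label. The query register gives $x_{0,c}$, hence $c$, and the new pair in $R$ is the unique pair there with input $x_{0,c}$, which fixes $y$ and the old $R$. The remaining freedom is which pair of the final $L$ was $e_0$. That pair must have input $x_{0,c}$ and hidden component $h$, and $h$ is fixed by the $\reg H$ label. It also cannot already lie in the old domain, so it is the unique pair of $L\cup\{e_0\}$ with input $x_{0,c}$. This fixes $r$, and the $\sqrt{\operatorname{num}+1}$ weight is $1$. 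Thus each final label arises from a single term, so \cref{eq:matching-labels} contributes multiplicity $1$. This yields $\|c\|_2^2\leq\frac1{2^{n+1}}\sum\|\alpha\|_2^2/(2^n-k)\leq\|\phi\|_2^2/(2^n-q)$ by \cref{eq:coefficient-energy}. This factor of one, versus two in \cref{lem:commutator-a}, is the point I would check most carefully.

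\textbf{Combining.} Finally, $\braket ac=0$: in $\ket a$ the final query input $x$ is not in the domain of $L$, whereas in $\ket c$ it is. Adding the two squared norms gives the stated bound.
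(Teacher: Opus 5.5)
Your proposal is correct and follows the paper's proof essentially step for step. The matching steps are: the exact cancellation when $x_{0,c}\in\Dom(L\cup R)$; the split $[B,J]\ket\phi=\ket a-\ket c$; the factor $q+1$ from applying $J$ to a record whose $L$ is unchanged by $B$; the count showing each final label of $\ket c$ has a unique preimage, which gives $1/(2^n-q)$ rather than the factor $2$ of the forward case; and orthogonality of $\ket a$ and $\ket c$ according to whether the final query input lies in the domain of the final $L$.
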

\begin{proof}
We relabel the query-basis index $x$ in the expansion of $\alpha_{c,r}$ as $y$.
Write $d=|\Dom(L\cup R)|\leq q$.
Insertion changes the domain size as follows:
\[
  |\Dom((L\cup\{e_0\})\cup R)| =\begin{cases} d+1,&x_{0,c}\notin\Dom(L\cup R),\\ d,&x_{0,c}\in\Dom(L\cup R). \end{cases}
\]
\paragraph{The two orders.}
The common output vector is
\[
  \ket\beta=\ket x_\reg Q\ket{\alpha_{c,r,y,L,R}}_\reg W \ket h_\reg H\ket{L\cup\{e_0\}}_\reg{E_L} \ket{R\cup\{(x,y)\}}_\reg{E_R}.
\]
The definitions of $B,J$ give
\begin{align*}
  BJ\ket\phi ={}&\frac1{\sqrt{2^{n+1}}} \sum_{\substack{c,r,h,x,y,L,R\\x_{0,c}\notin\Dom(L\cup R)\\ x\notin\Dom(L\cup R),\ x\neq x_{0,c}}} \frac{\ket\beta}{\sqrt{2^n-d-1}}\\*
  &+\frac1{\sqrt{2^{n+1}}} \sum_{\substack{c,r,h,x,y,L,R\\x_{0,c}\in\Dom(L\cup R)\\ x\notin\Dom(L\cup R)}} \frac{\sqrt{\operatorname{num}(L,e_0)+1}}{\sqrt{2^n-d}}\ket\beta,\\
  JB\ket\phi ={}&\frac1{\sqrt{2^{n+1}}} \sum_{\substack{c,r,h,x,y,L,R\\x_{0,c}\notin\Dom(L\cup R)\\ x\notin\Dom(L\cup R),\ x\neq x_{0,c}}} \frac{\ket\beta}{\sqrt{2^n-d}}\\*
  &+\frac1{\sqrt{2^{n+1}}} \sum_{\substack{c,r,h,x,y,L,R\\x_{0,c}\in\Dom(L\cup R)\\ x\notin\Dom(L\cup R)}} \frac{\sqrt{\operatorname{num}(L,e_0)+1}}{\sqrt{2^n-d}}\ket\beta\\*
  &+\frac1{\sqrt{2^{n+1}}} \sum_{\substack{c,r,h,x,y,L,R\\x_{0,c}\notin\Dom(L\cup R)\\x=x_{0,c}}} \frac{\ket\beta}{\sqrt{2^n-d}}.
\end{align*}
When $x_{0,c}\in\Dom(L\cup R)$, the second sums cancel.
Otherwise insertion forbids $x=x_{0,c}$ in $BJ$, whereas $JB$ can include it.
Since $B$ adds its pair to $R$, it does not change the insertion weight in $L$.
Since
\[
  \frac1{\sqrt{2^n-d-1}}-\frac1{\sqrt{2^n-d}} =\frac{f(2^n-d)}{\sqrt{2^n-d}},
\]
we have
\begin{align*}
  [B,J]\ket\phi&=\ket a-\ket c,\\
  \ket a&=\frac1{\sqrt{2^{n+1}}} \sum_{\substack{c,r,h,x,y,L,R\\x_{0,c}\notin\Dom(L\cup R)\\ x\notin\Dom(L\cup R),\ x\neq x_{0,c}}} \frac{f(2^n-d)}{\sqrt{2^n-d}}\ket\beta,\\
  \ket c&=\frac1{\sqrt{2^{n+1}}} \sum_{\substack{c,r,h,x,y,L,R\\x_{0,c}\notin\Dom(L\cup R)\\x=x_{0,c}}} \frac{\ket\beta}{\sqrt{2^n-d}}.
\end{align*}

\paragraph{The normalization term.}
Before applying $J$, the added pair in $R$ is identified by $x$.
Let
\[
  \ket{a'}=\frac1{\sqrt{2^{n+1}}} \sum_{\substack{c,r,h,x,y,L,R\\x_{0,c}\notin\Dom(L\cup R)\\ x\notin\Dom(L\cup R),\ x\neq x_{0,c}}} \frac{f(2^n-d)}{\sqrt{2^n-d}} \ket x\ket{\alpha_{c,r,y,L,R}}\ket h\ket{e_0} \ket L\ket{R\cup\{(x,y)\}}.
\]
There are $2^n-d-1$ allowed $x$ values, and $\ket a=J\ket{a'}$.
Removing the unique pair with input $x$ from $R$ and reading $e_0$ recovers all old labels, so the terms in $\ket{a'}$ have orthogonal labels.
Since $B$ leaves $L$ unchanged, $J$ contributes at most $q+1$ to the squared norm.
Using $f(2^n-d)\leq1/[2(2^n-q-1)]$ and \cref{eq:coefficient-energy},
\begin{align*}
  \|a'\|_2^2 &=\frac1{2^{n+1}} \sum_{\substack{c,r,h,y,L,R\\x_{0,c}\notin\Dom(L\cup R)}} \frac{2^n-d-1}{2^n-d}f(2^n-d)^2 \|\alpha_{c,r,y,L,R}\|_2^2\\
  &\leq\frac1{4(2^n-q-1)^2}\cdot\frac1{2^{n+1}} \sum_{c,r,h,y,L,R}\|\alpha_{c,r,y,L,R}\|_2^2\\
  &\leq\frac{\|\phi\|_2^2}{4(2^n-q-1)^2},\\
  \|a\|_2^2 &\leq(q+1)\|a'\|_2^2\\
  &\leq\frac{q+1}{4(2^n-q-1)^2}\|\phi\|_2^2.
\end{align*}
\paragraph{The excluded input.}
In $\ket c$, the final query label $x_{0,c}$ fixes $c$.
The unique pair with this input in the final relation $L$ fixes $e_0$; the unique pair with this input in the final relation $R$ fixes $(x_{0,c},y)$.
Removing these pairs recovers the old records.
Therefore
\begin{align*}
  \|c\|_2^2 &=\frac1{2^{n+1}} \sum_{\substack{c,r,h,y,L,R\\x_{0,c}\notin\Dom(L\cup R)}} \frac{\|\alpha_{c,r,y,L,R}\|_2^2}{2^n-d}\\
  &\leq\frac1{2^n-q}\|\phi\|_2^2.
\end{align*}
In $\ket a$, the final relation $L$ has no pair with input equal to the final query label; in $\ket c$, it has one.
These are different relation labels, so
\[
  \braket a c=0,\qquad \|[B,J]\phi\|_2^2=\|a\|_2^2+\|c\|_2^2 \leq\left(\frac{q+1}{4(2^n-q-1)^2}+\frac1{2^n-q}\right) \|\phi\|_2^2.
\]
\end{proof}
\begin{lemma}[Inverse deletion]
  \label{lem:commutator-a-dagger}
Let $0\leq q\leq2^n-2$ and let $\ket\phi$ have the form in \cref{eq:mixed-experiment-state-family}, with coefficient vectors independent of $h$ and supported on $|L\cup R|\leq q$.
Then
\begin{equation}
  \|[A^\dagger,J]\phi\|_2 \leq\left(\frac{\sqrt q}{2(2^n-q)} +\frac1{2^{|\reg H|}}\sqrt{\frac{2^n}{2^n-q}} +\sqrt{\frac q{2^{|\reg H|}(2^n-q+1)}}\right)\|\phi\|_2. \label{eq:deletion-summary}
\end{equation}
\end{lemma}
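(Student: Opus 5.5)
We will follow the same template as \cref{lem:commutator-a,lem:commutator-b}: expand $A^\dagger J\ket\phi$ and $JA^\dagger\ket\phi$ on the basis vectors of \cref{eq:mixed-experiment-state-family}, cancel the terms with matching denominators and allowed labels, and bound the leftover vectors separately. The adjoint acts as follows. On $\ket y_{\reg Q}\ket{L}\ket R$, $A^\dagger$ removes a pair $(x,y)\in L$ whose output is $y$, provided $y$ occurs in $L\cup R$ exactly once. It returns $\ket x$ with amplitude $1/\sqrt{2^n-|\Img((L\setminus\{(x,y)\})\cup R)|}$, up to the normalization weight of removing that copy. Set $k=|\Img(L\cup R)|\le q$ for the mixed record. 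The two orders then differ in three ways, and each gives one of the three terms in \cref{eq:deletion-summary}.

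\emph{(i) Denominator change.} Suppose $y\neq y_0$ is deleted and $y_0\notin\Img(L\cup R)$. After insertion the residual image is one larger, so the two orders differ by the factor $f(2^n-k+1)$ from \cref{eq:ratio}. We write this difference as $J\ket{a'}$, where $\ket{a'}$ has orthogonal labels: reinserting the pair is unambiguous because its output is the query label. We then bound $\|a'\|_2$ by \cref{eq:coefficient-energy}, and pay $\|J\|^2\le q+1$. This should give $\sqrt q/(2(2^n-q))$ up to the exact form of the denominators.

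\emph{(ii) Blocked deletion.} The query label is $y=y_0$, and $y_0$ occurs once in the mixed record. After insertion it occurs twice, so $A^\dagger J$ vanishes while $JA^\dagger$ does not. These terms carry query label $y_0=(r,h)$ and are therefore controlled by the diagonal energy \cref{eq:diagonal-energy}. The sum over $r$ and $h$ of $\|\alpha_{c,r,(r,h),L,R}\|^2$ is at most $2^{1+|\reg R|}\|\phi\|^2$, and with the prefactor $2^{-(n+1)}$ this gives $2^{-|\reg H|}$ in squared norm. That looks too weak, since the target is $2^{-|\reg H|}$ in norm. The fix is that only the $c,r$ matching the record survive, so the $h$-sum effectively picks one coordinate per record. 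Here independence of $\alpha$ from $h$ should enter: summing over $h$ of terms labelled by distinct outputs $(r,h)$ gains a further factor $2^{-|\reg H|}$ relative to the full $h$-average. We expect this to produce $\frac1{2^{|\reg H|}}\sqrt{2^n/(2^n-q)}$.

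\emph{(iii) Deleting $e_0$ itself.} Suppose the query label equals $y_0$ and $y_0\notin\Img(L\cup R)$. After insertion, $A^\dagger J$ can delete the inserted $e_0$ and output $x_{0,c}$, which has no counterpart in $JA^\dagger$. The resulting vector is $\ket{x_{0,c}}\ket{\alpha_{c,r,(r,h),L,R}}\ket h\ket L\ket R$ with denominator $2^n-k$. Labels collide across different $h$ only through the workspace. We would group terms by final label and use \cref{eq:matching-labels}. The crossing with other recorded pairs is then bounded by counting at most $q$ pairs, as in the deletion-weight count of \cref{lem:insertion-error}. This yields $\sqrt{q/(2^{|\reg H|}(2^n-q+1))}$. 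Finally, we combine (i)--(iii) by the triangle inequality rather than orthogonality, since they enter \cref{eq:deletion-summary} additively.

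The main obstacle is (ii)/(iii). Because the query register may carry label $(r,h)$ correlated with the hidden $h$ only through the algorithm's choice, we must extract a genuine $2^{-|\reg H|}$ from the $h$-independence of $\alpha_{c,r}$ and not merely from the normalization. What we would try first is to fix $(c,r,x,L,R)$ and observe that, as $h$ varies, the constraint ``query label $=(r,h)$'' selects disjoint query coordinates. Summing $\|\alpha_{c,r,(r,h),L,R}\|^2$ over $h$ is then bounded by $\|\alpha_{c,r}\|^2$, while the state's $h$-sum contributes $2^{|\reg H|}$ copies of the prefactor only once.
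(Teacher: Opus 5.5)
Your three pieces are the paper's: (i) is its normalization term, (ii) its blocked-deletion term, and (iii) its term in which $A^\dagger$ deletes $e_0$ itself. However, you have attached the two nontrivial terms of \cref{eq:deletion-summary} to the wrong pieces, and step (iii) as planned would prove a false inequality.

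After $A^\dagger J$ deletes $e_0$, the final labels are $(x_{0,c},h,L,R)$. Here $c$ is read off $x_{0,c}$ and $h$ remains an orthogonal label in $\reg H$. But $r$ is erased and survives only inside the workspace vectors $\alpha_{c,r,(r,h),L,R}$. So the collisions are over the $2^{|\reg R|}$ values of $r$, not over $h$. No count of ``at most $q$ recorded pairs'' applies, since $e_0$ has left the record. The right estimate is \cref{eq:matching-sums} with $m=2^{|\reg R|}$, then \cref{eq:diagonal-energy} and the amplitude $1/\sqrt{2^n-k}$. This gives $\|b\|_2^2\le2^{|\reg R|}2^{1+|\reg R|}\|\phi\|_2^2/(2^{n+1}(2^n-q))=2^n\|\phi\|_2^2/(2^{2|\reg H|}(2^n-q))$, which is the middle term. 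The factor $2^{-|\reg H|}$ in norm comes from $2^{2|\reg R|-2n}=2^{-2|\reg H|}$ in squared norm, not from any further use of $h$-independence.

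The bound $\sqrt{q/(2^{|\reg H|}(2^n-q+1))}$ you assign to this piece is false. At $q=0$ it vanishes, yet the commutator is then exactly this piece. Take empty records and $\alpha_{c,r}=2^{-|\reg H|/2}\sum_{h'}\ket{(r,h')}_{\reg Q}\ket w_{\reg W}$ for a fixed unit vector $\ket w$. Then all $2^{|\reg R|}$ values of $r$ interfere constructively and $\|[A^\dagger,J]\phi\|_2=2^{-|\reg H|}\|\phi\|_2$. The same example saturates \cref{eq:diagonal-energy}, so the extra factor you hope to extract in (ii) does not exist. Your proposed fix just re-derives \cref{eq:diagonal-energy}.

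In (ii) there is in fact no obstacle. Your count dropped the amplitude $1/\sqrt{2^n-k+1}$ of $A^\dagger$, which you stated correctly at the outset, and it dropped the multiplicity of final labels. In the final label $(x,h,(L\setminus\{(x,y_0)\})\cup\{e_0\},R)$, nothing records which of the at most $q$ pairs in $\reg{E_L}$ is $e_0$, and each choice determines all old labels. So \cref{eq:matching-labels} with multiplicity $q$, followed by \cref{eq:diagonal-energy}, gives $\|c\|_2^2\le q\|\phi\|_2^2/(2^{|\reg H|}(2^n-q+1))$, the third term. That is where the $q$-pair count belongs.

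In (i), the post-deletion vector does not have orthogonal labels. The output $y$ of the deleted pair is forgotten, so distinct $y$ give the same label. Write the term instead as $JA^\dagger\ket\eta$, with $\eta$ the pre-deletion vector reweighted by $f(2^n-k+1)$. Then use $\|A^\dagger\|\le1$ together with $\|J\|^2\le q$, which holds because the record has size at most $q-1$ after deletion. With $q+1$ you would get $\sqrt{q+1}$ instead of the stated $\sqrt q$.
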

\begin{proof}
We relabel the query as $y$ and write $k=|\Img(L\cup R)|$.
For $(x,y)\in L$ with $y\notin\Img((L\setminus\{(x,y)\})\cup R)$,
\begin{equation}
  A^\dagger\ket y\ket L\ket R =\frac{\ket x\ket{L\setminus\{(x,y)\}}\ket R} {\sqrt{2^n-k+1}}; \label{eq:a-adjoint-action}
\end{equation}
otherwise the result is zero.
Removing this pair gives $|\Img((L\setminus\{(x,y)\})\cup R)|=k-1$, so
\[
  A\ket x\ket{L\setminus\{(x,y)\}}\ket R =\frac1{\sqrt{2^n-k+1}} \sum_{z\notin\Img((L\setminus\{(x,y)\})\cup R)} \ket z\ket{(L\setminus\{(x,y)\})\cup\{(x,z)\}}\ket R.
\]
The term $z=y$ gives \cref{eq:a-adjoint-action}.
The query label $y$ fixes the unique pair removed from $L$; no input contributes if the displayed conditions fail.

\paragraph{The two orders.}
Suppose $(x,y)$ satisfies the conditions in \cref{eq:a-adjoint-action} and $y\neq y_0$.
If $y_0\notin\Img(L\cup R)$, then
\begin{align*}
  A^\dagger J\ket y\ket{e_0}\ket L\ket R &=\frac{\ket x\ket{(L\setminus\{(x,y)\})\cup\{e_0\}}\ket R} {\sqrt{2^n-k}},\\
  JA^\dagger\ket y\ket{e_0}\ket L\ket R &=\frac{\ket x\ket{(L\setminus\{(x,y)\})\cup\{e_0\}}\ket R} {\sqrt{2^n-k+1}}.
\end{align*}
If $y_0\in\Img(L\cup R)$, both orders give
\[
  \frac{\sqrt{\operatorname{num}(L,e_0)+1}}{\sqrt{2^n-k+1}} \ket x\ket{(L\setminus\{(x,y)\})\cup\{e_0\}}\ket R.
\]
Removing $(x,y)$ does not change the number of copies of $e_0$, since $y\neq y_0$.
If $y$ cannot be removed, adding output $y_0\neq y$ does not change this, and both orders give zero.

For $y=y_0\notin\Img(L\cup R)$, $A^\dagger J$ removes the added pair:
\[
  A^\dagger J\ket{y_0}\ket{e_0}\ket L\ket R =\frac{\ket{x_{0,c}}\ket L\ket R}{\sqrt{2^n-k}},\qquad JA^\dagger\ket{y_0}\ket{e_0}\ket L\ket R=0.
\]
For $y=y_0$ with exactly one pair $(x,y_0)$ in $L$ and no output $y_0$ in $R$, insertion gives two pairs with output $y_0$.
Thus
\begin{align*}
  A^\dagger J\ket{y_0}\ket{e_0}\ket L\ket R&=0,\\
  JA^\dagger\ket{y_0}\ket{e_0}\ket L\ket R &=\frac{\ket x\ket{(L\setminus\{(x,y_0)\})\cup\{e_0\}}\ket R} {\sqrt{2^n-k+1}}.
\end{align*}
The insertion weight in the second line is one because deletion removes the only output $y_0$.
If $y_0$ occurs in $R$ or at least twice in $L$, both orders give zero.
For the common terms,
\[
  \frac1{\sqrt{2^n-k}}-\frac1{\sqrt{2^n-k+1}} =\frac{f(2^n-k+1)}{\sqrt{2^n-k+1}}.
\]
We define
\[
  \ket\beta=\ket x\ket{\alpha_{c,r,y,L,R}}\ket h \ket{(L\setminus\{(x,y)\})\cup\{e_0\}}\ket R.
\]
The difference of the two orders is
\begin{align*}
  [A^\dagger,J]\ket\phi&=\ket a+\ket b-\ket c,\\
  \ket a&=\frac1{\sqrt{2^{n+1}}} \sum_{\substack{c,r,h,y,x,L,R\\y\neq y_0,\ y_0\notin\Img(L\cup R)\\ (x,y)\in L,\ y\notin\Img((L\setminus\{(x,y)\})\cup R)}} \frac{f(2^n-k+1)}{\sqrt{2^n-k+1}}\ket\beta,\\
  \ket b&=\frac1{\sqrt{2^{n+1}}} \sum_{\substack{c,r,h,L,R\\y_0\notin\Img(L\cup R)}} \frac{\ket{x_{0,c}}\ket{\alpha_{c,r,(r,h),L,R}} \ket h\ket L\ket R}{\sqrt{2^n-k}},\\
  \ket c&=\frac1{\sqrt{2^{n+1}}} \sum_{\substack{c,r,h,x,y,L,R\\y=y_0,\ (x,y)\in L\\ y\notin\Img((L\setminus\{(x,y)\})\cup R)}} \frac{\ket\beta}{\sqrt{2^n-k+1}}.
\end{align*}

\paragraph{The normalization term.}
We define
\[
  \ket\eta=\frac1{\sqrt{2^{n+1}}} \sum_{\substack{c,r,h,y,L,R\\y\neq y_0,\ y_0\notin\Img(L\cup R)}} f(2^n-k+1)\ket y\ket{\alpha_{c,r,y,L,R}}\ket h \ket{e_0}\ket L\ket R.
\]
Then $\ket a=JA^\dagger\ket\eta$.
By \cref{eq:ratio}, $f(2^n-k+1)\leq1/[2(2^n-q)]$.
The relation in $\reg{E_L}$ has size at most $q-1$ in $A^\dagger\ket\eta$, so, for $q\geq1$,
\begin{align*}
  \|a\|_2^2 &\leq q\|A^\dagger\eta\|_2^2\leq q\|\eta\|_2^2\\
  &=\frac q{2^{n+1}} \sum_{\substack{c,r,h,y,L,R\\y\neq y_0,\ y_0\notin\Img(L\cup R)}} f(2^n-k+1)^2\|\alpha_{c,r,y,L,R}\|_2^2\\
  &\leq\frac q{4(2^n-q)^2}\|\phi\|_2^2.
\end{align*}
For $q=0$, $\ket a=0$.

\paragraph{Deletion of the inserted pair.}
For $\ket b$, different $r$ values can give the same labels $c,h,L,R$.
There are at most $2^{|\reg R|}$ such values, so the squared norm of their sum is at most $2^{|\reg R|}$ times the sum of their squared norms.
Using \cref{eq:matching-sums,eq:diagonal-energy},
\begin{align*}
  \|b\|_2^2 &=\frac1{2^{n+1}}\sum_{c,h,L,R} \left\|\sum_{\substack{r\\(r,h)\notin\Img(L\cup R)}} \frac{\alpha_{c,r,(r,h),L,R}}{\sqrt{2^n-k}}\right\|_2^2\\
  &\leq\frac{2^{|\reg R|}}{2^{n+1}(2^n-q)} \sum_{c,r,h,L,R}\|\alpha_{c,r,(r,h),L,R}\|_2^2\\
  &\leq\frac{2^{|\reg R|}\,2^{1+|\reg R|}}{2^{n+1}(2^n-q)} \|\phi\|_2^2 =\frac{2^n}{2^{2|\reg H|}(2^n-q)}\|\phi\|_2^2.
\end{align*}
The last equality uses $2|\reg R|-n=n-2|\reg H|$.

\paragraph{Blocked deletion.}
For $\ket c$, fix the final labels $(x,h,L',R)$.
Deleting one pair and inserting $e_0$ preserves the total size, so $|L'|\leq q$.
Each possible $e_0$ is one of the at most $q$ pairs in $L'$, and
\[
  e_0=(x_{0,c},(r,h)),\qquad L=(L'\setminus\{e_0\})\cup\{(x,(r,h))\}.
\]
Thus each choice of $e_0$ fixes all old labels.
Write
\[
  \mu=(c,r,h,x,L,R),\qquad \lambda_\mu=(x,h,(L\setminus\{(x,y_0)\})\cup\{e_0\},R),
\]
Here $\mu$ ranges over the terms in the sum defining $\ket c$.
Let $\alpha_\mu=\alpha_{c,r,(r,h),L,R}$ and $k_\mu=|\Img(L\cup R)|$.
There are at most $q$ indices with the same $\lambda_\mu$.
For fixed $L,y_0$, at most one $x$ satisfies the deletion condition, so dropping that condition does not count any coefficient twice.
By \cref{eq:matching-labels,eq:diagonal-energy},
\begin{align*}
  \|c\|_2^2 &=\frac1{2^{n+1}}\sum_\lambda \left\|\sum_{\substack{\mu\\\lambda_\mu=\lambda}} \frac{\alpha_\mu}{\sqrt{2^n-k_\mu+1}}\right\|_2^2\\
  &\leq\frac1{2^{n+1}}\sum_\mu \frac{\|\alpha_\mu\|_2^2}{2^n-k_\mu+1} \sum_{\substack{\widetilde\mu\\ \lambda_{\widetilde\mu}=\lambda_\mu}}1\\
  &\leq\frac q{2^{n+1}} \sum_{\substack{c,r,h,x,L,R\\(x,y_0)\in L\\ y_0\notin\Img((L\setminus\{(x,y_0)\})\cup R)}} \frac{\|\alpha_{c,r,(r,h),L,R}\|_2^2}{2^n-k+1}\\
  &\leq\frac q{2^{n+1}(2^n-q+1)} \sum_{c,r,h,L,R}\|\alpha_{c,r,(r,h),L,R}\|_2^2\\
  &\leq\frac q{2^{|\reg H|}(2^n-q+1)}\|\phi\|_2^2.
\end{align*}
The triangle inequality gives
\begin{align*}
  \|[A^\dagger,J]\phi\|_2 &\leq\|a\|_2+\|b\|_2+\|c\|_2\\
  &\leq\left(\frac{\sqrt q}{2(2^n-q)} +\frac1{2^{|\reg H|}}\sqrt{\frac{2^n}{2^n-q}} +\sqrt{\frac q{2^{|\reg H|}(2^n-q+1)}}\right)\|\phi\|_2.
\end{align*}
\end{proof}

\begin{lemma}[Forward deletion]
  \label{lem:commutator-b-dagger}
Let $0\leq q\leq2^n-2$ and let $\ket\phi$ have the form in \cref{eq:mixed-experiment-state-family}, with coefficient vectors independent of $h$ and supported on $|L\cup R|\leq q$.
Then
\[
  \|[B^\dagger,J]\phi\|_2 \leq\left(\frac{\sqrt{q+1}}{2(2^n-q)} +\sqrt{\frac2{2^n-q+1}}\right)\|\phi\|_2.
\]
\end{lemma}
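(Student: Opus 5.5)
We follow the template of \cref{lem:commutator-a-dagger}, now for deletion from $\reg{E_R}$ with the query labelled $x$. Write $d=|\Dom(L\cup R)|$. For $(x,y)\in R$ with $x\notin\Dom(L\cup(R\setminus\{(x,y)\}))$,
\[
  B^\dagger\ket x\ket L\ket R=\frac{\ket y\ket L\ket{R\setminus\{(x,y)\}}}{\sqrt{2^n-d+1}},
\]
and otherwise the result is zero. Insertion of $e_0$ into $L$ can only change this through the domain condition and the denominator. Since $B^\dagger$ deletes from $R$, the inserted pair can never itself be deleted. So, unlike the $A^\dagger$ case, there is no analogue of the term $\ket b$, and no hidden-label coefficient bound is needed. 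This explains why the claimed bound has no $2^{-|\reg H|}$ term.

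We expand $B^\dagger J\ket\phi$ and $JB^\dagger\ket\phi$ and split by whether $x_{0,c}\in\Dom(L\cup R)$ and whether $x=x_{0,c}$.

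\emph{Common terms with $x\neq x_{0,c}$ and $x_{0,c}\notin\Dom(L\cup R)$.} Insertion raises the domain size by one. The two orders then give denominators $\sqrt{2^n-d}$ versus $\sqrt{2^n-d+1}$, with the same insertion weight, because the removed pair has input $x\neq x_{0,c}$ and lies in $R$. Their difference is $f(2^n-d+1)/\sqrt{2^n-d+1}$ times the common vector.

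\emph{Terms with $x_{0,c}\in\Dom(L\cup R)$ and $x\neq x_{0,c}$.} These cancel exactly.

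\emph{Terms with $x=x_{0,c}$.} Here $JB^\dagger$ can delete a pair $(x_{0,c},y)\in R$. In $B^\dagger J$ the inserted $e_0$ puts $x_{0,c}$ in $\Dom(L)$, which blocks the deletion. This gives a blocked-deletion vector $\ket c$ (there is no term where $B^\dagger J$ survives and $JB^\dagger$ vanishes). Thus
\[
  [B^\dagger,J]\ket\phi=\ket a-\ket c.
\]

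For the normalization term we write $\ket a=JB^\dagger\ket\eta$, where $\ket\eta$ is $\ket\phi$ restricted to the relevant branch and rescaled by $f(2^n-d+1)\leq1/[2(2^n-q)]$ from \cref{eq:ratio}. The relation in $\reg{E_L}$ has size at most $q$, so $\|J\|^2\leq q+1$ on it. Together with $\|B^\dagger\|\leq1$ and \cref{eq:coefficient-energy}, this gives $\|a\|_2\leq\frac{\sqrt{q+1}}{2(2^n-q)}\|\phi\|_2$.

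For $\ket c$, each term has final query label $y$ and final records $L\cup\{e_0\}$ and $R\setminus\{(x_{0,c},y)\}$. We group terms by their final labels $(y,h,L',R')$. Given these, the challenge bit $c$ has at most two choices. Each choice fixes $x_{0,c}$. Since $x_{0,c}\in\Dom(L')$ in the final record only through $e_0$ (the old $L$ had no pair with this input, by the deletion condition), that choice identifies $e_0$, and hence $r$, the old $L$, and the old $R=R'\cup\{(x_{0,c},y)\}$. So at most two indices share a label. By \cref{eq:matching-labels} and the insertion weight being $\sqrt{\operatorname{num}+1}=1$,
\[
  \|c\|_2^2\leq\frac{2}{2^{n+1}(2^n-q+1)}\sum\|\alpha\|_2^2\leq\frac{2}{2^n-q+1}\|\phi\|_2^2,
\]
using \cref{eq:coefficient-energy}; the denominator uses $d\leq q$.

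Finally, the triangle inequality gives the stated bound. (Orthogonality of $\ket a$ and $\ket c$ also holds, since they differ in whether $\Dom(L)$ contains the output-query's deleted input, but it is not needed.)

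The main obstacle is the counting for $\ket c$. One must check carefully that the final labels determine the old labels up to the two choices of $c$. This needs care when $e_0$ already occurs in $L$, i.e. $\operatorname{num}(L,e_0)\geq1$. But then $x_{0,c}\in\Dom(L)$, so the deletion is blocked in both orders and the case contributes nothing to $\ket c$.
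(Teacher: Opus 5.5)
Your proposal is correct and follows the paper's proof essentially step by step. It uses the same split $[B^\dagger,J]\ket\phi=\ket a-\ket c$, the same normalization bound via $\ket a=JB^\dagger\ket\eta$ with $\|J\|^2\le q+1$, and the same blocked-deletion counting with at most two choices of $c$, all combined by the triangle inequality.

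One caution: your parenthetical claim that $\ket a\perp\ket c$ is false in general. Take $c=0$ and compare the $\ket a$ term with $L=\emptyset$, $R=\{(x,y)\}$, $x\neq x_{0,0}$ against the $\ket c$ term with $L=\emptyset$, $R=\{(x_{0,0},y)\}$. Both have final labels $(y,h,\{e_0\},\emptyset)$, because the deleted input is not recorded. This is why the paper, like your main argument, relies on the triangle inequality rather than orthogonality.
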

\begin{proof}
Let $d=|\Dom(L\cup R)|$.
For $(x,y)\in R$ with $x\notin\Dom(L\cup(R\setminus\{(x,y)\}))$,
\[
  B^\dagger\ket x\ket L\ket R =\frac{\ket y\ket L\ket{R\setminus\{(x,y)\}}} {\sqrt{2^n-d+1}};
\]
otherwise the result is zero.
Removing this pair gives $|\Dom(L\cup(R\setminus\{(x,y)\}))|=d-1$, so
\[
  B\ket y\ket L\ket{R\setminus\{(x,y)\}} =\frac1{\sqrt{2^n-d+1}} \sum_{z\notin\Dom(L\cup(R\setminus\{(x,y)\}))} \ket z\ket L\ket{(R\setminus\{(x,y)\})\cup\{(z,y)\}}.
\]
The term $z=x$ gives the adjoint action.
The query label $x$ fixes the unique pair removed from $R$; no input contributes if the conditions fail.
\paragraph{The two orders.}
Suppose $(x,y)$ satisfies these conditions and $x\neq x_{0,c}$.
If $x_{0,c}\notin\Dom(L\cup R)$, then
\begin{align*}
  B^\dagger J\ket x\ket{e_0}\ket L\ket R &=\frac{\ket y\ket{L\cup\{e_0\}}\ket{R\setminus\{(x,y)\}}} {\sqrt{2^n-d}},\\
  JB^\dagger\ket x\ket{e_0}\ket L\ket R &=\frac{\ket y\ket{L\cup\{e_0\}}\ket{R\setminus\{(x,y)\}}} {\sqrt{2^n-d+1}}.
\end{align*}
If $x_{0,c}\in\Dom(L\cup R)$, both orders give
\[
  \frac{\sqrt{\operatorname{num}(L,e_0)+1}}{\sqrt{2^n-d+1}} \ket y\ket{L\cup\{e_0\}}\ket{R\setminus\{(x,y)\}}.
\]
For $x=x_{0,c}$ satisfying the deletion condition, adding $e_0$ to $L$ prevents deletion from $R$, so
\begin{align*}
  B^\dagger J\ket{x_{0,c}}\ket{e_0}\ket L\ket R&=0,\\
  JB^\dagger\ket{x_{0,c}}\ket{e_0}\ket L\ket R &=\frac{\ket y\ket{L\cup\{e_0\}}\ket{R\setminus\{(x_{0,c},y)\}}} {\sqrt{2^n-d+1}}.
\end{align*}
Here $x_{0,c}\notin\Dom(L)$, so the insertion weight is one.
If $x$ cannot be removed, adding a pair to $L$ does not change this, and both orders give zero.
For the common terms,
\[
  \frac1{\sqrt{2^n-d}}-\frac1{\sqrt{2^n-d+1}} =\frac{f(2^n-d+1)}{\sqrt{2^n-d+1}}.
\]
We define
\[
  \ket\beta=\ket y\ket{\alpha_{c,r,x,L,R}}\ket h \ket{L\cup\{e_0\}}\ket{R\setminus\{(x,y)\}}.
\]
Then
\begin{align*}
  [B^\dagger,J]\ket\phi&=\ket a-\ket c,\\
  \ket a&=\frac1{\sqrt{2^{n+1}}} \sum_{\substack{c,r,h,x,y,L,R\\x\neq x_{0,c},\ x_{0,c}\notin\Dom(L\cup R)\\ (x,y)\in R,\ x\notin\Dom(L\cup(R\setminus\{(x,y)\}))}} \frac{f(2^n-d+1)}{\sqrt{2^n-d+1}}\ket\beta,\\
  \ket c&=\frac1{\sqrt{2^{n+1}}} \sum_{\substack{c,r,h,x,y,L,R\\x=x_{0,c},\ (x,y)\in R\\ x\notin\Dom(L\cup(R\setminus\{(x,y)\}))}} \frac{\ket\beta}{\sqrt{2^n-d+1}}.
\end{align*}
The insertion weights in these sums are one.

\paragraph{The normalization term.}
We define
\[
  \ket\eta=\frac1{\sqrt{2^{n+1}}} \sum_{\substack{c,r,h,x,L,R\\x\neq x_{0,c},\ x_{0,c}\notin\Dom(L\cup R)}} f(2^n-d+1)\ket x\ket{\alpha_{c,r,x,L,R}}\ket h \ket{e_0}\ket L\ket R.
\]
Since $\ket a=JB^\dagger\ket\eta$ and $B^\dagger$ leaves $L$ unchanged, we use the bound $q+1$ on the squared norm of $J$, $\|B^\dagger\|\leq1$, and $f(2^n-d+1)\leq1/[2(2^n-q)]$ to get
\begin{align*}
  \|a\|_2^2 &\leq(q+1)\|B^\dagger\eta\|_2^2 \leq(q+1)\|\eta\|_2^2\\
  &=\frac{q+1}{2^{n+1}} \sum_{\substack{c,r,h,x,L,R\\x\neq x_{0,c},\ x_{0,c}\notin\Dom(L\cup R)}} f(2^n-d+1)^2\|\alpha_{c,r,x,L,R}\|_2^2\\
  &\leq\frac{q+1}{4(2^n-q)^2}\|\phi\|_2^2.
\end{align*}
\paragraph{Blocked deletion.}
For $\ket c$, fix the final labels $(y,h,L',R')$.
For each $c\in\{0,1\}$, $e_0$ must be the unique pair in $L'$ with input $x_{0,c}$; its output fixes $r,h$.
The old records are
\[
  L=L'\setminus\{e_0\},\qquad R=R'\cup\{(x_{0,c},y)\}.
\]
Thus there are at most two choices of old labels.
Write
\[
  \mu=(c,r,h,y,L,R),\qquad \lambda_\mu=(y,h,L\cup\{e_0\},R\setminus\{(x_{0,c},y)\}),
\]
Here $\mu$ ranges over the terms in the sum defining $\ket c$.
Let $\alpha_\mu=\alpha_{c,r,x_{0,c},L,R}$ and $d_\mu=|\Dom(L\cup R)|$.
For fixed $R,x_{0,c}$, at most one $y$ satisfies the deletion condition, so enlarging the coefficient sum does not count any term twice.
By \cref{eq:matching-labels},
\begin{align*}
  \|c\|_2^2 &=\frac1{2^{n+1}}\sum_\lambda \left\|\sum_{\substack{\mu\\\lambda_\mu=\lambda}} \frac{\alpha_\mu}{\sqrt{2^n-d_\mu+1}}\right\|_2^2\\
  &\leq\frac1{2^{n+1}}\sum_\mu \frac{\|\alpha_\mu\|_2^2}{2^n-d_\mu+1} \sum_{\substack{\widetilde\mu\\ \lambda_{\widetilde\mu}=\lambda_\mu}}1\\
  &\leq\frac2{2^{n+1}} \sum_{\substack{c,r,h,y,L,R\\(x_{0,c},y)\in R\\ x_{0,c}\notin\Dom(L\cup(R\setminus\{(x_{0,c},y)\}))}} \frac{\|\alpha_{c,r,x_{0,c},L,R}\|_2^2}{2^n-d+1}\\
  &\leq\frac2{2^n-q+1}\cdot\frac1{2^{n+1}} \sum_{c,r,h,x,L,R}\|\alpha_{c,r,x,L,R}\|_2^2\\
  &=\frac2{2^n-q+1}\|\phi\|_2^2,\\
  \|[B^\dagger,J]\phi\|_2 &\leq\|a\|_2+\|c\|_2 \leq\left(\frac{\sqrt{q+1}}{2(2^n-q)} +\sqrt{\frac2{2^n-q+1}}\right)\|\phi\|_2.
\end{align*}
\end{proof}

\begin{lemma}[Conjugate creation]
\label{lem:commutator-bar-a}
Let $0\leq q\leq2^n-2$ and let $\ket\phi$ have the form in \cref{eq:mixed-experiment-state-family}, with coefficient vectors independent of $h$ and supported on $|L\cup R|\leq q$.
Then
\[
  \|[\bar A,J]\phi\|_2^2\leq\left(\frac1{2^n-q}+\frac{q+1}{4(2^n-q-1)^2}\right)\|\phi\|_2^2.
\]
\end{lemma}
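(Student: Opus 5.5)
We follow the proof of \cref{lem:commutator-b} (inverse creation), since $\bar A=\operatorname{SWAP}_{\reg{E_LE_R}}A\operatorname{SWAP}_{\reg{E_LE_R}}$ creates a pair with a fresh output coordinate but places it in $\reg{E_R}$, while $J$ inserts $e_0$ into $\reg{E_L}$. Write $k=|\Img(L\cup R)|\leq q$. Insertion raises the image size to $k+1$ exactly when $y_0\notin\Img(L\cup R)$, and otherwise leaves it at $k$. Expanding both orders gives the same final vector
\[
  \ket\beta=\ket y\ket{\alpha_{c,r,x,L,R}}\ket h\ket{L\cup\{e_0\}}\ket{R\cup\{(x,y)\}}.
\]
Because $\bar A$ adds its pair to $R$, the insertion weight $\sqrt{\operatorname{num}(L,e_0)+1}$ is identical in both orders. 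Hence the branch $y_0\in\Img(L\cup R)$ cancels exactly. Moreover, no factor $w_{c,x}=\sqrt2$ appears, unlike \cref{lem:commutator-a}. The commutator is $[\bar A,J]\ket\phi=\ket a-\ket c$, where:
\begin{itemize}
\item $\ket a$ collects the normalization change $f(2^n-k)/\sqrt{2^n-k}$ on the terms with $y_0\notin\Img(L\cup R)$ and $y\neq y_0$;
\item $\ket c$ collects the terms with $y=y_0$, which $JA$-type ordering allows but $\bar AJ$ forbids.
\end{itemize}

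For $\ket a$, we write $\ket a=J\ket{a'}$. The terms of $\ket{a'}$ have orthogonal labels: the pair with output $y$ is unique in $R\cup\{(x,y)\}$, and $e_0$ sits in $\reg{E_0}$. There are $2^n-k-1$ allowed values of $y$. Using \cref{eq:ratio} and \cref{eq:coefficient-energy}, we get $\|a'\|_2^2\leq\|\phi\|_2^2/[4(2^n-q-1)^2]$. Since $L$ is unchanged in size (at most $q$), the norm bound on $J$ contributes a factor of at most $q+1$. This gives the term $(q+1)/[4(2^n-q-1)^2]$.

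For $\ket c$, we fix the final labels. The final query is $y_0=(r,h)$, which fixes $r$ and $h$. The pair in the final $R$ with output $y_0$ is unique, because $y_0\notin\Img(L\cup R)$ before the step. Removing it fixes $x$ and the old $R$. The final $L$ contains $e_0$, and $e_0$ is the unique pair in $L$ with output $y_0$; this fixes $c$ and the old $L$. So each final label arises from exactly one old index, and the terms of $\ket c$ are orthogonal. Insertion has weight one, since $y_0\notin\Img(L)$. Therefore
\[
  \|c\|_2^2=\frac1{2^{n+1}}\sum\frac{\|\alpha_{c,r,x,L,R}\|_2^2}{2^n-k}\leq\frac{\|\phi\|_2^2}{2^n-q},
\]
by \cref{eq:coefficient-energy}.

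Finally, $\ket a$ and $\ket c$ are orthogonal. In $\ket c$ the final $L$ contains a pair whose output equals the query label; in $\ket a$ it does not, since $y\neq y_0$ and $y\notin\Img(L)$. Adding the two squared norms gives the claim. The main point needing care is the uniqueness count for $\ket c$. It relies on $y_0$ being absent from both records before insertion, which is what removes the factor $2$ seen in \cref{lem:commutator-a}.
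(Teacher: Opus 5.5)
Your proposal is correct and follows the paper's proof step for step: the same split $[\bar A,J]\ket\phi=\ket a-\ket c$, the factor $q+1$ from $J$ because $\bar A$ leaves $L$ unchanged, the unique-preimage count for $\ket c$ (the pair with output $y_0$ in $R$ fixes $x$, and $e_0$ in $L$ fixes $c$), and orthogonality of $\ket a,\ket c$ according to whether the final $L$ contains a pair whose output equals the query label. One small correction concerns your closing remark: the factor $2$ of \cref{lem:commutator-a} disappears not because $y_0$ is absent before insertion (that also holds there), but because the created pair now lands in $\reg{E_R}$ while $e_0$ lands in $\reg{E_L}$, so the two pairs with output $y_0$ can always be told apart.
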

\begin{proof}
Let $k=|\Img(L\cup R)|\leq q$ and
\[
  \ket\beta=\ket y_\reg Q\ket{\alpha_{c,r,x,L,R}}_\reg W\ket h_\reg H\ket{L\cup\{e_0\}}_\reg{E_L}\ket{R\cup\{(x,y)\}}_\reg{E_R}.
\]
Inserting $e_0$ increases $k$ by one exactly when $y_0\notin\Img(L\cup R)$.

\paragraph{The two orders.}
The definitions give
\begin{align*}
  \bar AJ\ket\phi={}&\frac1{\sqrt{2^{n+1}}}\sum_{\substack{c,r,h,x,y,L,R\\y_0\notin\Img(L\cup R)\\y\notin\Img(L\cup R),\ y\neq y_0}}\frac{\ket\beta}{\sqrt{2^n-k-1}}\\*
  &+\frac1{\sqrt{2^{n+1}}}\sum_{\substack{c,r,h,x,y,L,R\\y_0\in\Img(L\cup R)\\y\notin\Img(L\cup R)}}\frac{\sqrt{\operatorname{num}(L,e_0)+1}}{\sqrt{2^n-k}}\ket\beta,\\
  J\bar A\ket\phi={}&\frac1{\sqrt{2^{n+1}}}\sum_{\substack{c,r,h,x,y,L,R\\y_0\notin\Img(L\cup R)\\y\notin\Img(L\cup R),\ y\neq y_0}}\frac{\ket\beta}{\sqrt{2^n-k}}\\*
  &+\frac1{\sqrt{2^{n+1}}}\sum_{\substack{c,r,h,x,y,L,R\\y_0\in\Img(L\cup R)\\y\notin\Img(L\cup R)}}\frac{\sqrt{\operatorname{num}(L,e_0)+1}}{\sqrt{2^n-k}}\ket\beta\\*
  &+\frac1{\sqrt{2^{n+1}}}\sum_{\substack{c,r,h,x,y,L,R\\y_0\notin\Img(L\cup R)\\y=y_0}}\frac{\ket\beta}{\sqrt{2^n-k}}.
\end{align*}
The last insertion weight is one because the new pair is in $R$ and $e_0\notin L$.
The second sums cancel, and the first sums differ only in their denominators.
Since
\[
  \frac1{\sqrt{2^n-k-1}}-\frac1{\sqrt{2^n-k}}=\frac{f(2^n-k)}{\sqrt{2^n-k}},
\]
we obtain
\begin{align*}
  [\bar A,J]\ket\phi&=\ket a-\ket c,\\
  \ket a&=\frac1{\sqrt{2^{n+1}}}\sum_{\substack{c,r,h,x,y,L,R\\y_0\notin\Img(L\cup R)\\y\notin\Img(L\cup R),\ y\neq y_0}}\frac{f(2^n-k)}{\sqrt{2^n-k}}\ket\beta,\\
  \ket c&=\frac1{\sqrt{2^{n+1}}}\sum_{\substack{c,r,h,x,y,L,R\\y_0\notin\Img(L\cup R)\\y=y_0}}\frac{\ket\beta}{\sqrt{2^n-k}}.
\end{align*}

\paragraph{The normalization term.}
Before applying $J$, let
\[
  \ket{a'}=\frac1{\sqrt{2^{n+1}}}\sum_{\substack{c,r,h,x,y,L,R\\y_0\notin\Img(L\cup R)\\y\notin\Img(L\cup R),\ y\neq y_0}}\frac{f(2^n-k)}{\sqrt{2^n-k}}\ket y\ket{\alpha_{c,r,x,L,R}}\ket h\ket{e_0}\ket L\ket{R\cup\{(x,y)\}}.
\]
The added pair is the unique pair in $R$ with output $y$.
Removing it and reading $e_0$ recovers all old labels, so distinct input labels give orthogonal terms in $\ket{a'}$.
There are $2^n-k-1$ allowed values of $y$, and $\ket a=J\ket{a'}$.
Since $\bar A$ leaves $L$ unchanged, we use the bound $q+1$ on the squared norm of $J$, $f(2^n-k)\leq1/[2(2^n-q-1)]$, and \cref{eq:coefficient-energy} to get
\begin{align*}
  \|a'\|_2^2&=\frac1{2^{n+1}}\sum_{\substack{c,r,h,x,L,R\\y_0\notin\Img(L\cup R)}}\frac{2^n-k-1}{2^n-k}f(2^n-k)^2\|\alpha_{c,r,x,L,R}\|_2^2\\
  &\leq\frac{\|\phi\|_2^2}{4(2^n-q-1)^2},\\
  \|a\|_2^2&\leq(q+1)\|a'\|_2^2\leq\frac{q+1}{4(2^n-q-1)^2}\|\phi\|_2^2.
\end{align*}
\paragraph{The excluded output.}
In $\ket c$, the final query label $y_0=(r,h)$ fixes $r,h$.
The unique pair with output $y_0$ in the final relation $L$ fixes $e_0$ and $c$; the unique pair with this output in the final relation $R$ fixes $x$.
Removing both pairs recovers the input labels, so distinct input labels give distinct output labels.
Therefore
\begin{align*}
  \|c\|_2^2&=\frac1{2^{n+1}}\sum_{\substack{c,r,h,x,L,R\\y_0\notin\Img(L\cup R)}}\frac{\|\alpha_{c,r,x,L,R}\|_2^2}{2^n-k}\\
  &\leq\frac1{2^n-q}\|\phi\|_2^2.
\end{align*}
In $\ket a$, the final relation $L$ has no pair with output equal to the final query label; in $\ket c$, it has one.
These are different relation labels, so
\[
  \braket a c=0,\qquad \|[\bar A,J]\phi\|_2^2=\|a\|_2^2+\|c\|_2^2\leq\left(\frac{q+1}{4(2^n-q-1)^2}+\frac1{2^n-q}\right)\|\phi\|_2^2.
\]
\end{proof}

\begin{lemma}[Transpose creation]
\label{lem:commutator-bar-b}
Let $0\leq q\leq2^n-2$ and let $\ket\phi$ have the form in \cref{eq:mixed-experiment-state-family}, with coefficient vectors independent of $h$ and supported on $|L\cup R|\leq q$.
Then
\[
  \|[\bar B,J]\phi\|_2^2\leq\left(\frac2{2^n-q}+\frac{q+2}{4(2^n-q-1)^2}\right)\|\phi\|_2^2.
\]
\end{lemma}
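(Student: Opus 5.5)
The plan follows the proof of \cref{lem:commutator-a} with the roles of the records exchanged. $\bar B$ adds its pair $(x,y)$ to $\reg{E_L}$ and chooses a fresh input $x\notin\Dom(L\cup R)$. So, unlike \cref{lem:commutator-b}, the created pair lands in the same record as the inserted challenge $e_0$; this is why the constants match the forward creation bound. Write $d=|\Dom(L\cup R)|\leq q$ and relabel the query index of $\alpha_{c,r}$ as $y$. Insertion raises the domain size by one exactly when $x_{0,c}\notin\Dom(L\cup R)$. The common output vector is
\[
  \ket\beta=\ket x_{\reg Q}\ket{\alpha_{c,r,y,L,R}}_{\reg W}\ket h_{\reg H}\ket{L\cup\{e_0,(x,y)\}}_{\reg{E_L}}\ket R_{\reg{E_R}}.
\]

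First I expand $\bar BJ\ket\phi$ and $J\bar B\ket\phi$. When $x_{0,c}\in\Dom(L\cup R)$ the denominators agree. The insertion weight $\sqrt{\operatorname{num}(L,e_0)+1}$ also agrees, because the new pair has a fresh input and so differs from $e_0$. These terms therefore cancel. When $x_{0,c}\notin\Dom(L\cup R)$, what remains is $[\bar B,J]\ket\phi=\ket a-\ket c$.

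The term $\ket a$ carries the factor $f(2^n-d)/\sqrt{2^n-d}$ over $x\neq x_{0,c}$. The term $\ket c$ is the branch $x=x_{0,c}$, which is allowed only in $J\bar B$. There the inserted $e_0$ and the new pair $(x_{0,c},y)$ coincide exactly when $y=y_0$, which gives the weight $w_{c,y}\in\{1,\sqrt2\}$.

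For $\ket a=J\ket{a'}$, the terms of $\ket{a'}$ have orthogonal labels: the unique pair in $L$ with input equal to the final query identifies what $\bar B$ added. I then use \cref{eq:ratio}, \cref{eq:coefficient-energy}, and the bound $s+1=q+2$ on $\|J\|^2$, since $\reg{E_L}$ now has size at most $q+1$. This gives $\|a\|_2^2\leq(q+2)\|\phi\|_2^2/[4(2^n-q-1)^2]$.

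For $\ket c$, fix the final labels. The final relation contains two pairs with input $x$ equal to the final query $x_{0,c}$, and choosing which of them is $e_0$ fixes $c,r,h,y$ and the old $L$. If the two pairs are distinct there are at most two choices, each with weight one; if they coincide there is one choice with weight $\sqrt2$. In either case $w_\mu^2$ times the multiplicity is at most $2$. Then \cref{eq:matching-labels} and \cref{eq:coefficient-energy} give $\|c\|_2^2\leq 2\|\phi\|_2^2/(2^n-q)$.

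Finally, $\braket ac=0$: in $\ket a$ the final $L$ has exactly one pair whose input equals the query label, while in $\ket c$ it has two. Summing the two squared norms gives the stated bound.

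The main obstacle is the label counting for $\ket c$, in particular the coincident case $y=y_0$. There the multiset records $e_0$ twice, and I must check that the weight $\sqrt2$ together with the normalization of the relation state in \cref{eq:relation-state} still yields the factor $2$. I would verify this directly by mirroring the argument for \cref{lem:commutator-a}, after conjugating by $\operatorname{SWAP}_{\reg{E_LE_R}}$ and exchanging the roles of inputs and outputs.
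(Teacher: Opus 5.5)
Your proposal is correct and follows the paper's proof essentially step for step: the same cancellation when $x_{0,c}\in\Dom(L\cup R)$, the same normalization term $\ket a=J\ket{a'}$ bounded via orthogonal labels and the $q+2$ estimate on $\|J\|^2$, the same at-most-two-preimages / $\sqrt2$-weight count for the excluded input $x=x_{0,c}$, and the same orthogonality argument from counting pairs in $L$ whose input equals the query label. The only slip is notational: the coincidence weight depends on whether $y=y_0=(r,h)$, so it should be indexed as $w_{r,h,y}$ rather than $w_{c,y}$. The normalization concern you flag is already absorbed into the insertion coefficient $\sqrt{\operatorname{num}(L,e)+1}$ of \cref{eq:insertion-map-behavior}, so no separate check is needed.
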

\begin{proof}
We relabel the query as $y$ and write $d=|\Dom(L\cup R)|\leq q$.
Inserting $e_0$ increases $d$ by one exactly when $x_{0,c}\notin\Dom(L\cup R)$.

\paragraph{The two orders.}
We use $w_{r,h,y}$ for the insertion weight on the branch $x=x_{0,c}\notin\Dom(L\cup R)$ and write the common output vector below:
\[
  w_{r,h,y}:=\begin{cases}\sqrt2,&y=y_0,\\1,&y\neq y_0,\end{cases}\qquad \ket\beta=\ket x\ket{\alpha_{c,r,y,L,R}}\ket h\ket{L\cup\{e_0,(x,y)\}}\ket R.
\]
Then
\begin{align*}
  \bar BJ\ket\phi={}&\frac1{\sqrt{2^{n+1}}}\sum_{\substack{c,r,h,x,y,L,R\\x_{0,c}\notin\Dom(L\cup R)\\x\notin\Dom(L\cup R),\ x\neq x_{0,c}}}\frac{\ket\beta}{\sqrt{2^n-d-1}}\\*
  &+\frac1{\sqrt{2^{n+1}}}\sum_{\substack{c,r,h,x,y,L,R\\x_{0,c}\in\Dom(L\cup R)\\x\notin\Dom(L\cup R)}}\frac{\sqrt{\operatorname{num}(L,e_0)+1}}{\sqrt{2^n-d}}\ket\beta,\\
  J\bar B\ket\phi={}&\frac1{\sqrt{2^{n+1}}}\sum_{\substack{c,r,h,x,y,L,R\\x_{0,c}\notin\Dom(L\cup R)\\x\notin\Dom(L\cup R),\ x\neq x_{0,c}}}\frac{\ket\beta}{\sqrt{2^n-d}}\\*
  &+\frac1{\sqrt{2^{n+1}}}\sum_{\substack{c,r,h,x,y,L,R\\x_{0,c}\in\Dom(L\cup R)\\x\notin\Dom(L\cup R)}}\frac{\sqrt{\operatorname{num}(L,e_0)+1}}{\sqrt{2^n-d}}\ket\beta\\*
  &+\frac1{\sqrt{2^{n+1}}}\sum_{\substack{c,r,h,x,y,L,R\\x_{0,c}\notin\Dom(L\cup R)\\x=x_{0,c}}}\frac{w_{r,h,y}}{\sqrt{2^n-d}}\ket\beta.
\end{align*}
For the last sum, $L$ has no pair with input $x_{0,c}$ before either insertion.
The two inserted pairs coincide exactly when $y=y_0$, giving the weight $\sqrt2$.
The second sums cancel, and the first sums differ only in their denominators.
Using
\[
  \frac1{\sqrt{2^n-d-1}}-\frac1{\sqrt{2^n-d}}=\frac{f(2^n-d)}{\sqrt{2^n-d}},
\]
we have
\begin{align*}
  [\bar B,J]\ket\phi&=\ket a-\ket c,\\
  \ket a&=\frac1{\sqrt{2^{n+1}}}\sum_{\substack{c,r,h,x,y,L,R\\x_{0,c}\notin\Dom(L\cup R)\\x\notin\Dom(L\cup R),\ x\neq x_{0,c}}}\frac{f(2^n-d)}{\sqrt{2^n-d}}\ket\beta,\\
  \ket c&=\frac1{\sqrt{2^{n+1}}}\sum_{\substack{c,r,h,x,y,L,R\\x_{0,c}\notin\Dom(L\cup R)\\x=x_{0,c}}}\frac{w_{r,h,y}}{\sqrt{2^n-d}}\ket\beta.
\end{align*}

\paragraph{The normalization term.}
Let
\[
  \ket{a'}=\frac1{\sqrt{2^{n+1}}}\sum_{\substack{c,r,h,x,y,L,R\\x_{0,c}\notin\Dom(L\cup R)\\x\notin\Dom(L\cup R),\ x\neq x_{0,c}}}\frac{f(2^n-d)}{\sqrt{2^n-d}}\ket x\ket{\alpha_{c,r,y,L,R}}\ket h\ket{e_0}\ket{L\cup\{(x,y)\}}\ket R.
\]
The added pair is identified by $x$.
Removing it and reading $e_0$ recovers all old labels, so distinct input labels give orthogonal terms in $\ket{a'}$.
There are $2^n-d-1$ allowed values of $x$.
Since $\ket a=J\ket{a'}$ and the final relation $L$ has size at most $q+1$ before $J$, we use the bound $q+2$ on the squared norm of $J$, $f(2^n-d)\leq1/[2(2^n-q-1)]$, and \cref{eq:coefficient-energy} to get
\begin{align*}
  \|a'\|_2^2&=\frac1{2^{n+1}}\sum_{\substack{c,r,h,y,L,R\\x_{0,c}\notin\Dom(L\cup R)}}\frac{2^n-d-1}{2^n-d}f(2^n-d)^2\|\alpha_{c,r,y,L,R}\|_2^2\\
  &\leq\frac{\|\phi\|_2^2}{4(2^n-q-1)^2},\\
  \|a\|_2^2&\leq(q+2)\|a'\|_2^2\leq\frac{q+2}{4(2^n-q-1)^2}\|\phi\|_2^2.
\end{align*}
\paragraph{The excluded input.}
For $\ket c$, write
\[
  \mu=(c,r,h,y,L,R),\qquad \lambda_\mu=(x_{0,c},h,L\cup\{e_0,(x_{0,c},y)\},R),\qquad w_\mu=w_{r,h,y}.
\]
The indices satisfy the conditions in the sum defining $\ket c$.
Let $\alpha_\mu=\alpha_{c,r,y,L,R}$ and $d_\mu=|\Dom(L\cup R)|$.
The final query label $x_{0,c}$ fixes $c$.
The two pairs with this input recover the old $L$ after removal.
Choosing which pair is $e_0$ fixes $r,h$ and the old query $y$.
If the pairs are distinct, there are at most two choices, each with weight one.
If they coincide, there is one choice with weight $\sqrt2$.
Thus
\[
  w_\mu^2\sum_{\substack{\widetilde\mu\\\lambda_{\widetilde\mu}=\lambda_\mu}}1\leq2.
\]
By \cref{eq:matching-labels},
\begin{align*}
  \|c\|_2^2&=\frac1{2^{n+1}}\sum_\lambda\left\|\sum_{\substack{\mu\\\lambda_\mu=\lambda}}\frac{w_\mu\alpha_\mu}{\sqrt{2^n-d_\mu}}\right\|_2^2\\
  &\leq\frac1{2^{n+1}}\sum_\mu\frac{w_\mu^2\|\alpha_\mu\|_2^2}{2^n-d_\mu}\sum_{\substack{\widetilde\mu\\\lambda_{\widetilde\mu}=\lambda_\mu}}1\\
  &\leq\frac2{2^{n+1}}\sum_{\substack{c,r,h,y,L,R\\x_{0,c}\notin\Dom(L\cup R)}}\frac{\|\alpha_{c,r,y,L,R}\|_2^2}{2^n-d}\\
  &\leq\frac2{2^n-q}\|\phi\|_2^2.
\end{align*}
In $\ket a$, the final relation $L$ contains one pair with input equal to the final query label; in $\ket c$, it contains two.
These are different relation labels, so
\[
  \braket a c=0,\qquad \|[\bar B,J]\phi\|_2^2=\|a\|_2^2+\|c\|_2^2\leq\left(\frac{q+2}{4(2^n-q-1)^2}+\frac2{2^n-q}\right)\|\phi\|_2^2.
\]
\end{proof}

\begin{lemma}[Transpose deletion]
\label{lem:commutator-bar-a-dagger}
Let $0\leq q\leq2^n-2$ and let $\ket\phi$ have the form in \cref{eq:mixed-experiment-state-family}, with coefficient vectors independent of $h$ and supported on $|L\cup R|\leq q$.
Then
\[
  \|[\bar A^\dagger,J]\phi\|_2\leq\left(\frac{\sqrt q}{2(2^n-q)}+\sqrt{\frac q{2^{|\reg H|}(2^n-q+1)}}\right)\|\phi\|_2.
\]
\end{lemma}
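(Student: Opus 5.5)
We follow the template of \cref{lem:commutator-a-dagger}, conjugated by the record swap. The map $\bar A^\dagger$ acts on a query $y$ by deleting the unique pair $(x,y)\in R$ whose output $y$ does not occur elsewhere in $L\cup R$. It has amplitude $1/\sqrt{2^n-k+1}$ with $k=|\Img(L\cup R)|$, and returns $\ket x$; otherwise the result is zero. Insertion adds $e_0$ to $L$ rather than to $R$, so the inserted pair can never be deleted by $\bar A^\dagger$. This is why no analogue of the term $\ket b$ (the $2^{-|\reg H|}\sqrt{2^n/(2^n-q)}$ contribution) appears in the bound. The insertion weight $\sqrt{\operatorname{num}(L,e_0)+1}$ is unchanged by a deletion from $R$, so it agrees in the two orders.

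\textbf{Expanding the two orders.} We split on whether $y_0\in\Img(L\cup R)$ and whether $y=y_0$. If $y_0\in\Img(L\cup R)$ and $y\neq y_0$, the denominators agree and the terms cancel. If $y_0\notin\Img(L\cup R)$ and $y\neq y_0$, insertion raises $k$ by one. The denominators then become $\sqrt{2^n-k}$ versus $\sqrt{2^n-k+1}$, which gives a normalization vector $\ket a$ with factor $f(2^n-k+1)$. The blocked case is $y=y_0$, with $(x,y_0)$ the unique output-$y_0$ pair in $R$ and $y_0\notin\Img(L)$. There $JA$-order ($J\bar A^\dagger$) deletes, but after insertion $y_0$ also appears in $L$, so $\bar A^\dagger J$ gives zero. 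This produces $-\ket c$ with amplitude $1/\sqrt{2^n-k+1}$ and insertion weight one. Hence $[\bar A^\dagger,J]\ket\phi=\ket a-\ket c$.

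\textbf{Bounding $\ket a$.} Write $\ket a=J\bar A^\dagger\ket\eta$, where $\ket\eta$ reweights $\ket\phi$ by $f(2^n-k+1)\leq1/[2(2^n-q)]$, using \cref{eq:ratio}. We have $\|\bar A^\dagger\|\leq1$. On the support of $\ket a$, $\bar A^\dagger$ has removed a pair from $R$, so $|L|\leq q-1$ and $\|J\cdot\|_2^2\leq q\|\cdot\|_2^2$. \Cref{eq:coefficient-energy} then gives $\|a\|_2\leq\sqrt q\,\|\phi\|_2/(2(2^n-q))$, and $\ket a=0$ if $q=0$.

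\textbf{Bounding $\ket c$ (main step).} The surviving coefficients are $\alpha_{c,r,(r,h),L,R}$, since the query must equal $y_0=(r,h)$. Fix the final labels $(x,h,L',R')$ with $L'=L\cup\{e_0\}$ and $R'=R\setminus\{(x,y_0)\}$. The label $e_0$ must be a pair of $L'$ with output in $\{(r,h)\}$, and it determines $c,r$. Hence there are at most $|L'|\leq q$ choices, and each choice recovers $L$ and $R=R'\cup\{(x,(r,h))\}$. Moreover, for fixed $R$ and $y_0$ at most one $x$ is admissible, so dropping the deletion condition counts each coefficient once. Applying \cref{eq:matching-labels} with multiplicity $q$ and then \cref{eq:diagonal-energy} yields
\[
\|c\|_2^2\leq\frac{q}{2^{n+1}(2^n-q+1)}\,2^{1+|\reg R|}\|\phi\|_2^2=\frac{q}{2^{|\reg H|}(2^n-q+1)}\|\phi\|_2^2 .
\]
The triangle inequality $\|a\|_2+\|c\|_2$ then gives the claim. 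The delicate point is the label-multiplicity count. One must check that the $h$-independence makes the hidden label recoverable from $e_0$. One must also check that restricting to diagonal query labels is exactly what makes \cref{eq:diagonal-energy}, not \cref{eq:coefficient-energy}, the right energy bound, since that is where the $2^{-|\reg H|}$ gain comes from.
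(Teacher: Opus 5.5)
Your proposal is correct and follows the paper's proof essentially step for step: the same $\ket a-\ket c$ decomposition with no deleted-$e_0$ term, the same factor-$q$ bound on $\|J\cdot\|_2^2$ after a deletion from $R$, and the same at-most-$q$ label multiplicity combined with \cref{eq:diagonal-energy} for the blocked deletion. One small correction to your closing remark: $h$ is read off directly from the $\reg H$ label rather than recovered via $e_0$, and what $h$-independence of the coefficient vectors actually supplies is \cref{eq:diagonal-energy} itself.
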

\begin{proof}
We relabel the query as $y$ and write $k=|\Img(L\cup R)|$.
For $(x,y)\in R$ with $y\notin\Img(L\cup(R\setminus\{(x,y)\}))$,
\begin{equation}
  \bar A^\dagger\ket y\ket L\ket R=\frac{\ket x\ket L\ket{R\setminus\{(x,y)\}}}{\sqrt{2^n-k+1}}; \label{eq:bar-a-adjoint-action}
\end{equation}
otherwise the result is zero.
Removing this pair reduces the image size to $k-1$, so
\[
  \bar A\ket x\ket L\ket{R\setminus\{(x,y)\}}=\frac1{\sqrt{2^n-k+1}}\sum_{z\notin\Img(L\cup(R\setminus\{(x,y)\}))}\ket z\ket L\ket{(R\setminus\{(x,y)\})\cup\{(x,z)\}}.
\]
The term $z=y$ gives \cref{eq:bar-a-adjoint-action}.
The query label $y$ fixes the unique pair removed from $R$; no input contributes if the conditions fail.

\paragraph{The two orders.}
Suppose $(x,y)$ satisfies these conditions and $y\neq y_0$.
If $y_0\notin\Img(L\cup R)$, then
\begin{align*}
  \bar A^\dagger J\ket y\ket{e_0}\ket L\ket R&=\frac{\ket x\ket{L\cup\{e_0\}}\ket{R\setminus\{(x,y)\}}}{\sqrt{2^n-k}},\\
  J\bar A^\dagger\ket y\ket{e_0}\ket L\ket R&=\frac{\ket x\ket{L\cup\{e_0\}}\ket{R\setminus\{(x,y)\}}}{\sqrt{2^n-k+1}}.
\end{align*}
If $y_0\in\Img(L\cup R)$, both orders give
\[
  \frac{\sqrt{\operatorname{num}(L,e_0)+1}}{\sqrt{2^n-k+1}}\ket x\ket{L\cup\{e_0\}}\ket{R\setminus\{(x,y)\}}.
\]
Deletion leaves $L$ unchanged, and insertion of output $y_0\neq y$ does not affect the deletion condition.
If $y$ cannot be removed, both orders give zero.

For $y=y_0$ satisfying the deletion condition, $R$ has exactly one pair with output $y_0$ and $L$ has none.
Inserting $e_0$ into $L$ prevents deletion from $R$, so
\begin{align*}
  \bar A^\dagger J\ket{y_0}\ket{e_0}\ket L\ket R&=0,\\
  J\bar A^\dagger\ket{y_0}\ket{e_0}\ket L\ket R&=\frac{\ket x\ket{L\cup\{e_0\}}\ket{R\setminus\{(x,y_0)\}}}{\sqrt{2^n-k+1}}.
\end{align*}
The insertion weight is one because $y_0\notin\Img(L)$.
If $y_0$ cannot be removed, adding a pair to $L$ cannot make it removable, so both orders give zero.
For the common terms,
\[
  \frac1{\sqrt{2^n-k}}-\frac1{\sqrt{2^n-k+1}}=\frac{f(2^n-k+1)}{\sqrt{2^n-k+1}}.
\]
We define
\[
  \ket\beta=\ket x\ket{\alpha_{c,r,y,L,R}}\ket h\ket{L\cup\{e_0\}}\ket{R\setminus\{(x,y)\}}.
\]
Then
\begin{align*}
  [\bar A^\dagger,J]\ket\phi&=\ket a-\ket c,\\
  \ket a&=\frac1{\sqrt{2^{n+1}}}\sum_{\substack{c,r,h,y,x,L,R\\y\neq y_0,\ y_0\notin\Img(L\cup R)\\(x,y)\in R,\ y\notin\Img(L\cup(R\setminus\{(x,y)\}))}}\frac{f(2^n-k+1)}{\sqrt{2^n-k+1}}\ket\beta,\\
  \ket c&=\frac1{\sqrt{2^{n+1}}}\sum_{\substack{c,r,h,x,y,L,R\\y=y_0,\ (x,y)\in R\\y\notin\Img(L\cup(R\setminus\{(x,y)\}))}}\frac{\ket\beta}{\sqrt{2^n-k+1}}.
\end{align*}

\paragraph{The normalization term.}
Let
\[
  \ket\eta=\frac1{\sqrt{2^{n+1}}}\sum_{\substack{c,r,h,y,L,R\\y\neq y_0,\ y_0\notin\Img(L\cup R)}}f(2^n-k+1)\ket y\ket{\alpha_{c,r,y,L,R}}\ket h\ket{e_0}\ket L\ket R.
\]
The vector satisfies $\ket a=J\bar A^\dagger\ket\eta$.
By \cref{eq:ratio}, $f(2^n-k+1)\leq1/[2(2^n-q)]$.
For every nonzero deletion, $R$ contains at least one pair before deletion, so $|L|\leq q-1$.
For $q\geq1$, this gives
\begin{align*}
  \|a\|_2^2&\leq q\|\bar A^\dagger\eta\|_2^2\leq q\|\eta\|_2^2\\
  &=\frac q{2^{n+1}}\sum_{\substack{c,r,h,y,L,R\\y\neq y_0,\ y_0\notin\Img(L\cup R)}}f(2^n-k+1)^2\|\alpha_{c,r,y,L,R}\|_2^2\\
  &\leq\frac q{4(2^n-q)^2}\|\phi\|_2^2.
\end{align*}
For $q=0$, $\ket a=0$.

\paragraph{Blocked deletion.}
For $\ket c$, fix the final labels $(x,h,L',R')$.
Deletion followed by insertion preserves total record size, so $|L'|\leq q$.
Each possible $e_0$ is one of the pairs in $L'$, and
\[
  e_0=(x_{0,c},(r,h)),\qquad L=L'\setminus\{e_0\},\qquad R=R'\cup\{(x,(r,h))\}.
\]
Thus choosing $e_0$ recovers $c,r$, the old query $y_0$, and the old records.
Write
\[
  \mu=(c,r,h,x,L,R),\qquad \lambda_\mu=(x,h,L\cup\{e_0\},R\setminus\{(x,y_0)\}),
\]
Here $\mu$ ranges over the terms in the sum defining $\ket c$.
Let $\alpha_\mu=\alpha_{c,r,(r,h),L,R}$ and $k_\mu=|\Img(L\cup R)|$.
There are at most $q$ indices with the same final label.
For fixed $R,y_0$, at most one $x$ satisfies the deletion condition, so enlarging the sum does not count any coefficient twice.
By \cref{eq:matching-labels,eq:diagonal-energy},
\begin{align*}
  \|c\|_2^2&=\frac1{2^{n+1}}\sum_\lambda\left\|\sum_{\substack{\mu\\\lambda_\mu=\lambda}}\frac{\alpha_\mu}{\sqrt{2^n-k_\mu+1}}\right\|_2^2\\
  &\leq\frac1{2^{n+1}}\sum_\mu\frac{\|\alpha_\mu\|_2^2}{2^n-k_\mu+1}\sum_{\substack{\widetilde\mu\\\lambda_{\widetilde\mu}=\lambda_\mu}}1\\
  &\leq\frac q{2^{n+1}}\sum_{\substack{c,r,h,x,L,R\\(x,y_0)\in R\\y_0\notin\Img(L\cup(R\setminus\{(x,y_0)\}))}}\frac{\|\alpha_{c,r,(r,h),L,R}\|_2^2}{2^n-k+1}\\
  &\leq\frac q{2^{n+1}(2^n-q+1)}\sum_{c,r,h,L,R}\|\alpha_{c,r,(r,h),L,R}\|_2^2\\
  &\leq\frac q{2^{|\reg H|}(2^n-q+1)}\|\phi\|_2^2.
\end{align*}
The triangle inequality gives
\[
  \|[\bar A^\dagger,J]\phi\|_2\leq\|a\|_2+\|c\|_2\leq\left(\frac{\sqrt q}{2(2^n-q)}+\sqrt{\frac q{2^{|\reg H|}(2^n-q+1)}}\right)\|\phi\|_2.
\]
\end{proof}

\begin{lemma}[Conjugate deletion]
\label{lem:commutator-bar-b-dagger}
Let $0\leq q\leq2^n-2$ and let $\ket\phi$ have the form in \cref{eq:mixed-experiment-state-family}, with coefficient vectors independent of $h$ and supported on $|L\cup R|\leq q$.
Then
\[
  \|[\bar B^\dagger,J]\phi\|_2\leq\left(\frac{\sqrt q}{2(2^n-q)}+\sqrt{\frac2{2^n-q}}+\sqrt{\frac2{2^n-q+1}}\right)\|\phi\|_2.
\]
\end{lemma}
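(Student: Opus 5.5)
We follow the template of \cref{lem:commutator-b-dagger} with the roles of $\reg{E_L},\reg{E_R}$ exchanged. The map $\bar B^\dagger=\operatorname{SWAP}B^\dagger\operatorname{SWAP}$ deletes a pair $(x,y)$ from $L$, identified by the query input $x$. The deletion condition is $(x,y)\in L$ with $x\notin\Dom((L\setminus\{(x,y)\})\cup R)$, and the weight is $1/\sqrt{2^n-d+1}$ with $d=|\Dom(L\cup R)|$. Since insertion now acts on the same record that $\bar B^\dagger$ deletes from, the case analysis resembles \cref{lem:commutator-a-dagger}, with inputs in place of outputs.

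We expand $\bar B^\dagger J\ket\phi$ and $J\bar B^\dagger\ket\phi$ by cases on the query $x$ relative to $x_{0,c}$.

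\emph{Case $x\neq x_{0,c}$.} If $x_{0,c}\notin\Dom(L\cup R)$, the two orders differ only through the denominators $\sqrt{2^n-d}$ versus $\sqrt{2^n-d+1}$. If $x_{0,c}\in\Dom(L\cup R)$, the two orders agree exactly, including the weight $\sqrt{\operatorname{num}(L,e_0)+1}$, because deleting a pair with input $x\neq x_{0,c}$ does not change the count of $e_0$.

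\emph{Case $x=x_{0,c}$.} There are two sources of discrepancy:
\begin{itemize}
\item[(i)] If $x_{0,c}\notin\Dom(L\cup R)$, then $\bar B^\dagger J$ deletes the freshly inserted $e_0$, producing output $y_0=(r,h)$ and old records $L,R$. The order $J\bar B^\dagger$ gives zero.
\item[(ii)] If $L$ contains exactly one pair $(x_{0,c},y)$ and no input $x_{0,c}$ occurs in $R$, then insertion blocks deletion in $\bar B^\dagger J$. The order $J\bar B^\dagger$ deletes $(x_{0,c},y)$ and then inserts $e_0$ with weight one.
\end{itemize}

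This gives $[\bar B^\dagger,J]\ket\phi=\ket a+\ket b-\ket c$. We bound the three terms as follows.

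\emph{The normalization term $\ket a$.} As in the earlier lemmas, write $\ket a=J\bar B^\dagger\ket\eta$ and use \cref{eq:ratio}, $f\le 1/[2(2^n-q)]$. After a nonzero deletion the record $L$ has size at most $q-1$, so the squared norm of $J$ contributes at most $q$. Together with \cref{eq:coefficient-energy}, this gives $\|a\|_2\le\frac{\sqrt q}{2(2^n-q)}\|\phi\|_2$.

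\emph{The deletion term $\ket b$.} Its final labels are the query $y_0=(r,h)$, together with $h$, $L$, $R$. The query label determines $r$ and $h$, leaving at most two preimages, one for each challenge bit $c$. By \cref{eq:matching-labels} with multiplicity $2$, and summing over all query coefficients via \cref{eq:coefficient-energy}, we get $\|b\|_2^2\le\frac{2}{2^n-q}\|\phi\|_2^2$. Unlike the $\ket b$ term for $A^\dagger$, no diagonal-energy argument is needed here, because the output label $y_0$ itself reveals $(r,h)$.

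\emph{The blocked-deletion term $\ket c$.} Fix the final labels $(y,h,L',R)$. For each $c\in\{0,1\}$, the pair $e_0$ is forced to be the unique pair of $L'$ with input $x_{0,c}$. This fixes $r$ and recovers the old records $L=(L'\setminus\{e_0\})\cup\{(x_{0,c},y)\}$, so each final label has at most two preimages. Uniqueness of the deletable $y$ for fixed $L$ and $x_{0,c}$ prevents double counting. This gives $\|c\|_2^2\le\frac{2}{2^n-q+1}\|\phi\|_2^2$.

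The triangle inequality then yields the stated bound. We do not need to check orthogonality between the terms.

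The main obstacle is the bookkeeping in $\ket b$ and $\ket c$: one must verify that the final labels really pin down all old labels up to the choice of $c$. In particular, for $\ket b$, the removed $e_0$ is no longer visible in the record, so injectivity must come from the query register alone. One must also confirm that the insertion weights in the unmatched terms equal one. For $\ket c$ this holds because $x_{0,c}\notin\Dom(L\setminus\{(x_{0,c},y)\})$, so after the deletion $L$ contains no copy of $e_0$.
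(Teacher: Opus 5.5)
Your proposal is correct and follows the paper's proof essentially step for step. You use the same decomposition $[\bar B^\dagger,J]\ket\phi=\ket a+\ket b-\ket c$ into the denominator-change term, the deletion of the freshly inserted $e_0$, and the blocked deletion, with the same squared-norm bounds $q/[4(2^n-q)^2]$, $2/(2^n-q)$ and $2/(2^n-q+1)$ (the last two from the two choices of challenge bit) and the triangle inequality; the only cosmetic difference is that for $\ket b$ the paper cites \cref{eq:matching-sums} instead of \cref{eq:matching-labels}, which gives the same factor of two.
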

\begin{proof}
Let $d=|\Dom(L\cup R)|$.
For $(x,y)\in L$ with $x\notin\Dom((L\setminus\{(x,y)\})\cup R)$,
\begin{equation}
  \bar B^\dagger\ket x\ket L\ket R=\frac{\ket y\ket{L\setminus\{(x,y)\}}\ket R}{\sqrt{2^n-d+1}}; \label{eq:bar-b-adjoint-action}
\end{equation}
otherwise the result is zero.
Removing this pair reduces the domain size to $d-1$, so
\[
  \bar B\ket y\ket{L\setminus\{(x,y)\}}\ket R=\frac1{\sqrt{2^n-d+1}}\sum_{z\notin\Dom((L\setminus\{(x,y)\})\cup R)}\ket z\ket{(L\setminus\{(x,y)\})\cup\{(z,y)\}}\ket R.
\]
The term $z=x$ gives \cref{eq:bar-b-adjoint-action}.
The query label $x$ fixes the unique pair removed from $L$; no input contributes if the conditions fail.

\paragraph{The two orders.}
Suppose $(x,y)$ satisfies these conditions and $x\neq x_{0,c}$.
If $x_{0,c}\notin\Dom(L\cup R)$, then
\begin{align*}
  \bar B^\dagger J\ket x\ket{e_0}\ket L\ket R&=\frac{\ket y\ket{(L\setminus\{(x,y)\})\cup\{e_0\}}\ket R}{\sqrt{2^n-d}},\\
  J\bar B^\dagger\ket x\ket{e_0}\ket L\ket R&=\frac{\ket y\ket{(L\setminus\{(x,y)\})\cup\{e_0\}}\ket R}{\sqrt{2^n-d+1}}.
\end{align*}
If $x_{0,c}\in\Dom(L\cup R)$, both orders give
\[
  \frac{\sqrt{\operatorname{num}(L,e_0)+1}}{\sqrt{2^n-d+1}}\ket y\ket{(L\setminus\{(x,y)\})\cup\{e_0\}}\ket R.
\]
Since $x\neq x_{0,c}$, deleting $(x,y)$ does not change the number of copies of $e_0$.
If $x$ cannot be removed, adding a pair with different input does not change this, and both orders give zero.

For $x=x_{0,c}\notin\Dom(L\cup R)$, deletion after insertion removes $e_0$:
\[
  \bar B^\dagger J\ket{x_{0,c}}\ket{e_0}\ket L\ket R=\frac{\ket{y_0}\ket L\ket R}{\sqrt{2^n-d}},\qquad J\bar B^\dagger\ket{x_{0,c}}\ket{e_0}\ket L\ket R=0.
\]
For $x=x_{0,c}$ with exactly one pair $(x_{0,c},y)$ in $L$ and no input $x_{0,c}$ in $R$, insertion gives two pairs with this input.
Hence
\begin{align*}
  \bar B^\dagger J\ket{x_{0,c}}\ket{e_0}\ket L\ket R&=0,\\
  J\bar B^\dagger\ket{x_{0,c}}\ket{e_0}\ket L\ket R&=\frac{\ket y\ket{(L\setminus\{(x_{0,c},y)\})\cup\{e_0\}}\ket R}{\sqrt{2^n-d+1}}.
\end{align*}
The insertion weight in the second line is one because deletion removes the only input $x_{0,c}$.
If $x_{0,c}$ occurs in $R$ or at least twice in $L$, both orders give zero.
For the common terms,
\[
  \frac1{\sqrt{2^n-d}}-\frac1{\sqrt{2^n-d+1}}=\frac{f(2^n-d+1)}{\sqrt{2^n-d+1}}.
\]
We define
\[
  \ket\beta=\ket y\ket{\alpha_{c,r,x,L,R}}\ket h\ket{(L\setminus\{(x,y)\})\cup\{e_0\}}\ket R.
\]
Then
\begin{align*}
  [\bar B^\dagger,J]\ket\phi&=\ket a+\ket b-\ket c,\\
  \ket a&=\frac1{\sqrt{2^{n+1}}}\sum_{\substack{c,r,h,x,y,L,R\\x\neq x_{0,c},\ x_{0,c}\notin\Dom(L\cup R)\\(x,y)\in L,\ x\notin\Dom((L\setminus\{(x,y)\})\cup R)}}\frac{f(2^n-d+1)}{\sqrt{2^n-d+1}}\ket\beta,\\
  \ket b&=\frac1{\sqrt{2^{n+1}}}\sum_{\substack{c,r,h,L,R\\x_{0,c}\notin\Dom(L\cup R)}}\frac{\ket{(r,h)}\ket{\alpha_{c,r,x_{0,c},L,R}}\ket h\ket L\ket R}{\sqrt{2^n-d}},\\
  \ket c&=\frac1{\sqrt{2^{n+1}}}\sum_{\substack{c,r,h,x,y,L,R\\x=x_{0,c},\ (x,y)\in L\\x\notin\Dom((L\setminus\{(x,y)\})\cup R)}}\frac{\ket\beta}{\sqrt{2^n-d+1}}.
\end{align*}
The insertion weights in these sums are one.

\paragraph{The normalization term.}
Let
\[
  \ket\eta=\frac1{\sqrt{2^{n+1}}}\sum_{\substack{c,r,h,x,L,R\\x\neq x_{0,c},\ x_{0,c}\notin\Dom(L\cup R)}}f(2^n-d+1)\ket x\ket{\alpha_{c,r,x,L,R}}\ket h\ket{e_0}\ket L\ket R.
\]
The vector satisfies $\ket a=J\bar B^\dagger\ket\eta$.
By \cref{eq:ratio}, $f(2^n-d+1)\leq1/[2(2^n-q)]$.
After deletion, $L$ has size at most $q-1$, so, for $q\geq1$,
\begin{align*}
  \|a\|_2^2&\leq q\|\bar B^\dagger\eta\|_2^2\leq q\|\eta\|_2^2\\
  &=\frac q{2^{n+1}}\sum_{\substack{c,r,h,x,L,R\\x\neq x_{0,c},\ x_{0,c}\notin\Dom(L\cup R)}}f(2^n-d+1)^2\|\alpha_{c,r,x,L,R}\|_2^2\\
  &\leq\frac q{4(2^n-q)^2}\|\phi\|_2^2.
\end{align*}
For $q=0$, $\ket a=0$.

\paragraph{Deletion of the inserted pair.}
In $\ket b$, the final query label $(r,h)$ fixes $r,h$, while the two values of $c$ can give the same output labels.
By \cref{eq:matching-sums,eq:coefficient-energy},
\begin{align*}
  \|b\|_2^2&=\frac1{2^{n+1}}\sum_{r,h,L,R}\left\|\sum_{\substack{c\in\{0,1\}\\x_{0,c}\notin\Dom(L\cup R)}}\frac{\alpha_{c,r,x_{0,c},L,R}}{\sqrt{2^n-d}}\right\|_2^2\\
  &\leq\frac2{2^{n+1}(2^n-q)}\sum_{c,r,h,L,R}\|\alpha_{c,r,x_{0,c},L,R}\|_2^2\\
  &\leq\frac2{2^n-q}\cdot\frac1{2^{n+1}}\sum_{c,r,h,x,L,R}\|\alpha_{c,r,x,L,R}\|_2^2\\
  &=\frac2{2^n-q}\|\phi\|_2^2.
\end{align*}

\paragraph{Blocked deletion.}
For $\ket c$, fix the final labels $(y,h,L',R)$.
For each $c\in\{0,1\}$, $e_0$ is the unique pair in $L'$ with input $x_{0,c}$.
Its output fixes $r,h$, and
\[
  L=(L'\setminus\{e_0\})\cup\{(x_{0,c},y)\},\qquad \text{input query label}=x_{0,c}.
\]
Thus there are at most two choices of old labels.
Write
\[
  \mu=(c,r,h,y,L,R),\qquad \lambda_\mu=(y,h,(L\setminus\{(x_{0,c},y)\})\cup\{e_0\},R),
\]
Here $\mu$ ranges over the terms in the sum defining $\ket c$.
Let $\alpha_\mu=\alpha_{c,r,x_{0,c},L,R}$ and $d_\mu=|\Dom(L\cup R)|$.
For fixed $L,x_{0,c}$, at most one $y$ satisfies the deletion condition, so enlarging the sum does not count any coefficient twice.
By \cref{eq:matching-labels},
\begin{align*}
  \|c\|_2^2&=\frac1{2^{n+1}}\sum_\lambda\left\|\sum_{\substack{\mu\\\lambda_\mu=\lambda}}\frac{\alpha_\mu}{\sqrt{2^n-d_\mu+1}}\right\|_2^2\\
  &\leq\frac1{2^{n+1}}\sum_\mu\frac{\|\alpha_\mu\|_2^2}{2^n-d_\mu+1}\sum_{\substack{\widetilde\mu\\\lambda_{\widetilde\mu}=\lambda_\mu}}1\\
  &\leq\frac2{2^{n+1}}\sum_{\substack{c,r,h,y,L,R\\(x_{0,c},y)\in L\\x_{0,c}\notin\Dom((L\setminus\{(x_{0,c},y)\})\cup R)}}\frac{\|\alpha_{c,r,x_{0,c},L,R}\|_2^2}{2^n-d+1}\\
  &\leq\frac2{2^n-q+1}\cdot\frac1{2^{n+1}}\sum_{c,r,h,x,L,R}\|\alpha_{c,r,x,L,R}\|_2^2\\
  &=\frac2{2^n-q+1}\|\phi\|_2^2.
\end{align*}
The triangle inequality gives
\begin{align*}
  \|[\bar B^\dagger,J]\phi\|_2&\leq\|a\|_2+\|b\|_2+\|c\|_2\\
  &\leq\left(\frac{\sqrt q}{2(2^n-q)}+\sqrt{\frac2{2^n-q}}+\sqrt{\frac2{2^n-q+1}}\right)\|\phi\|_2.
\end{align*}
\end{proof}

\endgroup

\subsection{Constants in the recording comparison}
\label{app:comparison}

\Cref{lem:commutator-a,lem:commutator-b,lem:commutator-a-dagger,lem:commutator-b-dagger,lem:commutator-bar-a,lem:commutator-bar-b,lem:commutator-bar-a-dagger,lem:commutator-bar-b-dagger} give $\|[X,J]\phi\|_2\leq\epsilon_X\|\phi\|_2$.
Taking nonnegative square roots where needed, the coefficients are
\begin{align}
  \epsilon_A^2&=\frac2{2^n-q}+\frac{q+2}{4(2^n-q-1)^2},\nonumber\\
  \epsilon_B^2&=\frac1{2^n-q}+\frac{q+1}{4(2^n-q-1)^2},\nonumber\\
  \epsilon_{A^\dagger}&=\frac{\sqrt q}{2(2^n-q)}+\frac1{2^{|\reg H|}}\sqrt{\frac{2^n}{2^n-q}}+\sqrt{\frac q{2^{|\reg H|}(2^n-q+1)}},\nonumber\\
  \epsilon_{B^\dagger}&=\frac{\sqrt{q+1}}{2(2^n-q)}+\sqrt{\frac2{2^n-q+1}},\nonumber\\
  \epsilon_{\bar A}&=\epsilon_B,\qquad \epsilon_{\bar B}=\epsilon_A,\nonumber\\
  \epsilon_{\bar A^\dagger}&=\frac{\sqrt q}{2(2^n-q)}+\sqrt{\frac q{2^{|\reg H|}(2^n-q+1)}},\nonumber\\
  \epsilon_{\bar B^\dagger}&=\frac{\sqrt q}{2(2^n-q)}+\sqrt{\frac2{2^n-q}}+\sqrt{\frac2{2^n-q+1}}. \label{eq:eight-epsilons}
\end{align}
The following expansions give the four commutator coefficients used in \cref{eq:v-commutator,eq:query-commutator}.
Expanding the definitions and taking their adjoints gives
\begin{align*}
  V&=A-ABB^\dagger+B^\dagger-AA^\dagger B^\dagger,\\
  V^\dagger&=A^\dagger-BB^\dagger A^\dagger+B-BAA^\dagger,\\
  \bar V&=\bar A-\bar A\bar B\bar B^\dagger+\bar B^\dagger-\bar A\bar A^\dagger\bar B^\dagger,\\
  \bar V^\dagger&=\bar A^\dagger-\bar B\bar B^\dagger\bar A^\dagger+\bar B-\bar B\bar A\bar A^\dagger.
\end{align*}
Since
\[
  [XY,J]=XYJ-JXY=X(YJ-JY)+(XJ-JX)Y=X[Y,J]+[X,J]Y,
\]
we obtain
\begin{align*}
  [V,J]={}&[A,J]-AB[B^\dagger,J]-A[B,J]B^\dagger-[A,J]BB^\dagger\\
  &+[B^\dagger,J]-AA^\dagger[B^\dagger,J]-A[A^\dagger,J]B^\dagger-[A,J]A^\dagger B^\dagger,\\
  [V^\dagger,J]={}&[A^\dagger,J]-BB^\dagger[A^\dagger,J]-B[B^\dagger,J]A^\dagger-[B,J]B^\dagger A^\dagger\\
  &+[B,J]-BA[A^\dagger,J]-B[A,J]A^\dagger-[B,J]AA^\dagger,\\
  [\bar V,J]={}&[\bar A,J]-\bar A\bar B[\bar B^\dagger,J]-\bar A[\bar B,J]\bar B^\dagger-[\bar A,J]\bar B\bar B^\dagger\\
  &+[\bar B^\dagger,J]-\bar A\bar A^\dagger[\bar B^\dagger,J]-\bar A[\bar A^\dagger,J]\bar B^\dagger-[\bar A,J]\bar A^\dagger\bar B^\dagger,\\
  [\bar V^\dagger,J]={}&[\bar A^\dagger,J]-\bar B\bar B^\dagger[\bar A^\dagger,J]-\bar B[\bar B^\dagger,J]\bar A^\dagger-[\bar B,J]\bar B^\dagger\bar A^\dagger\\
  &+[\bar B,J]-\bar B\bar A[\bar A^\dagger,J]-\bar B[\bar A,J]\bar A^\dagger-[\bar B,J]\bar A\bar A^\dagger.
\end{align*}
The operators to the right of these commutators belong to
\begin{align*}
  &\{\id,B^\dagger,BB^\dagger,A^\dagger B^\dagger,A^\dagger,B^\dagger A^\dagger,AA^\dagger\},\\
  &\{\id,\bar B^\dagger,\bar B\bar B^\dagger,\bar A^\dagger\bar B^\dagger,\bar A^\dagger,\bar B^\dagger\bar A^\dagger,\bar A\bar A^\dagger\}.
\end{align*}
Their norms are at most one, they act trivially on $\reg{HE_0}$, and they do not increase $|L\cup R|$.
Indeed, a deletion reduces the total size by one, and a deletion followed by a creation preserves it.
Thus each resulting state still has the form in \cref{eq:mixed-experiment-state-family}.
The operators to the left have norm at most one as well.
The triangle inequality and the eight commutator bounds give
\begin{align*}
  \|[V,J]\phi\|_2&\leq(3\epsilon_A+\epsilon_B+\epsilon_{A^\dagger}+3\epsilon_{B^\dagger})\|\phi\|_2,\\
  \|[V^\dagger,J]\phi\|_2&\leq(\epsilon_A+3\epsilon_B+3\epsilon_{A^\dagger}+\epsilon_{B^\dagger})\|\phi\|_2,\\
  \|[\bar V,J]\phi\|_2&\leq(3\epsilon_{\bar A}+\epsilon_{\bar B}+\epsilon_{\bar A^\dagger}+3\epsilon_{\bar B^\dagger})\|\phi\|_2,\\
  \|[\bar V^\dagger,J]\phi\|_2&\leq(\epsilon_{\bar A}+3\epsilon_{\bar B}+3\epsilon_{\bar A^\dagger}+\epsilon_{\bar B^\dagger})\|\phi\|_2.
\end{align*}
For example, the expansion of $[V,J]$ has three terms containing $[A,J]$, one containing $[B,J]$, one containing $[A^\dagger,J]$, and three containing $[B^\dagger,J]$.
Since $\|\phi'_t\|_2\leq1$, these coefficients bound each query error in the telescoping identity.

\subsubsection{Bounds on the commutator constants}
\label{app:commutator-constants}

\begin{proof}[Proof of \cref{lem:query-commutator}]
For $1\leq q<2^{|\reg H|}$ and $|\reg R|\geq|\reg H|$, we have $|\reg H|\geq1$, $n\geq2$, and
\[
  2^n\geq2^{2|\reg H|},\qquad q\leq2^{|\reg H|}-1\leq\frac{2^{2|\reg H|}}4\leq2^{n-2}.
\]
The middle inequality follows from $(2^{|\reg H|}-2)^2\geq0$.
Consequently,
\begin{align*}
  2^n-q&\geq\frac34\,2^n\geq2^{n-1},\\
  2^n-q-1&\geq\frac34\,2^n-1\geq2^{n-1},\\
  2^n-q+1&\geq\frac34\,2^n,\\
  q+1&\leq2^{|\reg H|}\leq2^{n-1},\qquad q+2\leq2^{|\reg H|}+1\leq2^n.
\end{align*}
Here $n\geq2$, so $2^n/4\geq1$.
Using the weaker denominator bounds $2^n-q,2^n-q-1\geq2^{n-1}$ first,
\begin{align*}
  \epsilon_A^2&\leq\frac4{2^n}+\frac{q+2}{2^{2n}}\leq\frac5{2^n},\\
  \epsilon_B^2&\leq\frac2{2^n}+\frac{q+1}{2^{2n}}\leq\frac3{2^n},\\
  \epsilon_A&\leq\frac{\sqrt5}{2^{n/2}}\leq\frac3{2^{|\reg H|}},\qquad \epsilon_B\leq\frac{\sqrt3}{2^{n/2}}\leq\frac2{2^{|\reg H|}}.
\end{align*}
Also,
\begin{align*}
  \frac{\sqrt q}{2(2^n-q)}&\leq\frac{\sqrt{2^{n-1}}}{2^n}=\frac1{\sqrt{2\cdot2^n}},\\
  \frac{\sqrt{q+1}}{2(2^n-q)}&\leq\frac1{\sqrt{2\cdot2^n}},\\
  \sqrt{\frac q{2^{|\reg H|}(2^n-q+1)}}&\leq\sqrt{\frac{2^{|\reg H|}}{2^{|\reg H|}\,2^{n-1}}}=\sqrt{\frac2{2^n}},\\
  \sqrt{\frac{2^n}{2^n-q}}&\leq\sqrt2,\qquad \sqrt{\frac2{2^n-q+1}}\leq\frac2{2^{n/2}}.
\end{align*}
Therefore
\begin{align*}
  \epsilon_{A^\dagger}&\leq\frac1{\sqrt{2\cdot2^n}}+\frac{\sqrt2}{2^{|\reg H|}}+\sqrt{\frac2{2^n}}\leq\frac5{\sqrt2\,2^{|\reg H|}}\leq\frac4{2^{|\reg H|}},\\
  \epsilon_{B^\dagger}&\leq\frac1{\sqrt{2\cdot2^n}}+\frac2{2^{n/2}}\leq\frac3{2^{|\reg H|}}.
\end{align*}
For the conjugate terms, use $2^n-q,2^n-q+1\geq3\cdot2^n/4$:
\begin{align*}
  \frac{\sqrt q}{2(2^n-q)}&\leq\frac{2\sqrt{2^{|\reg H|}}}{3\cdot2^n}\leq\frac2{3\sqrt{2^{|\reg H|}}}\frac1{2^{|\reg H|}}\leq\frac{\sqrt2}{3\cdot2^{|\reg H|}},\\
  \sqrt{\frac q{2^{|\reg H|}(2^n-q+1)}}&\leq\frac2{\sqrt3\,2^{n/2}}\leq\frac2{\sqrt3\,2^{|\reg H|}},\\
  \sqrt{\frac2{2^n-q}}&\leq\frac{\sqrt{8/3}}{2^{n/2}}\leq\frac{\sqrt{8/3}}{2^{|\reg H|}},\\
  \sqrt{\frac2{2^n-q+1}}&\leq\frac{\sqrt{8/3}}{2^{n/2}}\leq\frac{\sqrt{8/3}}{2^{|\reg H|}}.
\end{align*}
Since $\sqrt2\leq2$, $2/\sqrt3\leq4/3$, and $\sqrt{8/3}\leq5/3$, we obtain
\begin{align*}
  \epsilon_{\bar A}&=\epsilon_B\leq\frac2{2^{|\reg H|}},\qquad \epsilon_{\bar B}=\epsilon_A\leq\frac3{2^{|\reg H|}},\\
  \epsilon_{\bar A^\dagger}&\leq\frac{\sqrt2/3+2/\sqrt3}{2^{|\reg H|}}\leq\frac{2/3+4/3}{2^{|\reg H|}}=\frac2{2^{|\reg H|}},\\
  \epsilon_{\bar B^\dagger}&\leq\frac{\sqrt2/3+2\sqrt{8/3}}{2^{|\reg H|}}\leq\frac{2/3+10/3}{2^{|\reg H|}}=\frac4{2^{|\reg H|}}.
\end{align*}
Squaring the three positive inequalities gives $2\leq4$, $12\leq16$, and $24\leq25$, respectively.
Substituting these bounds in the four query estimates gives
\begin{align*}
  3\epsilon_A+\epsilon_B+\epsilon_{A^\dagger}+3\epsilon_{B^\dagger}&\leq\frac{3\cdot3+2+4+3\cdot3}{2^{|\reg H|}}=\frac{24}{2^{|\reg H|}},\\
  \epsilon_A+3\epsilon_B+3\epsilon_{A^\dagger}+\epsilon_{B^\dagger}&\leq\frac{3+3\cdot2+3\cdot4+3}{2^{|\reg H|}}=\frac{24}{2^{|\reg H|}},\\
  3\epsilon_{\bar A}+\epsilon_{\bar B}+\epsilon_{\bar A^\dagger}+3\epsilon_{\bar B^\dagger}&\leq\frac{3\cdot2+3+2+3\cdot4}{2^{|\reg H|}}=\frac{23}{2^{|\reg H|}},\\
  \epsilon_{\bar A}+3\epsilon_{\bar B}+3\epsilon_{\bar A^\dagger}+\epsilon_{\bar B^\dagger}&\leq\frac{2+3\cdot3+3\cdot2+4}{2^{|\reg H|}}=\frac{21}{2^{|\reg H|}}.
\end{align*}
Thus all four query errors have coefficient at most $24/2^{|\reg H|}$, as claimed in \cref{eq:query-commutator}.
\end{proof}

\section{Query-bound reductions}
\label{app:query-reductions}

The controlled-query replacement preserves both Haar-averaged experiment states with the same query count.
We prove this equality here and verify the constants in \cref{cor:query-lower-bounds}.

\subsection{Controlled queries}
\label{app:controlled-proof}

\begin{proof}[Proof of \cref{cor:controlled-queries}]
At $q=0$ there are no queries to replace.
For $q\geq1$, fix the query types $U_1,\ldots,U_q$ and put
\[
  \sigma_i=\begin{cases}1,&U_i\in\{U,U^{\mathsf T}\},\\-1,&U_i\in\{U^\dagger,U^*\}.\end{cases}
\]
Fix either experiment and purify its input, including all reference registers.

\paragraph{Grouping the control branches.}
Expand each controlled query as the sum of its two control branches.
A choice $b=(b_1,\ldots,b_q)\in\{0,1\}^q$ has signed query count $\sum_i\sigma_i b_i$.
Grouping these terms by signed count gives the original output vector
\[
  \sum_{k=-q}^q\ket{v_k(U)}.
\]
Each term is linear in the initial input vector.

\paragraph{The uncontrolled circuit.}
We use the construction of Tang--Wright~\cite[Theorem~1.1 and Claim~2.6]{TW25}, with a counter taken modulo $2q+1$ and a fixed maximally entangled auxiliary pair.
On an active branch, the query acts on $\reg Q$ and the counter changes by $\sigma_i$.
On an inactive branch, controlled swaps route the same single query to the auxiliary pair.
Forward and inverse queries use its first half; conjugate and transpose queries use its second half.
Their Claim~2.6 gives the replacement output
\[
  \sum_{k=-q}^q\ket{v_k(U)}\ket k\ket{w_k(U)},\qquad \|w_k(U)\|_2=1.
\]
For a fixed query schedule, the auxiliary vector depends only on $U,k$, not on the individual control choices with that count.
Indeed,
\[
  \ket{w_k(U)}=(U^{\sum_{i=1}^q\sigma_i-k}\otimes\id)\frac1{\sqrt{2^n}}\sum_{x\in[2^n]}\ket{x,x}.
\]
Forward and inverse queries contribute their signed powers directly; conjugate and transpose queries contribute the same powers by $(\id\otimes U^{\mathsf T})\sum_x\ket{x,x}=(U\otimes\id)\sum_x\ket{x,x}$ and $U^*=(U^\dagger)^{\mathsf T}$.
Thus the auxiliary vector has unit norm.
The construction uses one uncontrolled query of the same type per original call.
Its added gates and initial registers are independent of $U$ and the input.
The induction in Claim~2.6 is linear in the initial vector, so the same construction applies pointwise to the oracle-dependent real input and preserves arbitrary reference entanglement.
Their Remark~2.7 also permits the real experiment's uncontrolled preparation query to remain unchanged.

The counter labels in $[-q,q]$ are distinct modulo $2q+1$.
Tracing out the counter and auxiliary pair therefore leaves
\[
  \sum_{k=-q}^q\proj{v_k(U)}.
\]
\paragraph{Equality after Haar averaging.}
It remains to compare this with the original Haar-averaged output.
Replacing $U$ by $e^{i\theta}U$ multiplies query type $i$ by $e^{i\sigma_i\theta}$.
The mixed preparation is independent of $U$, and the real preparation acquires one common global phase.
In either experiment, these preparation phases cancel in the outer product, so
\[
  \ketbra{v_k(e^{i\theta}U)}{v_{k'}(e^{i\theta}U)}=e^{i(k-k')\theta}\ketbra{v_k(U)}{v_{k'}(U)}.
\]
Haar invariance now gives, for every $\theta$,
\[
  \mathbb E_U\ketbra{v_k(U)}{v_{k'}(U)}=e^{i(k-k')\theta}\mathbb E_U\ketbra{v_k(U)}{v_{k'}(U)}.
\]
For $k\neq k'$, choose $\theta$ with $e^{i(k-k')\theta}\neq1$; the expectation is then zero.
Consequently,
\[
  \mathbb E_U\sum_{k,k'}\ketbra{v_k(U)}{v_{k'}(U)}=\mathbb E_U\sum_k\proj{v_k(U)}.
\]
Tracing out the experiment's hidden registers preserves this equality.
Thus the same replacement circuit preserves both averaged algorithm states with exactly $q$ queries.
Forward-only circuits remain forward-only.
The bounds now follow from \cref{thm:all-haar,cor:forward-haar}.
\end{proof}

\subsection{Constants in the query lower bounds}
\label{app:query-lower-bounds}

We give explicit constants for \cref{cor:query-lower-bounds}.
For forward-only access, \cref{eq:forward-haar-bound} implies
\[
  \Pr[a=b]-\frac12\leq\frac{12+3\sqrt2}{4}\frac q{2^{|\reg H|}}.
\]
Thus advantage at least $\eta>0$ requires $q\geq4\eta\,2^{|\reg H|}/(12+3\sqrt2)$ for every query count.

For all four query types, \cref{eq:all-haar-bound} gives
\[
  \Pr[a=b]-\frac12\leq\frac{25+24\sqrt2}{4}\frac q{2^{|\reg H|}}+\frac{9(q+1)^2}{2^{n/8}}.
\]
Fix $0<\eta\leq1/2$ and let
\[
  \kappa=\min\left\{\frac{\eta}{25+24\sqrt2},\sqrt{\frac\eta{144}}\right\}.
\]
For $1\leq q\leq\kappa\min\{2^{|\reg H|},2^{n/16}\}$, we have $q+1\leq2q$, so
\begin{align*}
  \Pr[a=b]-\frac12&\leq\frac{(25+24\sqrt2)\kappa}4+\frac{36q^2}{2^{n/8}}\\
  &\leq\frac{(25+24\sqrt2)\kappa}4+36\kappa^2\leq\frac\eta4+\frac\eta4=\frac\eta2<\eta.
\end{align*}
At $q=0$, the experiments agree by \cref{eq:forward-haar-bound}, since a circuit with no queries is the same in every access model.
This proves the stated asymptotic lower bound.
Every bidirectional algorithm is also an algorithm with access to all four types, so it obeys the same bound.
Controlled queries obey the same inequalities by \cref{cor:controlled-queries}.
For selected queries, \cref{cor:selected-queries} replaces $q$ by $4q$; use $\kappa/4$ in the preceding argument, which changes only the lower-bound constant.

\subsection{Decoding on a constant fraction of oracles}
\label{app:typical-decoding}

\begin{proof}[Proof of \cref{cor:typical-decoding}]
To calibrate the fixed decoder $\mathcal D$, we use $N=\lceil4/(p\delta)^2\rceil$ trials.

\paragraph{The trials and sign estimate.}
The auxiliary distinguisher holds its supplied challenge aside while preparing $N$ fresh real challenges, each with its own reference qubit.
It runs $\mathcal D$ on each radiation register and measures the corresponding reference and output with the Bell projector.
At initialization, it is supplied with $N+1$ tensor-product copies of the decoder's oracle-independent auxiliary state; no unknown state is cloned.
The trials use disjoint registers, so conditioned on $U$ they are independent Bernoulli variables with mean $f_{\mathcal D}(U)$.
Their empirical mean $\widehat f$ has conditional variance at most $1/(4N)$, and therefore
\[
  \mathbb E[|\widehat f-f_{\mathcal D}(U)|\mid U]\leq\frac1{2\sqrt N}.
\]
Let $s=1$ when $\widehat f\geq1/4$ and $s=-1$ otherwise.
A wrong sign requires $|\widehat f-f_{\mathcal D}(U)|\geq|f_{\mathcal D}(U)-1/4|$.
For either sign,
\[
  s\bigl(f_{\mathcal D}(U)-1/4\bigr)\geq|f_{\mathcal D}(U)-1/4|-2|\widehat f-f_{\mathcal D}(U)|.
\]
\paragraph{Testing the original challenge.}
The distinguisher now runs $\mathcal D$ on the original challenge and performs the Bell test, reversing its outcome when $s=-1$.
Conditioned on $U$, this run is independent of the trials.
In the mixed experiment, its Bell-test probability is $1/4$ for every $U$, even conditioned on the trial outcomes.
Its Haar-averaged acceptance gap is consequently at least
\[
  \mathbb E_U|f_{\mathcal D}(U)-1/4|-\frac1{\sqrt N}\geq p\delta-\frac1{\sqrt N}\geq\frac{p\delta}2.
\]
\paragraph{The query count.}
Each trial uses one forward preparation query and at most $q$ decoding queries, while the final run uses at most $q$ queries.
The total is at most $q+N(q+1)\leq(N+1)(q+1)$.
The auxiliary circuit and its initial state are independent of $U$, and its distinguishing advantage is at least $p\delta/4$.
It uses the same access model as the decoder, with an additional forward query per trial.
\Cref{cor:query-lower-bounds} therefore gives the asserted lower bounds on $q$ as the corresponding lower-bound scale grows.
\end{proof}

\section{Decoding and Uhlmann transformation}
\label{app:decoding-proofs}

This appendix gives the full proofs of the decoding and Uhlmann results in \cref{sec:decodability-upper,sec:uhlmann-rank}.
The last subsection gives the fidelity-estimation comparison from \cref{sec:related-work}.

\subsection{The decoupling estimate}
\label{app:decoupling}

We prove \cref{eq:decoupling-trace-bound} by computing the expected purity of $\rho_{\reg{BH}}^U$.
The calculation uses only the two Haar columns
\[
  \ket v=U\ket{x_{0,0}},\qquad \ket w=U\ket{x_{0,1}}.
\]
Then $\ket{\Psi_0^{\mathrm{real},U}}=(\ket0_{\reg B}\ket v+\ket1_{\reg B}\ket w)/\sqrt2$.

\paragraph{One Haar column.}
The vector $v$ is uniform on the unit sphere in $\mathbb C^{2^n}$.
For $i,j,k,\ell\in[2^n]$, its fourth moments are
\[
  \mathbb E[v_i\overline{v_j}v_k\overline{v_\ell}]=\frac{\delta_{i,j}\delta_{k,\ell}+\delta_{i,\ell}\delta_{k,j}}{2^n(2^n+1)}.
\]
To verify the formula, invariance under separate coordinate phases forces the ket indices $\{i,k\}$ to equal the bra indices $\{j,\ell\}$ as multisets.
For distinct $i,j$, the coordinate $(v_i+v_j)/\sqrt2$ has the same distribution as $v_i$.
Expanding $\mathbb E|(v_i+v_j)/\sqrt2|^4$ and using phase invariance gives
\[
  \mathbb E|v_i|^4=\frac12\mathbb E|v_i|^4+\mathbb E[|v_i|^2|v_j|^2],
\]
so $\mathbb E|v_i|^4=2\mathbb E[|v_i|^2|v_j|^2]$.
Coordinate permutations make these expectations independent of the distinct indices $i,j$.
Expanding $1=\mathbb E(\sum_i|v_i|^2)^2$ now gives $\mathbb E[|v_i|^2|v_j|^2]=1/[2^n(2^n+1)]$ for $i\neq j$, proving the formula.

Applying it to the coefficients indexed by $(r,h)$ gives
\begin{align*}
  \mathbb E\Tr\left[(\Tr_{\reg R}\proj v)^2\right]&=\sum_{r,s,h,k}\mathbb E[v_{r,h}\overline{v_{r,k}}v_{s,k}\overline{v_{s,h}}]\\
  &=\frac{2^{2|\reg R|+|\reg H|}+2^{|\reg R|+2|\reg H|}}{2^n(2^n+1)}=\frac{2^{|\reg R|}+2^{|\reg H|}}{2^n+1}.
\end{align*}
The two moment terms require $h=k$ and $r=s$, respectively, giving the two counts in the numerator.

\paragraph{The second column.}
Conditioned on $v$, the vector $w$ is uniform on the unit sphere in its orthogonal complement.
Its expected projector is therefore
\[
  \mathbb E[\proj w\mid v]=\frac{\id-\proj v}{2^n-1}.
\]
Indeed, invariance under unitaries on this complement makes the expectation a scalar multiple of its projection, and trace one fixes the scalar.
Expanding the cross term and using this conditional expectation gives
\begin{align*}
  &\mathbb E\left[\Tr\left[(\Tr_{\reg R}\ketbra vw)(\Tr_{\reg R}\ketbra wv)\right]\middle|v\right]\\
  &\qquad=\frac1{2^n-1}\sum_{r,s,h,k}v_{r,h}\overline{v_{s,h}}\left(\delta_{r,s}-v_{s,k}\overline{v_{r,k}}\right)\\
  &\qquad=\frac{2^{|\reg H|}-\Tr[(\Tr_{\reg R}\proj v)^2]}{2^n-1}.
\end{align*}
In the first term, $r=s$ and the sum over $k$ gives $2^{|\reg H|}$; the remaining squared coefficients sum to one.
The second term is the purity computed above.
Averaging over $v$ yields
\[
  \mathbb E\Tr\left[(\Tr_{\reg R}\ketbra vw)(\Tr_{\reg R}\ketbra wv)\right]=\frac{2^{|\reg R|}(2^{2|\reg H|}-1)}{2^{2n}-1}.
\]

\paragraph{The marginal purity.}
In the basis $\ket0,\ket1$ of $\reg B$, the marginal is
\[
  \rho_{\reg{BH}}^U=\frac12\begin{pmatrix}\Tr_{\reg R}\proj v&\Tr_{\reg R}\ketbra vw\\\Tr_{\reg R}\ketbra wv&\Tr_{\reg R}\proj w\end{pmatrix}.
\]
The two columns have the same marginal distribution.
Squaring this block matrix therefore gives
\begin{align*}
  \mathbb E\Tr[(\rho_{\reg{BH}}^U)^2]&=\frac12\mathbb E\Tr[(\Tr_{\reg R}\proj v)^2]+\frac12\mathbb E\Tr[(\Tr_{\reg R}\ketbra vw)(\Tr_{\reg R}\ketbra wv)]\\
  &=\frac{2^{|\reg H|}(2^{2|\reg R|}-1)+2^{|\reg R|+1}(2^{2|\reg H|}-1)}{2(2^{2n}-1)}.
\end{align*}
Since $\tau_{\reg{BH}}=\id_{\reg{BH}}/2^{1+|\reg H|}$ and $\Tr\rho_{\reg{BH}}^U=1$, expanding the square gives
\begin{align*}
  \mathbb E\Tr[(\rho_{\reg{BH}}^U-\tau_{\reg{BH}})^2]&=\mathbb E\Tr[(\rho_{\reg{BH}}^U)^2]-2^{-(1+|\reg H|)}\\
  &=\frac{(2^{2|\reg H|}-1)(2^{n+1}-1)}{2^{1+|\reg H|}(2^{2n}-1)}.
\end{align*}

\paragraph{From purity to trace norm.}
Cauchy--Schwarz on the $2^{1+|\reg H|}$ eigenvalues of a Hermitian operator gives $\|X\|_1^2\leq2^{1+|\reg H|}\Tr(X^2)$.
Cauchy--Schwarz for the Haar expectation then gives
\begin{align*}
  \left(\mathbb E_U\|\rho_{\reg{BH}}^U-\tau_{\reg{BH}}\|_1\right)^2&\leq\mathbb E_U\|\rho_{\reg{BH}}^U-\tau_{\reg{BH}}\|_1^2\\
  &\leq2^{1+|\reg H|}\mathbb E_U\Tr[(\rho_{\reg{BH}}^U-\tau_{\reg{BH}})^2]\\
  &=\frac{(2^{2|\reg H|}-1)(2^{n+1}-1)}{2^{2n}-1}\leq2\cdot2^{|\reg H|-|\reg R|}.
\end{align*}
For the last inequality, use $2^{2|\reg H|}-1\leq2^{2|\reg H|}$ and $(2^{n+1}-1)/(2^{2n}-1)\leq2^{1-n}$.
The latter follows by multiplying through by the positive denominator and using $2^n\geq2$.
Taking square roots and weakening $\sqrt2$ to $2$ proves \cref{eq:decoupling-trace-bound}.

\subsection{The decoding upper bound}
\label{app:decoding-upper}

The cutoff below depends only on the dimensions and $\epsilon$, so the same decoding circuit works for every $U$.
\begin{proof}[Proof of \cref{cor:decoding-upper-bound}]
We use the preparation unitary $P_U$ and fixed target $\ket\Theta$ from \cref{sec:decodability-upper}.
We apply the construction in Utsumi et al.~\cite[Theorem~13 and Proposition~10]{UNWT25} with a uniform singular-value cutoff.
Their fidelity $F$ is squared fidelity, while $\mathrm F$ here denotes root fidelity.

\paragraph{The overlap operator.}
The construction aligns the left and right singular vectors of the overlap operator on the accessible register $\reg R$:
\[
  M_U=\Tr_{\reg{BH}}\ketbra{\Theta}{\Psi_0^{\mathrm{real},U}}.
\]
The Schmidt vectors of $\ket\Theta$ on $\reg R$ are orthonormal, and each has coefficient $2^{-(1+|\reg H|)/2}$.
Multiplying the partial-trace expansion by its adjoint therefore gives
\[
  M_U^\dagger M_U=\frac{\Tr_{\reg{BH}}\proj{\Psi_0^{\mathrm{real},U}}}{2^{1+|\reg H|}}.
\]
The two marginals of the pure source state have the same nonzero eigenvalues.
Also, $\tau_{\reg{BH}}=\id_{\reg{BH}}/2^{1+|\reg H|}$.
Thus the nonzero singular values $s_j$ of $M_U$ are those of $\sqrt{\tau_{\reg{BH}}}\sqrt{\rho_{\reg{BH}}^U}$, and
\[
  r=\operatorname{rank}\bigl(\sqrt{\tau_{\reg{BH}}}\sqrt{\rho_{\reg{BH}}^U}\bigr)\leq2^{1+|\reg H|},\qquad \sum_{j=1}^r s_j=\mathrm F(\rho_{\reg{BH}}^U,\tau_{\reg{BH}})\leq1.
\]

\paragraph{A uniform cutoff.}
Utsumi et al.'s Eqs.~(36)--(42) implement a block encoding of $M_U$ using preparation calls on auxiliary copies of the register spaces and a swap involving the actual $\reg R$.
The actual $\reg{BH}$ is untouched.
Their Proposition~10 gives a real odd sign polynomial $p$ satisfying
\[
  |p(s)-1|\leq\frac\epsilon4\quad\text{for }\frac{\epsilon}{8\cdot2^{1+|\reg H|}}\leq s\leq1,\qquad |p(s)|\leq1\quad\text{for }0\leq s\leq1.
\]
The degree is $O((2^{|\reg H|}/\epsilon)\log(1/\epsilon))$, and the polynomial and its gates depend only on the dimensions and $\epsilon$.
The construction applies $p$ to the singular-value diagonal of $M_U$.
Their Eqs.~(50)--(52) express the overlap, and Eqs.~(55)--(57) give its singular-value expansion.
Thus its absolute overlap with the target and zero auxiliary registers is
\[
  a_U=\left|\sum_{j=1}^r p(s_j)s_j\right|.
\]

\paragraph{The overlap error.}
Above the cutoff, the approximation error is at most $\epsilon/4$ and the singular values sum to at most one.
Below it, $|1-p(s_j)|\leq2$, and at most $r$ singular values contribute.
As in their Eqs.~(59)--(62), we bound the two ranges separately:
\begin{align*}
  \sum_{s_j\geq\epsilon/(8\cdot2^{1+|\reg H|})}|1-p(s_j)|s_j&\leq\frac\epsilon4\sum_j s_j\leq\frac\epsilon4,\\
  \sum_{s_j<\epsilon/(8\cdot2^{1+|\reg H|})}|1-p(s_j)|s_j&\leq\frac{2\epsilon r}{8\cdot2^{1+|\reg H|}}\leq\frac\epsilon4.
\end{align*}
The reverse triangle inequality now gives
\[
  \bigl|\mathrm F(\rho_{\reg{BH}}^U,\tau_{\reg{BH}})-a_U\bigr|\leq\sum_j|1-p(s_j)|s_j\leq\frac\epsilon2.
\]
Both quantities lie in $[0,1]$, so the difference of their squares is at most twice their difference:
\[
  \bigl|\mathrm F(\rho_{\reg{BH}}^U,\tau_{\reg{BH}})^2-a_U^2\bigr|\leq2\bigl|\mathrm F(\rho_{\reg{BH}}^U,\tau_{\reg{BH}})-a_U\bigr|\leq\epsilon.
\]
The squared overlap with zero auxiliary registers is one nonnegative summand of the target overlap after tracing out those registers.
Thus discarding them gives \cref{eq:uhlmann-fidelity-error} for every $U$.

\paragraph{Query cost and decoding fidelity.}
The singular-value transformation uses a number of block-encoding calls proportional to the degree of $p$.
Each call to $P_U$ or $P_U^\dagger=C^\dagger(\id_{\reg{B}}\otimes U^\dagger)$ costs one query to $U$ or $U^\dagger$; target-preparation calls use only fixed gates.
The decoder therefore uses $O((2^{|\reg H|}/\epsilon)\log(1/\epsilon))$ queries, which is $O(2^{|\reg H|})$ at fixed $\epsilon$.
By \cref{lem:purification-decoder}, its fidelity for each $U$ is at least $1-\|\rho_{\reg{BH}}^U-\tau_{\reg{BH}}\|_1-\epsilon$.
Averaging and applying \cref{eq:decoupling-trace-bound} give
\[
  \mathbb E_U f_{\mathcal D}(U)\geq1-2\cdot2^{(|\reg H|-|\reg R|)/2}-\epsilon,
\]
as required.
\end{proof}

\subsection{Rank and the transformation lower bound}
\label{app:uhlmann-rank-proof}

\begin{proof}[Proof of \cref{lem:uhlmann-rank}]
Since $\tau_{\reg{BH}}$ is invertible and square roots preserve rank, it suffices to show that $\rho_{\reg{BH}}^U$ has full rank.
Let $X_U$ be the coefficient matrix of $\ket{\Psi_0^{\mathrm{real},U}}$ across $\reg{BH}:\reg R$:
\[
  (X_U)_{(c,h),r}=\frac1{\sqrt2}\bra{r,h}U\ket{x_{0,c}},\qquad \rho_{\reg{BH}}^U=X_UX_U^\dagger.
\]
It suffices to show that $X_U$ has full row rank.

\paragraph{Generating the two columns.}
We generate the two Haar columns by orthogonalizing two independent standard complex Gaussian vectors $g_0,g_1$ on $\reg{RH}$.
Their joint distribution is invariant under every unitary on $\reg{RH}$, and orthogonalization preserves this invariance.
The unitary group acts transitively on orthonormal pairs, so the resulting invariant distribution is the Haar distribution on such pairs.
Before orthogonalizing, arrange their coefficients in matrices $(G_c)_{h,r}=(g_c)_{r,h}$.
The stacked matrix is
\[
  G=\begin{pmatrix}G_0\\G_1\end{pmatrix}.
\]
It has $2^{1+|\reg H|}$ rows and $2^{|\reg R|}$ columns, with independent Gaussian entries.
The hypothesis $|\reg R|\geq|\reg H|+1$ makes the number of columns at least the number of rows.

\paragraph{Full rank before and after orthogonalization.}
Conditioned on the preceding rows, each next row has a density on the ambient space of row vectors.
The preceding rows span a proper linear subspace, since there are fewer of them than columns.
A proper linear subspace has Lebesgue measure zero, so the rows are linearly independent with probability one.
For orthogonalization, let
\[
  s=\frac{\braket{g_0}{g_1}}{\|g_0\|_2^2},\qquad t_0=\|g_0\|_2,\qquad t_1=\|g_1-sg_0\|_2.
\]
Also, $t_0,t_1>0$ with probability one.
The two normalized columns are $g_0/t_0$ and $(g_1-sg_0)/t_1$.
Consequently,
\[
  X_U=\frac1{\sqrt2}\begin{pmatrix}t_0^{-1}\id&0\\-s\,t_1^{-1}\id&t_1^{-1}\id\end{pmatrix}G
\]
in distribution, where each identity block has size $2^{|\reg H|}$.
The block matrix is invertible, so it preserves row rank.
Thus $X_UX_U^\dagger$ has full rank $2^{1+|\reg H|}$ with probability one.
\end{proof}

\label{app:uhlmann-lower-proof}
\begin{proof}[Proof of \cref{thm:uhlmann-rank-lower-bound}]
We reduce the transformation algorithm to a decoder with the same query count.

\paragraph{Converting preparation queries.}
Taking adjoints, conjugates, and transposes of $P_U=(\id_{\reg{B}}\otimes U)C$ gives
\begin{align*}
  P_U&=(\id_{\reg{B}}\otimes U)C,& P_U^\dagger&=C^\dagger(\id_{\reg{B}}\otimes U^\dagger),\\
  P_U^*&=(\id_{\reg{B}}\otimes U^*)C^*,& P_U^{\mathsf T}&=C^{\mathsf T}(\id_{\reg{B}}\otimes U^{\mathsf T}).
\end{align*}
The factors involving $C$ are fixed.
For a coherent choice, apply $C,C^*$ on the forward and conjugate branches before the selected query, and $C^\dagger,C^{\mathsf T}$ on the inverse and transpose branches afterwards.
The identity branch applies neither factor.
Thus each source-preparation query costs one selected query to $U^?$, and queries to the target preparation circuit are fixed gates.

\paragraph{From transformation to decoding.}
Let $\mathcal D^U=\Tr_{\reg{H'R'}}\circ\mathcal T^U$.
The target $\ket\Theta$ lies in the Bell-test acceptance space, so
\[
  \proj{\Phi^+}_{\reg{BA}}\otimes\id_{\reg{HH'R'}}\geq\proj\Theta.
\]
Taking expectations in the transformed state proves, for every $U$,
\[
  f_{\mathcal D}(U)\geq\bra\Theta(\operatorname{id}_{\reg{BH}}\otimes\mathcal T^U)(\proj{\Psi_0^{\mathrm{real},U}})\ket\Theta.
\]
The theorem's averaged transformation condition therefore gives
\[
  \mathbb E_U f_{\mathcal D}(U)\geq\mathbb E_U\mathrm F(\rho_{\reg{BH}}^U,\tau_{\reg{BH}})^2-\epsilon.
\]
By \cref{eq:fidelity-decoupling,eq:decoupling-trace-bound}, the right-hand side is at least $1-2\cdot2^{(|\reg H|-|\reg R|)/2}-\epsilon$.
Substituting $|\reg R|=15|\reg H|$ and $\epsilon\leq1/8$ gives
\[
  \mathbb E_U f_{\mathcal D}(U)\geq1-2^{1-7|\reg H|}-\frac18\geq\frac34.
\]
The last inequality uses $|\reg H|\geq1$.

\paragraph{Applying the query lower bound.}
Since $n=16|\reg H|$, the lower-bound scale is
\[
  \min\{2^{|\reg H|},2^{n/16}\}=2^{|\reg H|}.
\]
\Cref{cor:query-lower-bounds} gives $q=\Omega(2^{|\reg H|})$, including selected queries.
By \cref{lem:uhlmann-rank}, $r=2^{1+|\reg H|}$ almost surely, giving $q=\Omega(r)$.
\end{proof}

\subsection{Comparison with fidelity-estimation lower bounds}
\label{app:fidelity-comparison}

Wang~\cite[Theorem~1.7]{Wan26} already states the nearly linear query lower bound for fidelity estimation.
We derive its Uhlmann-transformation consequence from the reduction in Utsumi et al.~\cite[Proof of Corollary~5]{UNWT25}, and record the hypotheses needed for the query-to-sample comparison.

\paragraph{The transformation hypothesis.}
Fix a $d$-dimensional marginal register, the known target marginal $\sigma=\id/d$, and a purifying register of dimension at least $2d$.
Consider one transformation algorithm whose circuit and initial auxiliary state are independent of the unknown marginal $\rho$ and its preparation unitary.
Suppose that for every $\rho$ and every valid source-preparation unitary, it uses at most $q$ forward and inverse preparation queries and produces a channel acting only on the purifying register, with squared target overlap at least $\mathrm F(\rho,\sigma)^2-\epsilon$.

\paragraph{Estimating fidelity from the transformation.}
We prepare a fresh source, apply the output channel, and measure the projector onto the known pure target.
For this estimation task, the final measurement can act on both the marginal register and its purifying register.
The approximation guarantee gives the lower bound below, and fidelity monotonicity under partial trace~\cite[Theorem~3.27]{Wat18} gives the upper bound:
\[
  \mathrm F(\rho,\sigma)^2-\epsilon\leq\Pr[\text{the target test accepts}]\leq\mathrm F(\rho,\sigma)^2.
\]
Repeating the complete test with fresh registers $O(\epsilon^{-2})$ times estimates its acceptance probability to error $\epsilon$ with probability at least $2/3$.
Clipping the estimate to $[0,1]$ cannot increase its error.
The estimate therefore differs from $\mathrm F(\rho,\sigma)^2$ by at most $2\epsilon$.

Taking its square root gives an estimate of root fidelity with error at most $\sqrt{2\epsilon}$, since $|\sqrt a-\sqrt b|\leq\sqrt{|a-b|}$ for $a,b\geq0$.
At fixed $0<\epsilon\leq2^{-23}$, this error is at most $2^{-11}$.
The proof of Wang's Theorem~5.18 sets its error threshold to $2^{-10}$, so this choice meets the hypothesis.
Constant repetition and taking the median increase the estimator's success probability to at least $0.9$.

Each test uses at most $q+1$ source-preparation queries; the target is known.
Thus a $q$-query transformation would give an $O(q+1)$-query fidelity estimator.

\paragraph{Comparing query and sample complexity.}
Tang--Wright--Zhandry~\cite[Theorem~1.5]{TWZ25} simulate this estimator using $O((q+1)^2)$ copies at trace-distance error $1/20$.
The estimator works for every valid source preparation, and the purifying dimension is at least $2d$, as their theorem requires.
The simulation preserves success above $2/3$, since $0.9-1/20>2/3$.
For root-fidelity error $\eta$ and success probability at least $2/3$, Wang~\cite[Theorem~5.18]{Wan26} gives the sample lower bound
\[
  \Omega\left(\frac{d^2}{\eta^2\log^2d\max\{1,\eta^2\log^2d\}}\right).
\]
At fixed sufficiently small $\eta$, this is $\Omega(d^2/\log^4d)$.
The hard states in that proof have a positive maximally mixed block of dimension at least $d/2$, so their rank is $\Theta(d)$.
This follows from Wang's Definition~3.3 and the direct-sum construction in the proof of Theorem~5.18.
Since $\sigma=\id/d$ is invertible, $\operatorname{rank}(\sqrt\sigma\sqrt\rho)=\operatorname{rank}(\rho)=\Theta(d)$.
Comparing the sample counts gives
\[
  q+1=\Omega(d/\log^2d)=\Omega(r/\log^2r),
\]
and hence $q=\Omega(r/\log^2r)$ as $r\to\infty$.
This argument uses forward and inverse preparation queries; it does not invoke a simulation for conjugate or transpose access.

\section{Security of the cryptographic constructions}
\label{app:efi-proofs}

We prove the EFI estimate, the bit-commitment and quantum-state-commitment bounds, and security for a fixed oracle sequence.

\subsection{The oracle-relative EFI bound}
\label{app:efi-proof}
\label{app:efi-bound}

\begin{proof}[Proof of \cref{cor:efi}]
The rank bound gives statistical separation, and a sign estimate gives the expected absolute gap.

\paragraph{Statistical distance.}
The real state has rank at most $2^\lambda$, since it is obtained by tracing $\reg H$ out of a pure state.
Its support projector $\Pi_U$ accepts it with probability one, whereas $\Tr[\Pi_U\sigma_{\reg{BR}}^\lambda]\leq2^\lambda/2^{1+2\lambda}=2^{-\lambda-1}$.
The variational formula for trace distance therefore gives
\[
  \frac12\|\rho_{\reg{BR}}^U-\sigma_{\reg{BR}}^\lambda\|_1\geq\Tr[\Pi_U(\rho_{\reg{BR}}^U-\sigma_{\reg{BR}}^\lambda)]\geq1-2^{-\lambda-1}.
\]
This projector witnesses statistical distance; the query-limited distinguisher is not given its implementation.

\paragraph{Estimating the sign.}
The query bound controls $|\mathbb E_U\Delta_{\mathcal A}(U)|$, but positive and negative gaps could cancel across oracles.
To bound $\mathbb E_U|\Delta_{\mathcal A}(U)|$, we estimate the sign using $N$ real and $N$ mixed trials while holding the challenge aside.
This auxiliary algorithm may use more than polynomially many queries: the state bound covers its query count and allows arbitrary computation.
The auxiliary algorithm receives $2N+1$ tensor-product copies of the oracle-independent auxiliary state at initialization and uses separate registers for each run.
Conditioned on $U$, the trials are independent of each other, the challenge bit, and the final run of $\mathcal A$.

Let $\widehat\Delta$ be the empirical real mean minus the empirical mixed mean.
The conditional mean is $\Delta_{\mathcal A}(U)$.
Each binary trial has variance at most $1/4$, so independence gives
\[
  \operatorname{Var}(\widehat\Delta\mid U)\leq\frac{N}{4N^2}+\frac{N}{4N^2}=\frac1{2N}.
\]
Let $s=1$ if $\widehat\Delta\geq0$ and $s=-1$ otherwise.
If the estimated sign is correct, $s\Delta_{\mathcal A}(U)=|\Delta_{\mathcal A}(U)|$.
If it is wrong, the left-hand side is $-|\Delta_{\mathcal A}(U)|$, and $|\widehat\Delta-\Delta_{\mathcal A}(U)|\geq|\Delta_{\mathcal A}(U)|$.
In both cases,
\[
  s\Delta_{\mathcal A}(U)\geq|\Delta_{\mathcal A}(U)|-2|\widehat\Delta-\Delta_{\mathcal A}(U)|.
\]
By Cauchy--Schwarz,
\[
  \mathbb E[|\widehat\Delta-\Delta_{\mathcal A}(U)|\mid U]\leq\sqrt{\operatorname{Var}(\widehat\Delta\mid U)}\leq\frac1{\sqrt{2N}}.
\]
Taking the conditional expectation in the sign inequality gives
\[
  \mathbb E[s\Delta_{\mathcal A}(U)\mid U]\geq|\Delta_{\mathcal A}(U)|-\sqrt{2/N}.
\]

\paragraph{Applying the query bound.}
The auxiliary algorithm then runs $\mathcal A$ on the actual challenge and reverses its output when $s=-1$.
The calibration is independent of the challenge bit conditioned on $U$, so its Haar-averaged acceptance gap is at least $\mathbb E_U|\Delta_{\mathcal A}(U)|-\sqrt{2/N}$.
Each real trial costs $q+1$ queries, including one for preparation; each mixed trial costs $q$.
The final challenge run costs $q$, giving
\[
  q_{\mathrm{tot}}=q+N(q+1)+Nq=q+N(2q+1).
\]
An acceptance gap is at most half the trace-norm difference, so \cref{thm:all-haar,cor:selected-queries} give
\begin{equation}
  \mathbb E_U|\Delta_{\mathcal A}(U)|\leq2(25+24\sqrt2)\frac{q_{\mathrm{tot}}}{2^\lambda}+\frac{18(4q_{\mathrm{tot}}+1)^2}{2^{3\lambda/8}}+\sqrt{\frac2N}. \label{eq:efi-calibration}
\end{equation}
For $N=\lceil2^{\lambda/8}\rceil$, the query count is $q_{\mathrm{tot}}=O((q+1)2^{\lambda/8})$.
The first two terms in \cref{eq:efi-calibration} are therefore $O((q+1)2^{-7\lambda/8})$ and $O((q+1)^2 2^{-\lambda/8})$.
The sign-estimation error is $O(2^{-\lambda/16})$.
Each term is at most $O((q+1)^2 2^{-\lambda/16})$, proving the stated bound.
\end{proof}

\paragraph{Polynomially many copies.}
\label{app:efi-copies}
Consider a receiver given $m$ independent copies of either EFI state, with the same sampled $U$, and making $q$ selected queries.
A hybrid replaces the copies one at a time.
Each adjacent pair is distinguished by an algorithm using one supplied challenge and preparing the other $m-1$ copies itself.
Preparing each real copy uses one forward query, and preparing a mixed copy uses none, so this algorithm uses at most $q+m-1$ queries.
For every fixed $U$, these fresh preparations are in tensor product with the supplied challenge.
The triangle inequality and \cref{cor:efi} therefore bound the expected absolute gap by
\[
  m\,O((q+m)^2 2^{-\lambda/16}).
\]
This is negligible for polynomial $m,q$.

\subsection{Quantum bit commitment}
\label{app:bit-commitment}

\begin{proof}[Proof of \cref{cor:bit-commitment}]
The purified generators define the protocol, and the support of the real marginal bounds the two opening probabilities.

\paragraph{The protocol and hiding.}
Let $\reg S$ be an opening register of $1+2\lambda$ qubits.
The two honest commitment states are
\begin{align*}
  \ket{\phi_0^U}_{\reg{BR,S}}&=\ket{\Psi_0^{\mathrm{real},U}}_{\reg{BRH}}\ket{0^{\lambda+1}},\\
  \ket{\phi_1}_{\reg{BR,S}}&=\frac1{\sqrt{2^{1+2\lambda}}}\sum_z\ket z_{\reg{BR}}\ket z_{\reg{S}}.
\end{align*}
In the first state, $\reg S$ consists of $\reg H$ and its padding.
Let $C_b^U$ prepare $\ket{\phi_b^U}$ from zero, where $C_1^U$ is independent of $U$.
The sender sends $\reg{BR}$ and later reveals $b,\reg S$.
The receiver applies $(C_b^U)^\dagger$ and checks for the all-zero state, equivalently measuring $\proj{\phi_b^U}$ on $\reg{BR},\reg S$.
This gives perfect correctness and the stated honest query counts.
It is the canonical construction of~\cite[Lemma~4.7]{BCQ23}.

The two commitment marginals are $\rho_{\reg{BR}}^U$ and $\sigma_{\reg{BR}}^\lambda$.
Hiding follows from \cref{cor:efi}, including oracle-independent auxiliary states.
Receiver preprocessing is also covered: for each fixed $U$, it acts on registers disjoint from the honest preparation and can be placed after that preparation without changing the state.
All its queries remain counted.

\paragraph{The two acceptance spaces.}
Fix any $U$, any common post-commit state on $\reg{BR},\reg Z$, and any two opening channels acting on the sender's workspace $\reg Z$.
Dilate both channels to unitaries $W_0,W_1$ on one sufficiently large workspace with zero ancillas.
The two acceptance projectors pulled back to this common space are
\[
  P_b=(\id_{\reg{BR}}\otimes W_b^\dagger)(\proj{\phi_b^U}_{\reg{BR,S}}\otimes\id_{Z'})(\id_{\reg{BR}}\otimes W_b),\qquad b\in\{0,1\},
\]
where $Z'$ contains the sender's remaining output registers.
Every unit vector in the range of $P_b$ has $\reg{BR}$ marginal $\rho_{\reg{BR}}^U$ for $b=0$ and $\sigma_{\reg{BR}}^\lambda$ for $b=1$.
Indeed, after $W_b$ the vector factors as $\ket{\phi_b^U}\ket\eta$, and $W_b$ acts only on the sender's registers.
Let $\Pi_U$ be the support projector of $\rho_{\reg{BR}}^U$.
For unit vectors $u\in\operatorname{ran}(P_0)$ and $v\in\operatorname{ran}(P_1)$, the marginal identities imply $(\Pi_U\otimes\id)u=u$ and
\[
  \|(\Pi_U\otimes\id)v\|_2^2=\Tr[\Pi_U\sigma_{\reg{BR}}^\lambda]=\frac{\operatorname{rank}(\rho_{\reg{BR}}^U)}{2^{1+2\lambda}}\leq2^{-\lambda-1}.
\]
Cauchy--Schwarz therefore gives
\[
  |\braket uv|=|\bra u(\Pi_U\otimes\id)\ket v|\leq\|u\|_2\|(\Pi_U\otimes\id)v\|_2\leq2^{-(\lambda+1)/2}.
\]
The norm $\|P_0P_1\|$ is the supremum of these inner products, so it obeys the same bound.

\paragraph{Sum binding.}
Let $p_b$ be the probability that opening $b$ is accepted from the common post-commit state.
Their sum is the expectation of $P_0+P_1$, so it is at most $\|P_0+P_1\|$.
For completeness, define $T:v\mapsto(P_0v,P_1v)$ with codomain $\operatorname{ran}(P_0)\oplus\operatorname{ran}(P_1)$.
Then $T^\dagger T=P_0+P_1$, while $TT^\dagger$ has identity diagonal blocks and off-diagonal blocks $P_0P_1$ and $P_1P_0$.
For $u\in\operatorname{ran}(P_0)$ and $v\in\operatorname{ran}(P_1)$, its quadratic form satisfies
\begin{align*}
  \langle(u,v),TT^\dagger(u,v)\rangle&=\|u\|^2+\|v\|^2+2\operatorname{Re}\langle u,P_0P_1v\rangle\\
  &\leq\|u\|^2+\|v\|^2+2\|P_0P_1\|\|u\|\|v\|\\
  &\leq(1+\|P_0P_1\|)(\|u\|^2+\|v\|^2).
\end{align*}

The first inequality is Cauchy--Schwarz; the second uses $2\|u\|\|v\|\leq\|u\|^2+\|v\|^2$.
Thus
\[
  \|P_0+P_1\|=\|T^\dagger T\|=\|TT^\dagger\|\leq1+\|P_0P_1\|.
\]
Combining this norm bound with the support bound proves
\[
  p_0+p_1\leq1+\|P_0P_1\|\leq1+2^{-(\lambda+1)/2}.
\]
This holds for all opening procedures and for every $U$, even when the sender's state and procedures depend on $U$.
\end{proof}

\subsection{Quantum state commitment}
\label{app:qubit-commitment}

\begin{proof}[Proof of \cref{cor:qubit-commitment}]
The channel to $\reg H$ will give swap binding, and the query bound will give hiding.

\paragraph{Correctness.}
Fix $|\reg H|=\lambda$, $|\reg R|=2\lambda$, and a Haar-random $U$ on $3\lambda$ qubits.
The encoding maps $\reg M,\reg W$ to $\reg R,\reg H$, where $\reg M$ is one qubit and $\reg W$ starts in $\ket{0^{3\lambda-1}}$.
The receiver's verification applies $U^\dagger$ and measures $\proj0_{\reg{W}}$.
Honest verification accepts with probability one and leaves the original message and its reference unchanged.
This proves correctness and the honest query counts.

\paragraph{A channel bound for the opening register.}
We bound the channel from the message to $\reg H$ uniformly over inputs entangled with a reference.
Define the channel to $\reg H$ and the replacement channel by
\begin{align*}
  \mathcal E^U(X)&=\Tr_{\reg{R}}[U(X_{\reg M}\otimes\proj0_{\reg{W}})U^\dagger],\\
  \mathcal T(X)&=\Tr(X)\frac{\id_{\reg{H}}}{2^\lambda}.
\end{align*}
With normalized Bell input, their Choi-state difference is
\[
  \epsilon_U=\left\|(\operatorname{id}_{\reg{B}}\otimes\mathcal E^U)(\proj{\Phi^+}_{\reg{BM}})-\frac{\id_{\reg{B}}}{2}\otimes\frac{\id_{\reg{H}}}{2^\lambda}\right\|_1.
\]
The first state is exactly $\rho_{\reg{BH}}^U$, so \cref{eq:decoupling-trace-bound} gives
\begin{equation}
  \mathbb E_U\epsilon_U\leq2\cdot2^{-\lambda/2}. \label{eq:commitment-decoupling}
\end{equation}
Since $\reg M$ is one qubit, we claim that for every state $\omega_{\reg{MZ}}$ and every reference register $\reg Z$,
\begin{equation}
  \|((\mathcal E^U-\mathcal T)\otimes\operatorname{id}_{\reg Z})(\omega_{\reg{MZ}})\|_1\leq2\epsilon_U. \label{eq:commitment-diamond}
\end{equation}
For any pure input $\ket\omega_{\reg{MZ}}$, there is a linear map $A:\reg B\to\reg Z$ such that $\ket\omega=(\id_{\reg{M}}\otimes A)\ket{\Phi^+}_{\reg{MB}}$.
Normalization gives $\Tr(A^\dagger A)=2$ and hence $\|A\|^2\leq2$.
The output difference is the Choi-state difference multiplied on its reference register by $A$ and $A^\dagger$.
The inequality $\|AXB\|_1\leq\|A\|\|X\|_1\|B\|$ therefore gives
\[
  \|((\mathcal E^U-\mathcal T)\otimes\operatorname{id}_{\reg{Z}})(\proj\omega)\|_1\leq\|A\|^2\epsilon_U\leq2\epsilon_U.
\]
Convexity extends this to mixed inputs, with any reference dimension.
This proves \cref{eq:commitment-diamond}, which is the uniform input bound needed for swap binding.

\paragraph{Statistical swap binding.}
In the swap-binding game of~\cite[Definition~4.2]{GJMZ23}, a malicious sender supplies arbitrary $\reg{RHZ}$.
The challenger verifies the commitment.
Rejection has the same probability in both worlds and gives a random guess.
If acceptance has positive probability, the conditional decoded state is $\omega_{\reg{MZ}}\otimes\proj0_{\reg{W}}$, with arbitrary $\omega$ that may depend on $U$.
The challenger either keeps $\reg M$ or replaces it by zero, re-encodes using $U$, and returns only $\reg H$.
The two states available to the sender are respectively
\begin{gather*}
  (\mathcal E^U\otimes\operatorname{id}_{\reg{Z}})(\omega_{\reg{MZ}}),\\
  \mathcal E^U(\proj0_{\reg{M}})\otimes\omega_{\reg Z}.
\end{gather*}
Each is within $2\epsilon_U$ in trace norm of $\id_{\reg{H}}/2^\lambda\otimes\omega_{\reg Z}$, by \cref{eq:commitment-diamond}.
The triangle inequality therefore gives
\[
  \|(\mathcal E^U\otimes\operatorname{id}_{\reg{Z}})(\omega_{\reg{MZ}})-\mathcal E^U(\proj0_{\reg{M}})\otimes\omega_{\reg Z}\|_1\leq4\epsilon_U.
\]
For a uniform challenge bit, distinguishing advantage is at most one quarter of this difference.
Multiplying by the acceptance probability can only decrease the advantage.
Every subsequent sender operation, including further oracle queries, contracts this trace distance for each fixed $U$.
Thus, writing $s_{\mathcal A}(U)$ for a sender's probability of guessing the swap bit correctly, we have the pointwise bound
\[
  \sup_{\mathcal A}\bigl(s_{\mathcal A}(U)-1/2\bigr)\leq\epsilon_U.
\]
It applies to every sender and every accepted state, including those chosen with knowledge of $U$.
Averaging this pointwise bound over the Haar oracle and using \cref{eq:commitment-decoupling} give
\begin{equation}
  \mathbb E_U\sup_{\mathcal A}\bigl(s_{\mathcal A}(U)-1/2\bigr)\leq\mathbb E_U\epsilon_U\leq2\cdot2^{-\lambda/2}. \label{eq:commitment-swap-binding}
\end{equation}
The supremum is inside the expectation and includes unbounded senders with oracle-dependent states and strategies.

\paragraph{Hiding for an EPR challenge.}
We first bound the effect of removing the correlation between $\reg B$ and $\reg R$, and then reduce chosen-message hiding to this bound.
Fix a receiver with an oracle-independent circuit and initial auxiliary state, making at most polynomially many selected queries $q$ in total.
Choose a universal constant $C$ for which the bound in \cref{cor:efi} is at most
\[
  h=C(q+1)^2 2^{-\lambda/16}
\]
for all sufficiently large $\lambda$.
We write $\rho_{\reg{R}}^U=\Tr_{\reg{B}}\rho_{\reg{BR}}^U$.
For any $q$-query algorithm $\mathcal A$, let $\mathcal A'$ first replace $\reg B$ by $\id_{\reg{B}}/2$ and then run $\mathcal A$.
This replacement fixes $\sigma_{\reg{BR}}^\lambda$ and sends $\rho_{\reg{BR}}^U$ to $\id_{\reg{B}}/2\otimes\rho_{\reg{R}}^U$.
For each fixed $U$, the acceptance gap below is $\Delta_{\mathcal A}(U)-\Delta_{\mathcal A'}(U)$.
By \cref{cor:efi}, each of these gaps has expected absolute value at most $h$.
The triangle inequality therefore gives
\begin{equation}
  \mathbb E_U\left|\Pr[\mathcal A^U(\rho_{\reg{BR}}^U)=1]-\Pr[\mathcal A^U(\id_{\reg{B}}/2\otimes\rho_{\reg{R}}^U)=1]\right|\leq2h. \label{eq:commitment-epr-hiding}
\end{equation}
No oracle query is needed for the replacement.

\paragraph{From an EPR challenge to a chosen message.}
In the hiding game of~\cite[Definition~4.3]{GJMZ23}, the receiver prepares a message on $\reg M$ and keeps a workspace $\reg Z$.
The challenger encodes either this message or zero, returns $\reg R$, and keeps $\reg H$.
Message preparation may itself use the oracle; all of these queries count toward $q$.
To reduce this game to \cref{eq:commitment-epr-hiding}, supply the reduction with an EPR challenge on $\reg{BR}$ and leave it isolated during message preparation.
For each fixed $U$, the resulting state $\omega_{\reg{MZ}}^U$ is in tensor product with this challenge, because the preparation acts on disjoint registers.
This conditional product statement is sufficient even though both states depend on $U$.

Define the radiation channel by
\[
  \mathcal N^U(X)=\Tr_{\reg{H}}[U(X_{\reg M}\otimes\proj0_{\reg{W}})U^\dagger].
\]
Measure $\reg{MB}$ in the Bell basis.
To compute this outcome, express the real challenge using its input qubit $\reg{M'}$:
\[
  \rho_{\reg{BR}}^U=(\operatorname{id}_{\reg B}\otimes\mathcal N^U)(\proj{\Phi^+}_{\reg{BM'}}).
\]
Before applying $\mathcal N^U$, the $\Phi^+$ contraction for a pure message is
\[
  (\bra{\Phi^+}_{\reg{MB}}\otimes\id)(\ket\psi_{\reg{MZ}}\otimes\ket{\Phi^+}_{\reg{BM'}})=\frac12\ket\psi_{\reg{M'Z}}.
\]
It follows by expanding $\ket\psi=\sum_{c=0}^1\ket c_{\reg{M}}\ket{\psi_c}_{\reg Z}$ and using $\braket{c}{c'}=\delta_{c,c'}$.
The corresponding identity for density operators extends to mixed inputs by linearity.
Encoding acts on $\reg{M'W}$, whereas the Bell projection acts on $\reg{MB}$, so these operations commute.
Applying $\mathcal N^U$ to $\reg{M'}$ therefore gives the unnormalized real state after this outcome:
\[
  \frac14(\mathcal N^U\otimes\operatorname{id}_{\reg{Z}})(\omega_{\reg{MZ}}^U).
\]
For the decorrelated challenge $\id_{\reg{B}}/2\otimes\rho_{\reg{R}}^U$, the same contraction instead gives
\[
  \frac14\rho_{\reg{R}}^U\otimes\omega_{\reg Z}^U.
\]
Both states have trace $1/4$ for every fixed $U$ and every message.
Thus the successful outcome does not bias the oracle distribution.

\paragraph{Completing the hiding reduction.}
On this outcome, the reduction runs the receiver's remaining computation on $\reg{RZ}$.
On the other outcomes, it outputs a fixed bit.
Its acceptance gap is exactly one quarter of the gap between encoding the chosen message and returning $\rho_{\reg{R}}^U$ independently of that message.
The reduction uses at most $q$ queries, including message preparation.
\Cref{eq:commitment-epr-hiding} therefore bounds the expected absolute gap in this comparison by $8h$.

For encoding zero, the reduction first discards the prepared $\reg M$ and substitutes $\proj0_{\reg{M}}$, preserving $\omega_{\reg Z}^U$.
The same calculation bounds its gap against $\rho_{\reg{R}}^U\otimes\omega_{\reg Z}^U$ by $8h$.
The triangle inequality gives
\begin{equation}
  \mathbb E_U|p_{\mathrm{message}}(U)-p_{\mathrm{zero}}(U)|\leq16h, \label{eq:commitment-message-hiding}
\end{equation}
where the two probabilities are the receiver's acceptance probabilities in the chosen-message and zero-message games.
This proves hiding for oracle-dependent messages generated through counted queries, including arbitrary entanglement with $\reg Z$.
The reduction's circuit and initial auxiliary state are independent of the sampled $U$; intermediate states may depend on $U$ through the counted queries.
\end{proof}

\subsection{Security for a fixed oracle sequence}
\label{app:efi-security}

\begin{corollary}
\label{cor:efi-fixed-oracle}
Independently sample $U_\lambda$ from Haar measure for every $\lambda\geq1$.
With probability one over this sequence, every uniform polynomial-query oracle algorithm whose circuit and initial-state preparation have finite descriptions satisfies
\[
  |\Delta_{\mathcal A_\lambda}(U_\lambda)|\leq2^{-\lambda/32}
\]
for all sufficiently large $\lambda$.
At security parameter $\lambda$, its oracle access is to $U_\lambda$ alone.
\end{corollary}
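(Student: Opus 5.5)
The plan is a Borel--Cantelli argument over the countably many finitely described algorithms, using \cref{cor:efi} together with Markov's inequality. Finite descriptions can be enumerated, so I index the uniform polynomial-query algorithms as $\mathcal A^{(1)},\mathcal A^{(2)},\ldots$. Each $\mathcal A^{(j)}$ is a uniform family $\{\mathcal A^{(j)}_\lambda\}$ with a query bound $q_j(\lambda)\leq p_j(\lambda)$ for some polynomial $p_j$. Uniformity makes the circuit and the initial-state preparation at parameter $\lambda$ fixed objects, determined by the description and $\lambda$ alone. They are therefore independent of $U_\lambda$, so the hypotheses of \cref{cor:efi} hold at every $\lambda$. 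Oracle access is only to $U_\lambda$, so the independence of the sequence $(U_\lambda)$ plays no role beyond making each marginal Haar.

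Fix $j$. By \cref{cor:efi} there are a universal constant $C$ and some $\lambda_0$ such that, for all $\lambda\geq\lambda_0$,
\[
  \mathbb E_{U_\lambda}|\Delta_{\mathcal A^{(j)}_\lambda}(U_\lambda)|\leq C(p_j(\lambda)+1)^2 2^{-\lambda/16}.
\]
Markov's inequality then bounds the bad event $E_{j,\lambda}$, namely $|\Delta_{\mathcal A^{(j)}_\lambda}(U_\lambda)|>2^{-\lambda/32}$:
\[
  \Pr[E_{j,\lambda}]\leq C(p_j(\lambda)+1)^2 2^{-\lambda/32}.
\]
This is summable in $\lambda$, since a polynomial times $2^{-\lambda/32}$ is summable. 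By the first Borel--Cantelli lemma, which needs no independence, with probability one only finitely many $E_{j,\lambda}$ occur. In other words, almost surely the bound holds for all sufficiently large $\lambda$, with the threshold depending on $j$.

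Finally, I intersect the countably many probability-one events over $j$, which is still a probability-one event. On this event every enumerated algorithm satisfies the bound eventually, which proves the corollary. The main point needing care is that the constant in \cref{cor:efi} is uniform and that the $O(\cdot)$ there holds for every $\lambda$ beyond a threshold depending only on $p_j$. This follows from the explicit bound \cref{eq:efi-calibration} with $N=\lceil 2^{\lambda/8}\rceil$, which holds for every $\lambda$ with explicit constants, so no hidden dependence on $U$ enters. I also need measurability of $U\mapsto\Delta_{\mathcal A}(U)$. This holds because the acceptance probability is a polynomial in the entries of $U$ and $\bar U$, so it is continuous.
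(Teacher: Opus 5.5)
Your proposal is correct and follows the paper's proof essentially step for step: Markov's inequality applied to the expected-absolute-gap bound of \cref{cor:efi} gives a summable $O((q_\lambda+1)^2 2^{-\lambda/32})$ tail, the first Borel--Cantelli lemma handles a single algorithm, and countability of finitely described uniform algorithms lets you intersect the probability-one events. Your extra remarks are correct details that the paper leaves implicit: the constants are uniform in $\lambda$ via \cref{eq:efi-calibration}, and $U\mapsto\Delta_{\mathcal A}(U)$ is measurable because it is a polynomial in the entries of $U$ and $\bar U$.
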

\begin{proof}
We use Markov's inequality for each algorithm and then countability to obtain the simultaneous statement.

\paragraph{One fixed algorithm.}
Fix one such algorithm and a polynomial bound $q_\lambda$ on its query count.
By Markov's inequality and \cref{cor:efi},
\[
  \Pr_{U_\lambda}\bigl[|\Delta_{\mathcal A_\lambda}(U_\lambda)|>2^{-\lambda/32}\bigr]=O((q_\lambda+1)^2 2^{-\lambda/32}).
\]
This is summable in $\lambda$, since $q_\lambda$ is polynomial.
The first Borel--Cantelli lemma implies that only finitely many of these events occur, with probability one.

\paragraph{All uniform algorithms.}
There are countably many such algorithms and initial-state preparation procedures.
The intersection of their probability-one events still has probability one.
The threshold in $\lambda$ may depend on the algorithm.
This simultaneous statement is restricted to uniform algorithms; the averaged statement of \cref{cor:efi} applies to each oracle-independent circuit family separately.
\end{proof}

The same argument, using the hiding bounds of \cref{cor:bit-commitment,cor:qubit-commitment}, gives almost-sure computational hiding of both protocols against every such uniform receiver.

\paragraph{Statistical binding.}
For statistical swap binding, \cref{eq:commitment-decoupling} and Markov's inequality give
\[
  \Pr_U[\epsilon_U>2^{-\lambda/4}]\leq2\cdot2^{-\lambda/4}.
\]
These probabilities are summable.
Almost every oracle sequence therefore has swap-binding advantage at most $2^{-\lambda/4}$ for all sufficiently large $\lambda$, uniformly over all sender strategies by the pointwise bound preceding \cref{eq:commitment-swap-binding}.
Bit-commitment binding already holds for every oracle by \cref{cor:bit-commitment}.

\end{document}